\DeclareMathOperator{\tr}{tr}
\DeclareMathOperator{\cov}{cov}
\DeclareMathOperator{\diag}{diag}
\DeclareMathOperator{\im}{Im}
\DeclareMathOperator{\Ker}{Ker}
\DeclareMathOperator{\ri}{ri}
\begin{document}

\sloppy

\title{Rate Region of the Vector Gaussian CEO Problem with the Trace Distortion Constraint}

\author{Yinfei~XU,
  ~\IEEEmembership{Student Member,~IEEE,}
  ~and~\ Qiao~WANG,
  ~\IEEEmembership{Member,~IEEE}

\thanks{The authors are with School of Information Science and Engineering, Southeast University, Nanjing 210096, China. Email: \{yinfeixu, qiaowang\}@seu.edu.cn}
}
\date{today}

\maketitle

\begin{abstract}
We establish a new extremal inequality, which is further leveraged to give a complete characterization of the rate region of the vector Gaussian CEO problem with the trace distortion constraint. The proof of this extremal inequality hinges on a careful analysis of the Karush-Kuhn-Tucker necessary conditions for the non-convex optimization problem associated with the Berger-Tung scheme, which enables us to integrate the perturbation argument by Wang and Chen with the distortion projection method by Rahman and Wagner.
\end{abstract}

\begin{IEEEkeywords}
CEO problem, distributed source coding, extremal inequality, indirect source, lossy source coding, mean square error, rate region, vector Gaussian source.
\end{IEEEkeywords}

\IEEEpeerreviewmaketitle
\section{Introduction}
\IEEEPARstart{T}HE CEO problem, which is a special case of multi-terminal source coding, was first investigated by Berger, Zhang and
Viswanathan \cite{CEO}. Oohama \cite{Oohama} determined the asymptotic sum-rate-distortion function of the scalar Gaussian CEO problem via an ingenious application of the entropy power inequality. A complete characterization of the rate region of the scalar Gaussian CEO was obtained in \cite{regionCEO2} and  \cite{regionCEO}. However, extending this result to the vector case is not straightforward due to the fact that the entropy power inequality is not necessarily tight in this setting. Tavilder and  Viswanath \cite{tav} derived a lower bound on the sum rate of the vector Gaussian CEO problem by partially replacing the entropy power inequality with the worst additive noise lemma.  An explicit lower bound on the weighted sum rate of the two-terminal vector Gaussian CEO problem can be found in \cite{v1}. Of particular relevance here is the work by Wang and Chen \cite{v2,v3}, where they derived an outer bound on the rate region of the vector Gaussian CEO problem by establishing a certain extremal inequality; essentially the same result was obtained independently by Ekrem and Ulukus via exploiting the
 relation between Fisher information matrix and MMSE (minimum mean square error) \cite{v4}. The extremal inequality in \cite{v2,v3} is a variant of the Liu-Viswanath inequality \cite{exinq}, which is in turn inspired by the seminal work of Weingarten, Steinberg  and Shamai \cite{MIMO} on the characterization of the capacity region of the MIMO
Gaussian broadcast channel.

However, the outer bound induced by the Wang-Chen extremal inequality is in general not tight. Our main result is a strengthened extremal inequality for the special case where the covariance distortion constraint is replaced with the trace distortion constraint. It turns out that this new extremal inequality yields a complete characterization of the rate region of the vector Gaussian CEO problem for this special case. The perturbation argument, which is widely used for establishing extremal inequalities, appears to be insufficient for our purpose. For this reason, we develop a spectral decomposition method, which can be effectively incorporated into the perturbation argument to obtain the desired inequality. It is worth mentioning that our spectral decomposition method is partly motivated by the \emph{distortion projection} technique developed by Rahman and Wagner \cite{scalar,vector}  for  the vector Gaussian one-help-one problem (see also \cite{my} for a direct proof based on the perturbation method). 


\newcounter{MYtempeqncnt}
\par
The rest of this paper is organized as follows. In Section \uppercase\expandafter{\romannumeral 2},
we present the formulation of the vector Gaussian CEO problem under the trace distortion constraint and the corresponding Berger-Tung upper bound on the weighted sum rate.
 In Section \uppercase\expandafter{\romannumeral 3}, we revisit some mathematical preliminaries which will be
used frequently in our proof. In Section \uppercase\expandafter{\romannumeral 4}, we
   prove certain properties of the spectral decomposition of the mean squared error matrix of the Berger-Tung scheme based on a carefully analysis of the KKT conditions of an associated non-convex optimization problem. In Section \uppercase\expandafter{\romannumeral 5},  we establish a new extremal inequality by considering projections into subspaces specified by the spectral decomposition result in the previous section, which is further leveraged to characterize the rate region of the vector Gaussian CEO problem with the trace distortion constraint.
 Finally,
  we conclude this paper in Section\uppercase\expandafter{\romannumeral 6}.\par

\section{Problem Statement and the Main Result}

\newcounter{tmp}

The system model of the vector Gaussian CEO problem is depicted in Figure \ref{fig}. Let $\{ \mathbf{X}(t)\}_{t=1}^{\infty}$ be an $m \times 1$-dimensional i.i.d. vector-valued sequence, where each $ \mathbf{X}(t), t=1, 2, \ldots$ is
a Gaussian random vector with mean zero and covariance $\mathbf{K} \succ 0$. For $i=1,2,\ldots, L$, let
\begin{equation}
\mathbf{Y}_{i}(t)=\mathbf{X}(t)+\mathbf{N}_{i}(t), \qquad i=1,2,\ldots,L \nonumber
\end{equation}
where $\mathbf{N}_{i}(t), t=1,2, \ldots$ are i.i.d. Gaussian random $m \times 1$-dimensional vectors independent of $\{\mathbf{X}(t)\}_{t=1}^{\infty}$
with mean zero and covariance $\mathbf{\Sigma}_{i} \succ 0 $. The noise processes $\{\mathbf{N}_{i}(t)\}_{t=1}^{\infty}$, $i=1,2,\ldots,,L$, are mutually independent. 
For $i=1,2,\cdots, L$, encoder $i$ computes $C_{i} = \phi_{i}^{n}(\mathbf{Y}_{i}^{n})$ based on its noisy observation $\mathbf{Y}_{i}^{n}= \{\mathbf{Y}_{i}(1), \cdots, \mathbf{Y}_{i}(n) \}$ using encoding function
 \begin{equation}
\phi_{i}^{n}: \mathcal{R}^{m\times n} \mapsto \mathcal{M}_{i}^{n} = \{1, \cdots,  2^{n R_{i}} \} \nonumber
\end{equation}
and sends $C_i$ to the decoder. Upon receiving $C_1,C_2,\ldots,C_L$, the decoder computes  $\hat{\mathbf{X}}^{n} =\{\hat{\mathbf{X}}(1), \cdots, \hat{\mathbf{X}}(n)\}= \varphi^{n}(C_{1}, \cdots, C_{L})$, which is an estimate of the remove source $\mathbf{X}^{n}=\{\mathbf{X}(1), \cdots, \mathbf{X}(n)\}$, using decoding function
\begin{equation}
\varphi^{n}: \mathcal{M}_{1}^{n} \times \ldots \times \mathcal{M}_{L}^{n} \mapsto \mathcal{R}^{m\times n}. \nonumber
\end{equation}


 Throughout the paper, we adopt the trace distortion constraint. Specifically, a rate tuple $(R_{1}, \ldots, R_{L}, d)$ is said
 to be achievable subject to the trace distortion constraint $d$ if there exist encoding functions $\phi_{1}^{n}, \ldots, \phi_{L}^{n}$ and decoding function $\varphi^{n}$ such that

\begin{figure}
\centering
\includegraphics[width=0.5\textwidth]{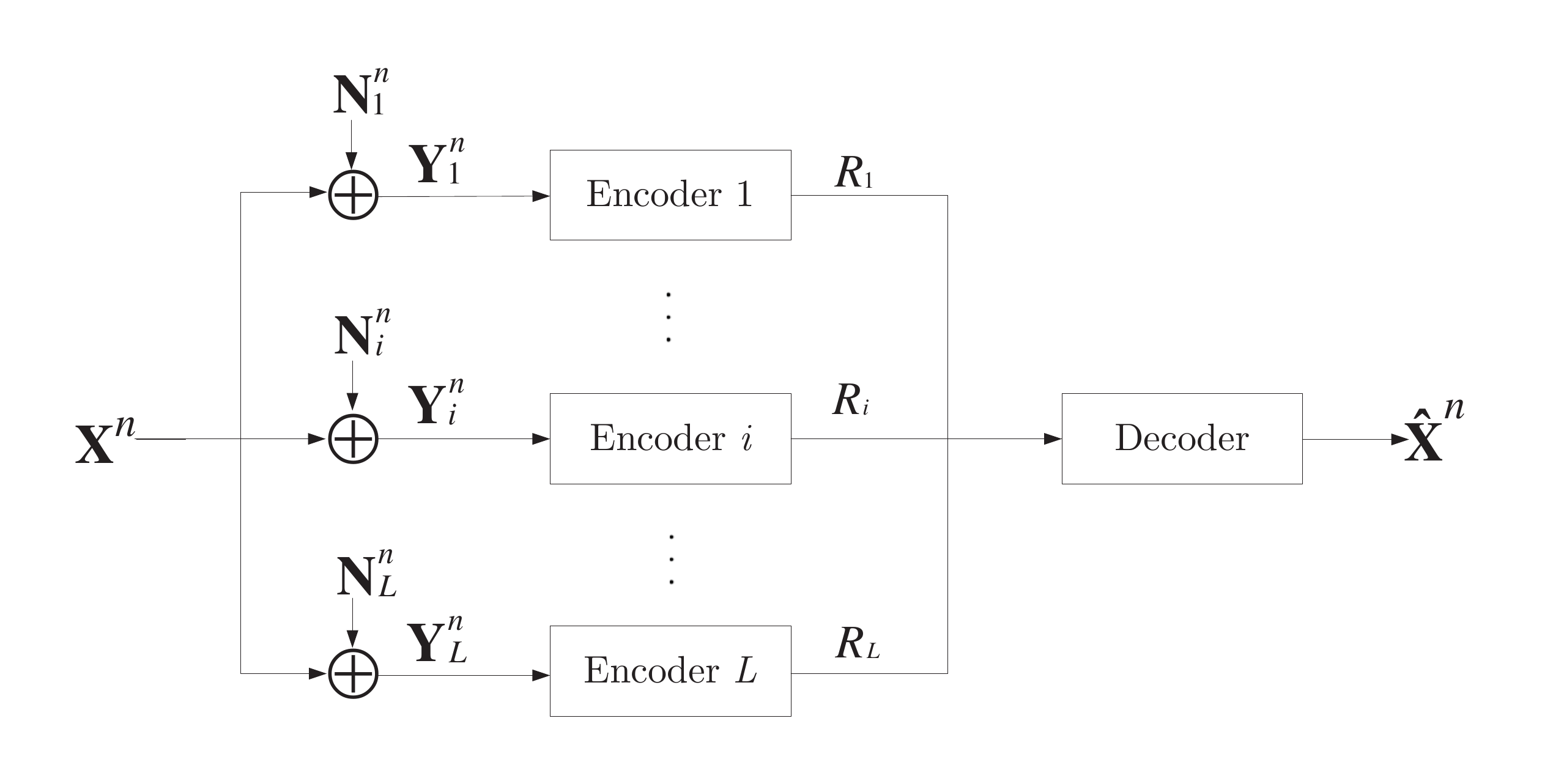}
\caption{The vector Gaussian CEO problem with trace constraint $\tr\{\cov(\hat{\mathbf{X}}-\mathbf{X})\} \leq d$.}
\label{fig}
\centering
\end{figure}

$$
\frac{1}{n} \sum_{t=1}^{n}  \mathbb{E} \left[ \tr(\cov(\mathbf{X}(t)-\hat{\mathbf{X}}(t)))\right] \leq d,
$$
 where $\mathbf{x}_{j}(t)$ and $\hat{\mathbf{x}}_{j}(t)$ represents the $j$-\emph{th} component of random vectors
 $\mathbf{X}(t) \triangleq \left( \mathbf{x}_{1}(t), \ldots, \mathbf{x}_{m}(t)\right)^{T}$ and
  $\hat{\mathbf{X}}(t) \triangleq \left( \hat{\mathbf{x}}_{1}(t), \ldots, \hat{\mathbf{x}}_{m}(t)\right)^{T}$ respectively. The rate region
   $\mathcal{R}(d)$ is the closure of all achievable rate tuples $(R_{1},\cdots, R_{L})$ subject to the trace distortion constraint $d$. \par
Since the rate region is convex, it can be characterized  by its supporting hyper-planes. As a consequence, it suffices to solve the following optimization problem
$$
R(d) = \inf_{(R_{1}, \ldots, R_{L)} \in \mathcal{R}(d)} \sum_{i=1}^{L} \mu_{i} R_{i}
$$
for $\mu_{i} \geq 0, i=1, \ldots, L$; moreover, there is no loss of generality in assuming $\mu_{1} \geq \cdots \geq \mu_{L} \geq 0$.  Note that if $\mu_L=0$, then one can reduce the $L$-terminal problem to the $(L-1)$-terminal problem by providing $\mathbf{Y}_{L}^{n}$ directly to the decoder and the first $L-1$ encoders. For this reason, we shall focus on the case  $\mu_{1} \geq \cdots \geq \mu_{L} > 0$ in the rest of this paper.

It is clear that $R(d) = \infty $ when $d \leq  \tr \{ (\mathbf{K}^{-1}+ \sum_{i=1}^{L} \mathbf{\Sigma}_{i})^{-1} \}$,  and  $R(d)=0$ when $d \geq \tr \{ \mathbf{K}\}$. Henceforth the only case
\begin{equation}
\tr \{ (\mathbf{K}^{-1}+ \sum_{i=1}^{L} \mathbf{\Sigma}_{i})^{-1} \} < d <  \tr \{ \mathbf{K}\}
\end{equation}
needs to be considered.

By evaluating the standard
Berger-Tung scheme, one can readily show that
\begin{align*}
R(d)\leq R^{BT}(d),
\end{align*}
where
\begin{align}
R^{BT}(d)= \min_{(\mathbf{B}_{1}, \cdots, \mathbf{B}_{L})} \sum_{i=1}^{L-1} \frac{\mu_{i}-\mu_{i+1}}{2} \log \frac{|\mathbf{K}^{-1}+\sum_{j=1}^{L}\mathbf{B}_{j}|}{|\mathbf{K}^{-1}+\sum_{j=i+1}^{L} \mathbf{B}_{j}|} \nonumber \\
+ \sum_{i=1}^{L} \frac{\mu_{i}}{2} \log{\frac{ | \mathbf{\Sigma}_{i}^{-1}| }{ |\mathbf{\Sigma}_{i}^{-1}-\mathbf{B}_{i}| }} + \frac{\mu_{L}}{2} \log \frac{|\mathbf{K}^{-1}+\sum_{j=1}^{L}\mathbf{B}_{j}|}{|\mathbf{K}^{-1}|}.\label{eq:BT}
\end{align}
The minimization in (\ref{eq:BT}) is over $(\mathbf{B}_{1}, \ldots, \mathbf{B}_{L})$ subject to constraints
\begin{align}
&\tr\left \{\left (\mathbf{K}^{-1}+ \sum_{i=1}^{L} \mathbf{B}_{i}\right )^{-1} \right \}\leq d,  \label{eq:traceconstraint}\\
&\mathbf{\Sigma}_{i}^{-1} \succeq \mathbf{B}_{i} \succeq 0, \qquad i=1, \ldots, L. \nonumber
\end{align}

The main result of this paper is the following theorem.

\newtheorem{theorem}{Theorem}
\begin{theorem}\label{thm1}
For any $\mu_{1} \geq \cdots \geq \mu_{L}> 0$ and $d\in(\tr \{ (\mathbf{K}^{-1}+ \sum_{i=1}^{L} \mathbf{\Sigma}_{i})^{-1} \} , \tr ( \mathbf{K}))$,
$$R(d) = R^{BT}(d).$$
\end{theorem}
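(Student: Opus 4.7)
The plan is to establish the converse bound $R(d)\ge R^{BT}(d)$, since the matching achievability has already been exhibited by evaluating the Berger-Tung scheme. By the standard multi-terminal source coding converse (Fano's inequality combined with the chain rule and the introduction of auxiliary random variables $U_i$ satisfying the Markov chains $U_i \leftrightarrow \mathbf{Y}_i \leftrightarrow (\mathbf{X},\{\mathbf{Y}_j,U_j\}_{j\ne i})$), the problem is reduced to a single-letter extremal inequality: for every admissible $(U_1,\ldots,U_L)$ whose induced MMSE matrix $\mathbf{D}=\cov(\mathbf{X}\,|\,U_1,\ldots,U_L)$ satisfies $\tr(\mathbf{D})\le d$, the weighted mutual information expression arising in the converse is at least $R^{BT}(d)$.

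First I would fix an optimizer $(\mathbf{B}_1^{\ast},\ldots,\mathbf{B}_L^{\ast})$ of the problem defining $R^{BT}(d)$ and write down its KKT conditions. Because the feasible set is non-convex, a careful argument is needed; nevertheless a scalar Lagrange multiplier $\lambda\ge 0$ dual to the trace constraint together with matrix multipliers dual to the $\mathbf{\Sigma}_i^{-1}\succeq\mathbf{B}_i\succeq 0$ constraints yield a family of matrix equations. The target of this analysis is the MSE matrix $\mathbf{D}^{\ast}=(\mathbf{K}^{-1}+\sum_i\mathbf{B}_i^{\ast})^{-1}$ and its eigenspaces. I would show that $\mathbf{D}^{\ast}$ admits a spectral decomposition $\mathbf{D}^{\ast}=\mathbf{V}\Lambda\mathbf{V}^{T}$ whose eigenspaces split compatibly with the constraint structure of every $\mathbf{B}_i^{\ast}$, so that on each subspace either a noise constraint $\mathbf{B}_i^{\ast}\preceq\mathbf{\Sigma}_i^{-1}$ is saturated (contributing no novel information) or the trace constraint is active in a uniform fashion tied to the single multiplier $\lambda$.

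Next I would exploit this spectral decomposition to project the vector Gaussian CEO instance onto each eigenspace, yielding lower-dimensional problems in which the scalar trace constraint localizes to a matrix constraint of the form $\mathbf{D}\preceq\alpha\mathbf{I}$, with $\alpha$ determined by the corresponding eigenvalue of $\mathbf{D}^{\ast}$. On each such projected instance, the Wang-Chen perturbation-based extremal inequality \cite{v2,v3} applies and, crucially, is tight, because the localized covariance constraint matches the MSE of the Berger-Tung Gaussian minimizer restricted to that subspace. Summing the resulting inequalities across eigenspaces and recombining the terms via the KKT identities reproduces $R^{BT}(d)$ exactly; the extra flexibility afforded by a trace (rather than covariance) constraint is exactly what permits the distortion to be allocated across eigen-directions in the way dictated by the KKT conditions, and this is what closes the gap left by the Wang-Chen bound in the covariance-constrained setting.

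The main obstacle is the spectral-decomposition step: one must prove that at every KKT point of the non-convex program defining $R^{BT}(d)$, the matrices $\mathbf{D}^{\ast}$ and $\mathbf{B}_1^{\ast},\ldots,\mathbf{B}_L^{\ast}$ decompose consistently into eigenspaces on which a Rahman-Wagner style distortion projection can be carried out. This structural result is the nontrivial bridge between the perturbation technique of Wang and Chen and the projection technique of Rahman and Wagner, and it is what makes the strengthened extremal inequality — and hence the characterization $R(d)=R^{BT}(d)$ — possible.
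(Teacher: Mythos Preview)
Your proposal correctly identifies the main ingredients --- KKT analysis of the non-convex program for $R^{BT}(d)$, a spectral decomposition of the optimal MSE matrix, and a fusion of the Wang--Chen perturbation method with a Rahman--Wagner style projection --- but the way you propose to combine them contains a genuine gap.

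You describe projecting the vector CEO instance onto each eigenspace of $\mathbf{D}^{\ast}$ to obtain independent lower-dimensional CEO problems with a localized matrix constraint $\mathbf{D}\preceq\alpha\mathbf{I}$, on each of which the Wang--Chen inequality would be tight, and then summing. This does not work as stated: the noise covariances $\mathbf{\Sigma}_i$ are not block-diagonal in the eigenbasis of $\mathbf{D}^{\ast}$, so the projected instances are not independent CEO problems, the trace constraint does not localize to a per-subspace covariance constraint, and even when it would, Wang--Chen is not known to be tight there. The paper does \emph{not} decompose the CEO problem into sub-problems. Instead it runs a single Dembo-type perturbation path $g(\gamma)$ for the full extremal inequality and uses the spectral decomposition only to split the \emph{derivative} $g'(\gamma)$ into pieces that can be bounded by different mechanisms. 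Two features of the actual argument are missing from your plan. First, the spectral decomposition is not a single splitting compatible with all $\mathbf{B}_i^{\ast}$ simultaneously; it is a \emph{nested, terminal-dependent} family $\mathbf{U}_1\subseteq\mathbf{U}_2\subseteq\cdots$, $\mathbf{V}_1\supseteq\mathbf{V}_2\supseteq\cdots$ obtained recursively from the signs of $\lambda d_n^2-\tfrac{\mu_i}{2}d_n$, and the crucial structural fact is $\mathbf{B}_i^{\ast}\mathbf{V}_i=0$ for each $i$. Second, the scalar multiplier $\lambda$ dual to the trace constraint produces a residual term $-2\lambda\,\tr\{\mathbf{C}_1^2(J(\mathbf{X}_{1,\gamma}\mid M_1,\ldots,M_L)-\mathbf{C}_1^{-1})\}$ in $g'(\gamma)$ that is handled not by any Wang--Chen-type argument but by a Jensen/Cauchy--Schwarz step on the eigenvalues $d_n$ together with $\sum_n c_n\le d=\sum_n d_n$; this is precisely where the trace (as opposed to covariance) structure is exploited, and it has no analogue in your ``sum of tight Wang--Chen sub-bounds'' picture.
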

\smallskip

The rest of this paper is devoted to the proof of the converse part of the theorem, i.e.,
$$
R(d) \geq R^{BT}(d).
$$

\section {Mathematical Preliminaries}
We first review some basic properties of conditional Fisher Information Matrix and MSE (mean square error).
\newtheorem{definition}{Definition}
\begin{definition}
Let $(\mathbf{X}, U)$ be a pair of jointly distributed random vectors with differentiable conditional probability density function $f(\mathbf{x}|u)$.
The vector-valued score function is defined as
$$\nabla \log f(\mathbf{x}| u)  = \left[\frac{\partial \log f(\mathbf{x}| u)}{\partial x_{1}}, \cdots, \frac{\partial \log f(\mathbf{x}| u)}{\partial x_{m}} \right]^{T}.$$
The conditional Fisher Information of $\mathbf{X}$ respect to $U$ is given by
$$
J(\mathbf{X}| U) = \mathbb{E}\left[  \left(\nabla \log f(\mathbf{x}| u) \right) \cdot \left(\nabla \log f(\mathbf{x} | u) \right)^{T}    \right ].
$$
\end{definition}
\newtheorem{lemma}{Lemma}

\smallskip
\begin{lemma} [Cram\'{e}r--Rao Lower Bound] \label{lea1}
Let $(\mathbf{X}, U)$ be a pair of jointly distributed random vectors. Assuming that the conditional covariance matrix $ \cov (\mathbf{X}| U) \succ \mathbf{0}$, then
\begin{equation}
J(\mathbf{X}| U)^{-1} \preceq \cov (\mathbf{X}| U). \label{lemma1}
\end{equation}
\end{lemma}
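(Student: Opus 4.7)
The plan is to prove this matrix Cram\'er--Rao bound by a standard covariance-inequality argument. First I would introduce the two zero-mean random vectors $\mathbf{V} := \mathbf{X} - \mathbb{E}[\mathbf{X}|U]$ and $\mathbf{W} := \nabla \log f(\mathbf{X}|U)$, and note that their second-moment matrices are precisely the quantities appearing in the lemma: $\mathbb{E}[\mathbf{V}\mathbf{V}^{T}] = \cov(\mathbf{X}|U)$ and $\mathbb{E}[\mathbf{W}\mathbf{W}^{T}] = J(\mathbf{X}|U)$. The zero-mean property of $\mathbf{W}$ follows by differentiating the identity $\int f(\mathbf{x}|u)\,d\mathbf{x} = 1$ under the integral sign and averaging over $U$.

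The crux is to evaluate the cross covariance $\mathbb{E}[\mathbf{V}\mathbf{W}^{T}]$. Working conditionally on $U=u$, I would expand its $(i,j)$ entry as $\int (x_{i} - \mathbb{E}[x_{i}|u])\,\partial_{x_{j}} f(\mathbf{x}|u)\,d\mathbf{x}$ and integrate by parts in $x_{j}$, so that (assuming the usual decay of $f$ at infinity, which holds trivially in the Gaussian setting of the paper) the result is $-\delta_{ij}$; averaging over $U$ yields $\mathbb{E}[\mathbf{V}\mathbf{W}^{T}] = -\mathbf{I}$. With these three blocks in hand, the joint covariance of $[\mathbf{V}^{T},\mathbf{W}^{T}]^{T}$ satisfies
\[
\begin{pmatrix} \cov(\mathbf{X}|U) & -\mathbf{I} \\ -\mathbf{I} & J(\mathbf{X}|U) \end{pmatrix} \succeq \mathbf{0},
\]
and the Schur complement of the $(2,2)$-block delivers $\cov(\mathbf{X}|U) - J(\mathbf{X}|U)^{-1} \succeq \mathbf{0}$, which is exactly (\ref{lemma1}). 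Equivalently, one may write $\mathbb{E}[(\mathbf{V}+\mathbf{A}\mathbf{W})(\mathbf{V}+\mathbf{A}\mathbf{W})^{T}] \succeq \mathbf{0}$ for arbitrary $\mathbf{A}$ and optimize by taking $\mathbf{A} = J(\mathbf{X}|U)^{-1}$.

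The only genuine obstacle is justifying the integration-by-parts step, which needs enough regularity on $f(\mathbf{x}|u)$ for the boundary terms to vanish; I would invoke the standard Cram\'er--Rao regularity conditions, which are comfortably satisfied by the Gaussian (and Gaussian-plus-auxiliary) densities that arise later in the paper. A minor technicality is that Schur-complementing requires $J(\mathbf{X}|U) \succ \mathbf{0}$; if $J(\mathbf{X}|U)$ is singular, I would perturb it to $J(\mathbf{X}|U)+\epsilon \mathbf{I}$, apply the argument, and pass to the limit $\epsilon \to 0$, which resolves the edge case without altering the substance.
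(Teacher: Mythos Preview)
Your argument is correct and is precisely the standard covariance/Schur-complement derivation of the matrix Cram\'er--Rao inequality. Note, however, that the paper does not supply its own proof of this lemma at all: it merely states the result and refers the reader to Appendix~II of \cite{exinq}. Your proposal is therefore consistent with (and essentially a self-contained reproduction of) the argument in the cited reference, so there is nothing to compare; the identification of the cross-moment $\mathbb{E}[\mathbf{V}\mathbf{W}^{T}]=-\mathbf{I}$ via integration by parts and the subsequent Schur complement are exactly the expected steps, and your remarks on regularity and on the possible singularity of $J(\mathbf{X}|U)$ are appropriate caveats.
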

One can refer to the proof in \cite[Appendix \uppercase\expandafter{\romannumeral 2}]{exinq}.
\smallskip

\begin{lemma}[Complementary Identity] \label{lea2}
Let $(\mathbf{X}, \mathbf{N},U)$ be a tuple of jointly distributed random vectors. If $\mathbf{N}$ follows a Gaussian distribution
 $\mathcal{N}(\mathbf{0}, \mathbf{\Sigma})$, and it is independent with $(\mathbf{X},U)$, then
\begin{equation}
 J(\mathbf{X} + \mathbf{N} | U) + \mathbf{\Sigma}^{-1} \cov(\mathbf{X} | \mathbf{X} + \mathbf{N}, U) \mathbf{\Sigma}^{-1} = \mathbf{\Sigma}^{-1} \label{lemma2}
 \end{equation}
\end{lemma}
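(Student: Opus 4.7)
The plan is to derive the identity by writing down the conditional density of $\mathbf{Y}\triangleq\mathbf{X}+\mathbf{N}$ given $U$ as the convolution of the conditional law of $\mathbf{X}$ given $U$ with the Gaussian kernel $\mathcal{N}(\mathbf{0},\mathbf{\Sigma})$, and then evaluating the score $\nabla_{\mathbf{y}}\log f(\mathbf{y}\mid u)$ via differentiation under the integral sign. Because the only $\mathbf{y}$-dependence inside the integrand is the Gaussian kernel, whose log-gradient is $-\mathbf{\Sigma}^{-1}(\mathbf{y}-\mathbf{x})$, the ratio $\nabla_{\mathbf{y}} f/f$ becomes a Bayes-posterior average and collapses to
\[
\nabla_{\mathbf{y}}\log f(\mathbf{y}\mid u)=-\mathbf{\Sigma}^{-1}\,\mathbb{E}[\mathbf{Y}-\mathbf{X}\mid \mathbf{Y}=\mathbf{y},U=u]=-\mathbf{\Sigma}^{-1}\,\mathbb{E}[\mathbf{N}\mid\mathbf{Y}=\mathbf{y},U=u].
\]
This is the vector analogue of the classical ``score equals scaled conditional mean of the noise'' identity used in I-MMSE-type arguments, and it is the technical core of the whole computation.

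Next I would substitute this expression into the definition of $J(\mathbf{X}+\mathbf{N}\mid U)$ to get
\[
J(\mathbf{X}+\mathbf{N}\mid U)=\mathbf{\Sigma}^{-1}\,\mathbb{E}\bigl[\mathbb{E}[\mathbf{N}\mid\mathbf{Y},U]\,\mathbb{E}[\mathbf{N}\mid\mathbf{Y},U]^{T}\bigr]\,\mathbf{\Sigma}^{-1}.
\]
I would then invoke the law of total covariance for $\mathbf{N}$ conditioned on $(\mathbf{Y},U)$: since $\mathbf{N}$ is independent of $U$ and has covariance $\mathbf{\Sigma}$,
\[
\mathbf{\Sigma}=\mathbb{E}\bigl[\mathbb{E}[\mathbf{N}\mid\mathbf{Y},U]\,\mathbb{E}[\mathbf{N}\mid\mathbf{Y},U]^{T}\bigr]+\mathbb{E}\bigl[\cov(\mathbf{N}\mid\mathbf{Y},U)\bigr].
\]
Finally, the observation that, conditional on $(\mathbf{Y},U)$, the vector $\mathbf{N}=\mathbf{Y}-\mathbf{X}$ differs from $-\mathbf{X}$ by the constant $\mathbf{Y}$, so that $\cov(\mathbf{N}\mid\mathbf{Y},U)=\cov(\mathbf{X}\mid\mathbf{Y},U)$, converts the previous display into
\[
\mathbb{E}\bigl[\mathbb{E}[\mathbf{N}\mid\mathbf{Y},U]\,\mathbb{E}[\mathbf{N}\mid\mathbf{Y},U]^{T}\bigr]=\mathbf{\Sigma}-\cov(\mathbf{X}\mid\mathbf{Y},U),
\]
and sandwiching this between two copies of $\mathbf{\Sigma}^{-1}$ yields exactly \eqref{lemma2}.

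The only delicate step I would need to check carefully is the interchange of $\nabla_{\mathbf{y}}$ with the convolution integral that produces the Bayes-posterior form of the score. This is standard once one appeals to dominated convergence together with the fact that the Gaussian kernel has rapidly decaying derivatives, so I expect no serious obstacle there; it is merely a regularity check. A second minor point worth stating explicitly is that $\cov(\mathbf{X}\mid\mathbf{Y},U)$ in the statement is understood as the unconditional MMSE matrix $\mathbb{E}[(\mathbf{X}-\mathbb{E}[\mathbf{X}\mid\mathbf{Y},U])(\mathbf{X}-\mathbb{E}[\mathbf{X}\mid\mathbf{Y},U])^{T}]$, which is the convention consistent with Lemma~\ref{lea1} and with the application of the law of total covariance above.
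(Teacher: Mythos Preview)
Your argument is correct: the score identity $\nabla_{\mathbf{y}}\log f(\mathbf{y}\mid u)=-\mathbf{\Sigma}^{-1}\mathbb{E}[\mathbf{N}\mid\mathbf{Y},U]$ obtained by differentiating the Gaussian convolution, followed by the law of total covariance for $\mathbf{N}$ and the substitution $\cov(\mathbf{N}\mid\mathbf{Y},U)=\cov(\mathbf{X}\mid\mathbf{Y},U)$, yields \eqref{lemma2} exactly as you describe. The paper does not prove the lemma at all; it merely cites \cite[Corollary~1]{Palomar}, whose derivation is essentially the score--MMSE computation you have written out, so your proposal supplies precisely the details the paper delegates to that reference.
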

The proof of this complementary identity can be found in \cite[Corollary 1]{Palomar}.
\smallskip

\begin{lemma} [de Bruijn's Identity]\label{lea3}
Let $(\mathbf{X}, U)$ be a pair of jointly distributed random vectors, and $\mathbf{N} \thicksim \mathcal{N}(\mathbf{0}, \mathbf{\Sigma})$ be a
 standard Gaussian random vector, which is independent of  $(\mathbf{X}, U)$, then
 \begin{equation}
 \frac{d}{d \gamma }h(\mathbf{X}+ \sqrt{\gamma} \mathbf{N} | U) = \frac{1}{2} \tr \left\{ J (\mathbf{X}+\sqrt{\gamma}\mathbf{N} | U) \mathbf{\Sigma}\right\}. \label{lemma3}
 \end{equation}
\end{lemma}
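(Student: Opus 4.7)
The plan is to reduce the conditional identity to the unconditional de~Bruijn identity by conditioning on $U=u$, then invoke the heat equation satisfied by the Gaussian-smoothed density, and finish with integration by parts. Since both sides of (\ref{lemma3}) are $P_U$-averages of the corresponding unconditional quantities evaluated under the regular conditional law $P_{\mathbf{X}\mid U=u}$, it suffices to establish the identity in the absence of a conditioning variable and afterwards integrate over $U$, provided a dominated-convergence bound is in place to interchange $d/d\gamma$ with $\int dP_U(u)$.

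For the unconditional case, let $f_\gamma(\mathbf{y})$ denote the density of $\mathbf{Y}_\gamma := \mathbf{X}+\sqrt{\gamma}\mathbf{N}$. Because $\sqrt{\gamma}\mathbf{N}\sim\mathcal{N}(\mathbf{0},\gamma\mathbf{\Sigma})$, $f_\gamma$ equals the convolution of $f_{\mathbf{X}}$ with the Gaussian kernel of covariance $\gamma\mathbf{\Sigma}$, so $f_\gamma$ satisfies the multivariate heat equation
$$\frac{\partial f_\gamma(\mathbf{y})}{\partial \gamma} \;=\; \frac{1}{2}\tr\bigl\{\mathbf{\Sigma}\,\nabla^{2} f_\gamma(\mathbf{y})\bigr\},$$
where $\nabla^{2}$ denotes the Hessian. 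Differentiating $h(\mathbf{Y}_\gamma) = -\int f_\gamma\log f_\gamma\, d\mathbf{y}$ under the integral sign and using that $\int(\partial f_\gamma/\partial\gamma)\, d\mathbf{y}=0$ (since $f_\gamma$ integrates to unity for every $\gamma$), I obtain
$$\frac{d}{d\gamma} h(\mathbf{Y}_\gamma) \;=\; -\int \frac{\partial f_\gamma}{\partial \gamma}\,\log f_\gamma\, d\mathbf{y} \;=\; -\frac{1}{2}\int \tr\bigl\{\mathbf{\Sigma}\,\nabla^{2} f_\gamma\bigr\}\log f_\gamma\, d\mathbf{y}.$$

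Next, I would integrate by parts once in each direction, shifting $\partial/\partial y_j$ from $\partial f_\gamma/\partial y_i$ onto $\log f_\gamma$, then use $\partial\log f_\gamma/\partial y_j = f_\gamma^{-1}\,\partial f_\gamma/\partial y_j$ to reassemble the integrand as a Fisher-information density:
$$\frac{d}{d\gamma} h(\mathbf{Y}_\gamma) \;=\; \frac{1}{2}\int \tr\bigl\{\mathbf{\Sigma}\,(\nabla\log f_\gamma)(\nabla\log f_\gamma)^{T}\bigr\}\, f_\gamma\, d\mathbf{y} \;=\; \frac{1}{2}\tr\bigl\{\mathbf{\Sigma}\, J(\mathbf{Y}_\gamma)\bigr\}.$$
Integrating the resulting pointwise-in-$u$ identity against $P_U$ then recovers (\ref{lemma3}).

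The main obstacle is the technical, not the algebraic, part: justifying the interchange of $d/d\gamma$ with $\int d\mathbf{y}$ and the vanishing of the boundary terms when integrating by parts. Because $\gamma>0$ makes $f_\gamma$ a convolution with a full-rank Gaussian, $f_\gamma$ is $C^\infty$ and its derivatives decay faster than any polynomial in $\|\mathbf{y}\|$, which supplies the dominating integrable majorants and kills the boundary contributions. The final integration over $P_U$ demands an analogous (simpler) dominated-convergence step, uniform in $u$. These regularity checks are unglamorous but are the only substantive hurdles; the rest is bookkeeping with the heat equation and one pass of integration by parts.
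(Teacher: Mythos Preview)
Your argument is correct and follows the classical route to de~Bruijn's identity: heat equation for the Gaussian-smoothed density, differentiation under the integral, one integration by parts, and then averaging over the conditioning variable. The regularity concerns you flag (dominated convergence, vanishing boundary terms) are exactly the right ones and are handled by the smoothness and decay of the Gaussian convolution for $\gamma>0$.

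The paper, by contrast, does not prove this lemma at all: it simply states it as ``the conditional version of \cite[Theorem 14]{Dembo}'' and moves on. So your proposal is not merely a different route but a genuine proof where the paper gives only a citation. What you have written is essentially the argument one would find in the cited reference (Dembo--Cover--Thomas), adapted to the conditional setting by freezing $U=u$ and averaging; there is nothing to correct.
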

This lemma is the conditional version of \cite[Theorem 14]{Dembo}. 

Replacing the variable $\gamma$ by ${1}/{\gamma}$ in de Bruijn's identity and using the complementary identity in lemma \ref{lea2},
one can obtain the following result via simple algebraic manipulations.
\smallskip
\newtheorem{corollary}{Corollary}
\begin{corollary} \label{cor1}
 \begin{equation}
 \frac{d}{d \gamma }h(\sqrt{\gamma}\mathbf{X}+ \mathbf{N} | U) = \frac{1}{2} \tr \left\{\mathbf{\Sigma}^{-1}\cov (\mathbf{X}|\sqrt{\gamma} \mathbf{X}+\mathbf{N}) \right\}. \label{corollary1}
 \end{equation}
\end{corollary}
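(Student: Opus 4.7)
The plan is to follow the recipe spelled out in the hint: express $h(\sqrt{\gamma}\mathbf{X}+\mathbf{N}\mid U)$ in terms of a quantity to which de Bruijn's identity (Lemma \ref{lea3}) applies, differentiate, and then convert the resulting conditional Fisher information into a conditional covariance via the complementary identity (Lemma \ref{lea2}).

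Concretely, I would first use the scaling property of differential entropy to write
\begin{equation}
h(\sqrt{\gamma}\mathbf{X}+\mathbf{N}\mid U) \;=\; h\!\left(\mathbf{X}+\tfrac{1}{\sqrt{\gamma}}\mathbf{N}\,\Big|\,U\right) + \tfrac{m}{2}\log\gamma, \nonumber
\end{equation}
which is valid because scaling the conditional density by the invertible factor $\sqrt{\gamma}$ contributes only the Jacobian term. Setting $\gamma' = 1/\gamma$, the first term becomes $h(\mathbf{X}+\sqrt{\gamma'}\mathbf{N}\mid U)$, to which Lemma \ref{lea3} directly applies. Differentiating with respect to $\gamma$ via the chain rule (with $d\gamma'/d\gamma = -1/\gamma^2$) yields
\begin{equation}
\frac{d}{d\gamma}h(\sqrt{\gamma}\mathbf{X}+\mathbf{N}\mid U) \;=\; -\frac{1}{2\gamma^{2}}\tr\!\left\{ J(\mathbf{X}+\sqrt{\gamma'}\mathbf{N}\mid U)\,\mathbf{\Sigma}\right\} + \frac{m}{2\gamma}. \nonumber
\end{equation}

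Next I would apply Lemma \ref{lea2} with Gaussian noise $\sqrt{\gamma'}\mathbf{N}\sim\mathcal{N}(\mathbf{0},\gamma'\mathbf{\Sigma})$ to obtain
\begin{equation}
J(\mathbf{X}+\sqrt{\gamma'}\mathbf{N}\mid U)\,\mathbf{\Sigma} \;=\; \frac{1}{\gamma'}\mathbf{I} - \frac{1}{\gamma'^{2}}\mathbf{\Sigma}^{-1}\cov(\mathbf{X}\mid \mathbf{X}+\sqrt{\gamma'}\mathbf{N}, U), \nonumber
\end{equation}
so that the $m/(2\gamma)$ term cancels and what remains is $\tfrac{1}{2}\tr\{\mathbf{\Sigma}^{-1}\cov(\mathbf{X}\mid \mathbf{X}+\sqrt{\gamma'}\mathbf{N},U)\}$. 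The final step is a measurability observation: since $\mathbf{X}+\sqrt{\gamma'}\mathbf{N} = \gamma'^{1/2}(\sqrt{\gamma}\mathbf{X}+\mathbf{N})$ is an invertible linear function of $\sqrt{\gamma}\mathbf{X}+\mathbf{N}$, the two generate the same $\sigma$-algebra, and hence the conditional covariance may be rewritten as $\cov(\mathbf{X}\mid \sqrt{\gamma}\mathbf{X}+\mathbf{N},U)$, which is exactly \eqref{corollary1}.

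There is no real obstacle beyond bookkeeping: the chain-rule prefactor, the Jacobian term from the scaling step, and the $m/\gamma'$ term coming from the identity $\tr\{\gamma'^{-1}\mathbf{I}\}=m/\gamma'$ must cancel exactly, and checking that the noise covariance in the complementary identity is $\gamma'\mathbf{\Sigma}$ (not $\mathbf{\Sigma}$) is the only place one could easily slip. All these steps are algebraic, so I would expect the proof to be short and not to require any new ideas.
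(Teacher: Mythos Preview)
Your proposal is correct and follows exactly the approach indicated in the paper: rewrite via the substitution $\gamma \mapsto 1/\gamma$, apply de Bruijn's identity (Lemma \ref{lea3}), and then convert Fisher information to conditional covariance via the complementary identity (Lemma \ref{lea2}), with the remaining steps being the routine cancellations you describe. Note that your derivation produces $\cov(\mathbf{X}\mid\sqrt{\gamma}\mathbf{X}+\mathbf{N},U)$ rather than $\cov(\mathbf{X}\mid\sqrt{\gamma}\mathbf{X}+\mathbf{N})$ as printed in \eqref{corollary1}; the conditioning on $U$ is what the argument actually yields and the omission in the statement is evidently a typo.
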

\smallskip
\begin{lemma} [Data Processing Inequality] \label{lea6}
Let $(\mathbf{X}, U, V)$ be a tuple of jointly distributed random vectors, and $U, V, \mathbf{X}$ form a Markov chain. \emph{i.e.} $U \rightarrow V \rightarrow \mathbf{X}$, then
\begin{equation}
J(\mathbf{X} | U) \preceq J(\mathbf{X} | V).
\end{equation}
\end{lemma}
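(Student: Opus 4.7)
The plan is to derive the inequality by expressing the score function of $\mathbf{X}$ given $U$ as a conditional expectation of the score function of $\mathbf{X}$ given $V$, and then applying the matrix version of Jensen's inequality for the outer product.

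First, using the Markov property $U \to V \to \mathbf{X}$, I would note that $f(\mathbf{x}\mid v,u) = f(\mathbf{x}\mid v)$, so the marginal conditional density decomposes as
$$
f(\mathbf{x}\mid u) \;=\; \int f(\mathbf{x}\mid v)\, f(v\mid u)\, dv.
$$
Differentiating both sides with respect to $\mathbf{x}$ (assuming the regularity conditions that justify interchange of $\nabla$ and $\int$) and dividing by $f(\mathbf{x}\mid u)$ gives
$$
\nabla \log f(\mathbf{x}\mid u) \;=\; \int \nabla \log f(\mathbf{x}\mid v)\,\frac{f(\mathbf{x}\mid v)f(v\mid u)}{f(\mathbf{x}\mid u)}\,dv \;=\; \mathbb{E}\bigl[\nabla \log f(\mathbf{X}\mid V)\,\big|\,\mathbf{X}=\mathbf{x},\,U=u\bigr],
$$
where in the last step I recognize the weighting as the conditional density $f(v\mid \mathbf{x},u)$, itself a consequence of the Markov chain.

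Second, writing $\mathbf{S}\triangleq \nabla\log f(\mathbf{X}\mid V)$, the identity just obtained reads $\nabla\log f(\mathbf{X}\mid U)=\mathbb{E}[\mathbf{S}\mid \mathbf{X},U]$. Since for any random vector $\mathbf{S}$ and conditioning $\sigma$-algebra $\mathcal{F}$ one has
$$
\mathbb{E}[\mathbf{S}\mathbf{S}^{T}\mid \mathcal{F}] \;-\; \mathbb{E}[\mathbf{S}\mid \mathcal{F}]\mathbb{E}[\mathbf{S}\mid \mathcal{F}]^{T} \;=\; \cov(\mathbf{S}\mid \mathcal{F}) \;\succeq\; \mathbf{0},
$$
taking full expectation yields $\mathbb{E}[\mathbf{S}\mathbf{S}^{T}]\succeq \mathbb{E}\bigl[\mathbb{E}[\mathbf{S}\mid \mathbf{X},U]\mathbb{E}[\mathbf{S}\mid \mathbf{X},U]^{T}\bigr]$. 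The right-hand side equals $J(\mathbf{X}\mid U)$ by the definition in Definition~1, while the left-hand side equals $J(\mathbf{X}\mid V)$ since $\mathbf{S}$ is the score of $\mathbf{X}$ given $V$. This gives the desired $J(\mathbf{X}\mid U)\preceq J(\mathbf{X}\mid V)$.

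The only subtlety I anticipate is a careful justification of the differentiation under the integral sign and of the matrix Jensen step; both are standard under the smoothness assumptions implicitly in force for Fisher information to be well-defined, so I would dispatch them with a brief remark rather than a full technical argument. No perturbation or spectral machinery is needed here—the result is a direct consequence of the tower property applied to outer products of scores.
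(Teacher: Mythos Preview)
Your argument is correct: expressing the score given $U$ as the conditional expectation of the score given $V$ (via the Markov factorization) and then invoking the positive semidefiniteness of the conditional covariance is the standard self-contained proof of this inequality. The paper itself does not spell out a proof but simply cites that it ``follows easily by the chain rule of Fisher information matrix'' in \cite{lemma2}; the chain rule there is proved by exactly the score-conditioning/Jensen mechanism you use, so your approach is essentially the same as what the paper defers to, only made explicit.
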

The proof follows easily by the chain rule of Fisher information matrix \cite[Lemma 1]{lemma2}.
\smallskip

\begin{lemma} [Fisher Information Inequality]\label{lea4}
Let $(\mathbf{X}, \mathbf{Y},U)$ be a tuple of jointly distributed random vectors. Assume that $\mathbf{X}$ and $\mathbf{Y}$ be
conditionally independent given $U$, then for any $\gamma \in (0,1)$,
\begin{equation}
J(\sqrt{1-\gamma}\mathbf{X}+ \sqrt{\gamma}\mathbf{Y} | U) \preceq (1-\gamma) J(\mathbf{X} | U) + \gamma J(\mathbf{Y} | U).\label{lemma4}
\end{equation}
\end{lemma}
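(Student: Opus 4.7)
The plan is to prove the conditional matrix Fisher information inequality by lifting the classical Stam argument to the conditional setting. Set $\mathbf{Z} \triangleq \sqrt{1-\gamma}\,\mathbf{X} + \sqrt{\gamma}\,\mathbf{Y}$. Because $\mathbf{X}$ and $\mathbf{Y}$ are conditionally independent given $U$, the conditional density $f_{\mathbf{Z}|U}(\cdot\mid u)$ is a scaled convolution of $f_{\mathbf{X}|U}(\cdot\mid u)$ and $f_{\mathbf{Y}|U}(\cdot\mid u)$, and I can differentiate this representation under the integral in two symmetric ways, once against the $\mathbf{X}$-variable and once against the $\mathbf{Y}$-variable.

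The first main step is the conditional score identity: combining the two differentiations with weights chosen to cancel the scaling Jacobians yields
\[
\nabla \log f_{\mathbf{Z}|U}(\mathbf{Z}\mid U) = \mathbb{E}\!\left[\sqrt{1-\gamma}\,\nabla \log f_{\mathbf{X}|U}(\mathbf{X}\mid U) + \sqrt{\gamma}\,\nabla \log f_{\mathbf{Y}|U}(\mathbf{Y}\mid U)\,\big|\,\mathbf{Z},U\right].
\]
The coefficients $\sqrt{1-\gamma}$ and $\sqrt{\gamma}$ arise because rescaling the convolution kernel by $a$ introduces a Jacobian factor $1/a$, and choosing the convex combination with weights $(a^2,b^2)=(1-\gamma,\gamma)$ cancels those factors exactly.

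Second, set $\mathbf{V} \triangleq \sqrt{1-\gamma}\,\nabla \log f_{\mathbf{X}|U}(\mathbf{X}\mid U) + \sqrt{\gamma}\,\nabla \log f_{\mathbf{Y}|U}(\mathbf{Y}\mid U)$. Matrix convexity of $\mathbf{v}\mapsto \mathbf{v}\mathbf{v}^T$ and conditional Jensen's inequality give $\mathbb{E}[\mathbf{V}\mid\mathbf{Z},U]\mathbb{E}[\mathbf{V}\mid\mathbf{Z},U]^T \preceq \mathbb{E}[\mathbf{V}\mathbf{V}^T\mid\mathbf{Z},U]$. Taking the outer expectation and invoking the score identity from the previous paragraph yields
\[
J(\mathbf{Z}\mid U) \preceq (1-\gamma)\,J(\mathbf{X}\mid U) + \gamma\,J(\mathbf{Y}\mid U) + \sqrt{\gamma(1-\gamma)}\,\bigl(\mathbf{C} + \mathbf{C}^T\bigr),
\]
where $\mathbf{C} \triangleq \mathbb{E}\!\left[\nabla\log f_{\mathbf{X}|U}(\mathbf{X}\mid U)\bigl(\nabla\log f_{\mathbf{Y}|U}(\mathbf{Y}\mid U)\bigr)^T\right]$. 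Conditional independence lets me condition on $U$ and factor $\mathbf{C}$ as a product of two conditional expectations, and the standard fact that a (conditional) score has (conditional) zero mean then forces $\mathbf{C}=\mathbf{0}$, completing the proof.

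The main obstacle is the score identity itself: it is the only analytic step, requiring an interchange of differentiation and integration in the convolution representation of $f_{\mathbf{Z}|U}$, together with careful accounting of the $\sqrt{1-\gamma}$ and $\sqrt{\gamma}$ Jacobian weights. Everything else reduces to routine matrix-analytic manipulations with conditional expectations and the elementary zero-mean property of score functions.
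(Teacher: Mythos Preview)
Your argument is correct: it is the classical Stam/Blachman score identity lifted to the conditional setting, followed by the variance inequality $\mathbb{E}[\mathbf{V}\mathbf{V}^T\mid\mathbf{Z},U]-\mathbb{E}[\mathbf{V}\mid\mathbf{Z},U]\mathbb{E}[\mathbf{V}\mid\mathbf{Z},U]^T=\cov(\mathbf{V}\mid\mathbf{Z},U)\succeq 0$ and the zero-mean property of the conditional score. Each step is sound, and the weights $a^2=1-\gamma$, $b^2=\gamma$ are handled correctly.

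As for comparison with the paper: the paper does not actually prove Lemma~\ref{lea4}. It simply remarks that this is an equivalent form of the matrix Fisher information inequality and points the reader to \cite[Proposition~3]{EPI}. Your write-up therefore supplies what the paper omits. The Stam-type argument you give is precisely the standard proof one finds in the cited literature, so there is no methodological divergence---you have just filled in the details that the paper outsourced to a reference.
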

This is an equivalent form of matrix Fisher information inequality. One can refer to \cite[Proposition 3]{EPI} for a detailed discussion.
\smallskip

\begin{lemma} \label{lea5}
Let $(\mathbf{X}, U)$ be a pair of jointly distributed random vectors, and $\mathbf{N} \thicksim \mathcal{N}(\mathbf{0}, \mathbf{\Sigma})$ be a
 standard Gaussian random vector, which is independent of  $(\mathbf{X}, U)$, then for any $\gamma \in (0,1)$, we have
\begin{equation}
\cov (\mathbf{X} | \mathbf{X} + \mathbf{N}, U) \preceq \gamma^{2} \cov (\mathbf{X}| U) +  (1-\gamma)^{2}\mathbf{\Sigma}. \label{lemma5}
\end{equation}
\end{lemma}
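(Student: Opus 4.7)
The plan is to prove this MMSE-matrix inequality by exhibiting a suboptimal (linear) estimator of $\mathbf{X}$ from $(\mathbf{X}+\mathbf{N},U)$ whose error covariance is exactly the right-hand side, and then invoking the optimality of the conditional-mean estimator.

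Concretely, I would introduce
$$
\hat{\mathbf{X}} \;=\; (1-\gamma)(\mathbf{X}+\mathbf{N}) \;+\; \gamma\,\mathbb{E}[\mathbf{X}\mid U].
$$
Because $\mathbb{E}[\mathbf{X}\mid U]$ is a (deterministic) function of $U$, the estimator $\hat{\mathbf{X}}$ is a measurable function of $(\mathbf{X}+\mathbf{N},U)$. A direct rearrangement gives
$$
\mathbf{X}-\hat{\mathbf{X}} \;=\; \gamma\bigl(\mathbf{X}-\mathbb{E}[\mathbf{X}\mid U]\bigr) \;-\; (1-\gamma)\,\mathbf{N}.
$$
Since $\mathbf{N}$ is independent of $(\mathbf{X},U)$ and has covariance $\mathbf{\Sigma}$, and the cross term $\mathbb{E}\bigl[(\mathbf{X}-\mathbb{E}[\mathbf{X}\mid U])\mathbf{N}^{T}\bigr]=\mathbf{0}$, taking the (unconditional) covariance yields
$$
\mathbb{E}\bigl[(\mathbf{X}-\hat{\mathbf{X}})(\mathbf{X}-\hat{\mathbf{X}})^{T}\bigr] \;=\; \gamma^{2}\cov(\mathbf{X}\mid U) \;+\; (1-\gamma)^{2}\mathbf{\Sigma},
$$
where $\cov(\mathbf{X}\mid U)$ is understood in the MMSE-matrix sense (the convention already used for the Cram\'er--Rao bound above).

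The last step is the optimality of the conditional mean: among all estimators of $\mathbf{X}$ based on $(\mathbf{X}+\mathbf{N},U)$, the one that minimizes the error covariance in the positive semidefinite order is $\mathbb{E}[\mathbf{X}\mid\mathbf{X}+\mathbf{N},U]$, whose error covariance is $\cov(\mathbf{X}\mid\mathbf{X}+\mathbf{N},U)$. Hence
$$
\cov(\mathbf{X}\mid\mathbf{X}+\mathbf{N},U) \;\preceq\; \mathbb{E}\bigl[(\mathbf{X}-\hat{\mathbf{X}})(\mathbf{X}-\hat{\mathbf{X}})^{T}\bigr] \;=\; \gamma^{2}\cov(\mathbf{X}\mid U)+(1-\gamma)^{2}\mathbf{\Sigma},
$$
which is the desired inequality.

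There is really no substantial obstacle here; the only subtlety is making sure the ``conditional covariance'' is consistently read as the MMSE matrix (so that PSD-order optimality of the conditional mean applies to unconditional error covariance), which matches the convention used elsewhere in the preliminaries. The choice of the convex combination coefficient $\gamma$ in the suboptimal estimator is what produces the exact bound in Lemma~\ref{lea5}.
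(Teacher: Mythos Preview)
Your proof is correct, and it takes a genuinely different route from the paper's own argument. The paper proceeds in two steps: first it applies the matrix arithmetic--harmonic mean inequality (equivalently, matrix convexity of the inverse) to show
\[
\gamma^{2}\cov(\mathbf{X}\mid U)+(1-\gamma)^{2}\mathbf{\Sigma}\;\succeq\;\bigl(\cov(\mathbf{X}\mid U)^{-1}+\mathbf{\Sigma}^{-1}\bigr)^{-1},
\]
and then invokes the external fact $\bigl(\cov(\mathbf{X}\mid U)^{-1}+\mathbf{\Sigma}^{-1}\bigr)^{-1}\succeq\cov(\mathbf{X}\mid\mathbf{X}+\mathbf{N},U)$ (the ``Gaussian is worst'' MMSE bound). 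Your argument bypasses both the convexity step and the cited intermediate inequality by exhibiting the explicit suboptimal estimator $\hat{\mathbf{X}}=(1-\gamma)(\mathbf{X}+\mathbf{N})+\gamma\,\mathbb{E}[\mathbf{X}\mid U]$ and appealing only to PSD-order optimality of the conditional mean. This is more direct and self-contained; the paper's route, on the other hand, makes explicit that the right-hand side dominates the tighter linear-MMSE quantity $(\cov(\mathbf{X}\mid U)^{-1}+\mathbf{\Sigma}^{-1})^{-1}$, which is informative but not needed for the lemma as stated.
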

The proof is left in Appendix \ref{OP}.

\section {Properties of $R^{BT}(d)$}
In this section, we study the KKT (Karush--Kuhn--Tucker) conditions for the optimization problem $R^{BT}(d)$ and establish some basic properties of the subspaces induced by the eigen-decomposition of the MSE (mean square error) matrix. These properties play a key role in the proof of the converse theorem for the vector Gaussian CEO problem under the trace distortion constraint. \par

\subsection{KKT Conditions}
It is easy to observe that the objective function
of the optimization problem $R^{BT}(d)$ goes to infinity as $|\mathbf{\Sigma}_{i}^{-1}-\mathbf{B}_{i}| \rightarrow 0$ for any $i=1,\ldots, L$. Hence the constraints $\mathbf{B}_{i} \preceq \mathbf{\Sigma}_{i}^{-1}$, $i=1, \ldots, L$
are not active. \par
The Lagrangian of the optimization problem $R^{BT}(d)$ is given by
\begin{align}
\frac{\mu_{1}}{2} \log {\left |\mathbf{K}^{-1}+\sum_{j=1}^{L}\mathbf{B}_{j}\right |} - \sum_{i=1}^{L-1} \frac{\mu_{i}-\mu_{i+1}}{2}\log\left |\mathbf{K}^{-1} + \sum_{j=i+1}^{L} \mathbf{B}_{j}\right | \nonumber \\
- \sum_{i=1}^{L} \frac{\mu_{i}}{2} \log |\mathbf{\Sigma}_{i}^{-1} - \mathbf{B}_{i}| + \sum_{i=1}^{L} \frac{\mu_{i}}{2} \log |\mathbf {\Sigma}_{i}^{-1}| - \frac{\mu_{L}}{2} \log |\mathbf{K}^{-1}| \nonumber \\
-  \sum_{i=1}^{L}\tr \left({\mathbf{B}_{i} \mathbf{\Psi}_{i}}\right) + \lambda \left(\tr \left((\mathbf{K}^{-1} + \sum_{j=1}^{L} \mathbf{B}_{j})^{-1} \right) -d\right), \nonumber
\end{align}
where matrices $\mathbf{\Psi}_{i}, i=1, \ldots, L$ and scalar $\lambda$ are Lagrange multipliers.  Let $\mathbf{B}_{1}^{*}, \ldots, \mathbf{B}_{L}^{*}$ be the optimal solution of $R^{BT}(d)$. Define
\begin{equation}\label{eqn:KB-operator}
\mathbf{C}_i=\left( \mathbf{K}^{-1} + \sum_{j=i}^{L} \mathbf{B}_{j}^{*}\right)^{-1},\ \ i=1,2,\cdots,L.
\end{equation}
The KKT conditions for the optimization problem $R^{BT}(d)$ are given by
\begin{equation} \label{KKT1}
\frac{\mu_{1}}{2} \mathbf{C}_1 + \frac{\mu_{1}}{2} \left( \mathbf{\Sigma}_{1}^{-1} -  \mathbf{B}_{1}^{*} \right)^{-1} - \mathbf{\Psi}_{1} - \lambda \mathbf{C}_1^2 =0;
\end{equation}
\begin{align} \label{KKT2}
\frac{\mu_{1}}{2} \mathbf{C}_1  + \frac{\mu_{k}}{2}  \left( \mathbf{\Sigma}_{k}^{-1} -  \mathbf{B}_{k}^{*} \right)^{-1}\nonumber \\
 - \sum_{i=1}^{k-1} \frac{\mu_{i}-\mu_{i+1}}{2} \mathbf{C}_{i+1} - \mathbf{\Psi}_{k}
 - \lambda \mathbf{C}_1^2 =0; \nonumber \\
 k=2, \ldots, L;
\end{align}
\begin{align}
\mathbf{B}_{k}^{*} \mathbf{\Psi}_{k} =0, \quad k=1, \ldots, L;  \label{KKT3}\\
\lambda \left(\tr \left(\mathbf{C}_1 \right) -d\right)=0; \label{KKT4}\\
 \quad \mathbf{\Psi}_{k} \succeq 0, \quad k=1,\ldots, L; \quad \lambda \geq 0. \label{KKT5}
\end{align}
\par
Notice that the optimization problem $R^{BT}(d)$ is not convex; therefore, the constraint qualifications need to be examined in order to show the existence of  Lagrange multipliers $\mathbf{\Psi}_{i}, i=1, \ldots, L$ and $\lambda$
satisfying  the KKT conditions. These technical details are relegated to Appendix \ref{KKT}. Here we just point out the following implication of the KKT conditions.
\begin{corollary} \label{lamd}
For  $d\in(\tr \{ (\mathbf{K}^{-1}+ \sum_{i=1}^{L} \mathbf{\Sigma}_{i})^{-1} \} , \tr ( \mathbf{K}))$, we have
\begin{equation}
\tr \left(\mathbf{C}_1 \right) = d. \label{lambda}
\end{equation}
\end{corollary}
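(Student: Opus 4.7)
The plan is to derive (\ref{lambda}) by showing that the Lagrange multiplier $\lambda$ attached to the trace distortion constraint must be strictly positive on the stated range of $d$; complementary slackness (\ref{KKT4}) will then force $\tr(\mathbf{C}_{1}) = d$. I would argue by contradiction.

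Suppose $\lambda = 0$. Then (\ref{KKT1}) yields
$$\mathbf{\Psi}_{1} = \frac{\mu_{1}}{2}\mathbf{C}_{1} + \frac{\mu_{1}}{2}(\mathbf{\Sigma}_{1}^{-1} - \mathbf{B}_{1}^{*})^{-1},$$
which is positive definite because $\mathbf{K} \succ 0$ forces $\mathbf{C}_{1} \succ 0$, the constraint $\mathbf{B}_{1} \preceq \mathbf{\Sigma}_{1}^{-1}$ is inactive at the optimum (as remarked at the start of the subsection, so $\mathbf{\Sigma}_{1}^{-1} - \mathbf{B}_{1}^{*} \succ 0$), and $\mu_{1} > 0$. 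Complementary slackness (\ref{KKT3}) then forces $\mathbf{B}_{1}^{*} = \mathbf{0}$.

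I would iterate this by induction on $k$. Assuming $\mathbf{B}_{1}^{*} = \cdots = \mathbf{B}_{k-1}^{*} = \mathbf{0}$, definition (\ref{eqn:KB-operator}) gives $\mathbf{C}_{1} = \mathbf{C}_{2} = \cdots = \mathbf{C}_{k}$, and the telescoping identity
$$\sum_{i=1}^{k-1} \frac{\mu_{i} - \mu_{i+1}}{2}\, \mathbf{C}_{i+1} \;=\; \frac{\mu_{1} - \mu_{k}}{2}\, \mathbf{C}_{1}$$
collapses (\ref{KKT2}) to
$$\mathbf{\Psi}_{k} \;=\; \frac{\mu_{k}}{2}\mathbf{C}_{1} + \frac{\mu_{k}}{2}(\mathbf{\Sigma}_{k}^{-1} - \mathbf{B}_{k}^{*})^{-1}.$$
Since $\mu_{k} > 0$ by assumption, the right-hand side is again positive definite, so (\ref{KKT3}) gives $\mathbf{B}_{k}^{*} = \mathbf{0}$. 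Carrying the induction up to $k = L$ produces $\mathbf{B}_{i}^{*} = \mathbf{0}$ for every $i$, hence $\mathbf{C}_{1} = \mathbf{K}$ and $\tr(\mathbf{C}_{1}) = \tr(\mathbf{K})$. This contradicts feasibility $\tr(\mathbf{C}_{1}) \leq d$ together with the standing hypothesis $d < \tr(\mathbf{K})$.

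Therefore $\lambda > 0$, and (\ref{lambda}) follows from (\ref{KKT4}). The proof is short because the nontrivial work — verifying that the KKT conditions are in fact necessary despite the non-convexity of $R^{BT}(d)$ — has been deferred to Appendix~\ref{KKT}; the only step that warrants a second look is the telescoping collapse that makes each $\mathbf{\Psi}_{k}$ positive definite once the preceding $\mathbf{B}_{j}^{*}$ have been shown to vanish.
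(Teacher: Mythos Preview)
Your argument is correct and follows essentially the same route as the paper: assume $\lambda=0$, use (\ref{KKT1}) to get $\mathbf{\Psi}_{1}\succ 0$, deduce $\mathbf{B}_{1}^{*}=0$ from (\ref{KKT3}), and inductively repeat with (\ref{KKT2}) to force every $\mathbf{B}_{i}^{*}=0$, contradicting $d<\tr(\mathbf{K})$. Your version is in fact slightly more explicit than the paper's, since you spell out the telescoping identity that reduces $\mathbf{\Psi}_{k}$ to $\frac{\mu_{k}}{2}\mathbf{C}_{1}+\frac{\mu_{k}}{2}(\mathbf{\Sigma}_{k}^{-1}-\mathbf{B}_{k}^{*})^{-1}$.
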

\begin{proof}
According to the complementary slackness condition \eqref{KKT4},
for the purpose of proving \eqref{lambda}, it suffices to show $\lambda \neq 0$. If $\lambda = 0$, then it follows by \eqref{KKT1} that $\mathbf{\Psi}_{1} \succ {0}$, which, together with  the complementary slackness condition $\mathbf{B}_{1}^{*} \mathbf{\Psi}_{1}=0$ in \eqref{KKT3}, implies $\mathbf{B}_{1}^{*} =0$. Substituting $\mathbf{B}_{1}^{*} =0$ into the first equation in \eqref{KKT2} gives $\mathbf{\Psi}_{2} \succ {0}$. Along this way, we may inductively obtain
$\mathbf{B}_{1}^{*} = \mathbf{B}_{2}^{*} = \ldots =\mathbf{B}_{L}^{*} = 0 $, which, in view of (\ref{eq:traceconstraint}), implies $\tr(\mathbf{K}) \leq d$. This leads to a contradiction with the assumption $d<\tr\{\mathbf{K}\}$. Thus \eqref{lambda} is proved.
\end{proof}

\subsection{Spectral-decomposition of MSE}
Since the mean square error matrix $\mathbf{C}_1=( \mathbf{K}^{-1}+ \sum_{j=1}^{L}\mathbf{B}_{j}^{*})^{-1}$ of the Berger-Tung scheme is positive definite,
we can write its spectral representation as below:
\begin{equation}\label{eqn:SD-2} \mathbf{C}_1= \sum_{n=1}^{m} d_{n} \bm{e}_{n}\bm{ e} _{n}^{T},\end{equation}
 where the positive real
 numbers $d_{n}, n=1,\cdots,m $ stand for the eigenvalues, and $ \bm{e}_{1}, \bm{e}_{2}, \ldots, \bm{e}_{m} \in \mathbb{R}^{m}$ are the corresponding normalized eigenvectors which form an orthogonal basis.\par
 It follows readily  from \eqref{eqn:SD-2}  that
 \begin{equation}\label{eqn:SD-3}
\mathbf{C}_1^{2}= \sum_{n=1}^{m} d_{n}^{2} \bm{e}_{n}\bm{ e} _{n}^{T}.\end{equation}
In what follows,  we denote
\begin{equation}\label{eqn:SD-1}\mathbf{\Delta}_{i} \triangleq  \frac{\mu_{i}}{2}( \mathbf{\Sigma}_{i}^{-1} -  \mathbf{B}_{i}^{*})^{-1}-\mathbf{\Psi}_{i}.\end{equation}
By the matrix identity in KKT conditions \eqref{KKT1}, we see that
 $$\mathbf{\Delta}_{1} =\lambda  \mathbf{C}_1^2-  \frac{\mu_{1}}{2}  \mathbf{C}_1,$$
 Substituting \eqref{eqn:SD-2} and \eqref{eqn:SD-3} into the above equation leads to the following spectral representation of $\mathbf{\Delta}_{1}$:
\begin{equation}\label{eqn:add-1}\mathbf{\Delta}_{1} = \sum_{n=1}^{m} \left (\lambda  d_{n}^{2} - \mu_{1}\frac{d_{n}}{2} \right ) \bm{e}_{n}\bm{ e} _{n}^{T}.\end{equation}
Now we divide the vector space $\mathbb{R}^{m}$ into two orthogonal subspaces according to the sign of the eigenvalues $\lambda  d_{n}^{2} - \mu_{1}d_{n}/2, n=1,2,\ldots,m $. We may define $m \times n_{1}$
matrix $\mathbf{U}_{1}  \triangleq  \left( \bm{e}_{1}, \bm{e}_{2}, \ldots, \bm{e}_{n_{1}}\right)$  in which the eigenvectors $\bm{e}_{n}, n=1, 2,\ldots, n_{1}$,
correspond to the positive eigenvalues. Similarly we may  define $m \times (m-n_{1})$ matrix $\mathbf{V}_{1}  \triangleq  \left( \bm{e}_{n_{1}+1}, \bm{e}_{2}, \ldots, \bm{e}_{m}\right)$,
 in which the eigenvectors $\bm{e}_{n}, n=n_{1}+1, n_{1}+2,\ldots, m $, correspond to  non-positive eigenvalues. It can be verified that
\begin{align}
&\mathbf{U}_{1}^{T} \mathbf{\Delta}_{1} \mathbf {U}_{1} \succ 0, \quad \mathbf{V}_{1}^{T} \mathbf{\Delta}_{1} \mathbf {V}_{1}\preceq 0, \quad  \mathbf{U}_{1}^{T} \mathbf{\Delta}_{1} \mathbf {V}_{1}=0; \\
&\mathbf{U}_{1}^{T} \mathbf{C}_{1}\mathbf {U}_{1} \succeq 0, \quad
 \mathbf{V}_{1}^{T}\mathbf{C}_{1} \mathbf {V}_{1} \succeq 0,  \quad
 \mathbf{U}_{1}^{T}\mathbf{C}_{1}\mathbf {V}_{1} = 0.
\end{align}
At this stage we may rewrite the spectral decomposition of $\mathbf{\Delta}_{1}$ and $ \mathbf{C}_1=( \mathbf{K}^{-1}+ \sum_{j=1}^{L}\mathbf{B}_{j}^{*})^{-1}$ according to the positivity/non-positivity structure of eigenspaces as below:
\begin{align}
 &\mathbf{\Delta}_{1} = \mathbf{U}_{1} \mathbf{U}_{1}^{T} \mathbf{\Delta}_{1} \mathbf {U}_{1} \mathbf{U}_{1}^{T} + \mathbf{V}_{1}\mathbf{V}_{1}^{T} \mathbf{\Delta}_{1} \mathbf {V}_{1}\mathbf{V}_{1}^{T}, \\
 &\mathbf{C}_{1} =\displaystyle \mathbf{U}_{1} \mathbf{U}_{1}^{T}\mathbf{C}_{1}\mathbf {U}_{1} \mathbf{U}_{1}^{T}  + \mathbf{V}_{1}\mathbf{V}_{1}^{T}\mathbf{C}_{1} \mathbf {V}_{1}\mathbf{V}_{1}^{T}.
\end{align}
\par
\smallskip
Since $ \mathbf{V}_{1}^{T} \mathbf{\Delta}_{1} \mathbf {V}_{1}\preceq 0 $, we have
$$ \mathbf{V}_{1}^{T} \mathbf{\Psi}_{1}\mathbf {V}_{1} \succeq  \frac{\mu_{1}}{2} \mathbf{V}_{1}^{T}( \mathbf{\Sigma}_{1}^{-1} -  \mathbf{B}_{1}^{*})^{-1} \mathbf {V}_{1} \succ 0,$$
 which means that the subspace spanned by the column vectors of $\mathbf{V}_{1}$ belongs to the image space of $\mathbf{\Psi}_{1}$, \emph{i.e.}, $\mathbf{V}_{1} \subseteq \im (\mathbf{\Psi_{1}})$.
 Thus by the complementary slackness conditions \eqref{KKT3} in KKT conditions, we have $\mathbf{B}_{1}^{*} \mathbf{\Psi}_{1} = 0$; as a consequence, the kernel space of $\mathbf{B}_{1}^{*}$  contains the image  space of $\mathbf{\Psi}_{1}$,  \emph{i.e.}, $ \Ker (\mathbf{B}_{1}^{*}) \supseteq \im (\mathbf{\Psi_{1}})$, which implies
\begin{equation}\label{eqn:b1}
\mathbf{B}_{1}^{*} \mathbf{V}_{1} =0.
\end{equation}
Henceforth, according to the definition of $\mathbf{V}_{1}$, we have
\begin{align} \label{orth}
0&= \mathbf{B}_{1}^{*}\mathbf{V}_{1}  \diag (d_{n_{1}+1}, d_{n_{1}+2}, \ldots, d_{m}) \nonumber \\
 &=  \mathbf{B}_{1}^{*} (\bm{e}_{n_{1}+1},\bm{e}_{n_{1}+2}, \ldots, \bm{e}_{m} ) \diag (d_{n_{1}+1}, d_{n_{1}+2}, \ldots, d_{m}) \nonumber \\
 &= \mathbf{B}_{1}^{*} (d_{n_{1}+1}\bm{e}_{n_{1}+1},d_{n_{1}+2}\bm{e}_{n_{1}+2}, \ldots, d_{m}\bm{e}_{m} ) \nonumber \\
 &= \mathbf{B}_{1}^{*} \mathbf{C}_{1} (\bm{e}_{n_{1}+1},\bm{e}_{n_{1}+2}, \ldots, \bm{e}_{m} ) \nonumber\\
 &= \mathbf{B}_{1}^{*} \mathbf{C}_{1} \mathbf{V}_{1}.
\end{align}
Left-multiplying with $\mathbf{C}_2=( \mathbf{K}^{-1}+ \sum_{j=2}^{L}\mathbf{B}_{j}^{*})^{-1} $ at both sides of \eqref{orth} yields
\begin{align}
0 & = \mathbf{C}_2  \mathbf{B}_{1}^{*} \mathbf{C}_1 \mathbf{V}_{1} \nonumber \\
  & = ( \mathbf{K}^{-1}+ \sum_{j=2}^{L}\mathbf{B}_{j}^{*})^{-1} \mathbf{B}_{1}^{*} ( \mathbf{K}^{-1}+ \sum_{j=1}^{L}\mathbf{B}_{j}^{*})^{-1} \mathbf{V}_{1} \nonumber \\
  & = ( \mathbf{K}^{-1}+ \sum_{j=2}^{L}\mathbf{B}_{j}^{*})^{-1} \mathbf{V}_{1} - ( \mathbf{K}^{-1}+ \sum_{j=1}^{L}\mathbf{B}_{j}^{*})^{-1} \mathbf{V}_{1} \nonumber \\
 & =\mathbf{C}_2 \mathbf{V}_{1} -\mathbf{C}_1 \mathbf{V}_{1},
\end{align}
which implies that
\begin{equation}\label{pass}
\mathbf{V}_{1}^{T}\mathbf{C}_1 \mathbf{V}_{1} = \mathbf{V}_{1}^{T}\mathbf{C}_2 \mathbf{V}_{1}.
\end{equation}
In view of \eqref{pass},  $\bm{e}_{n_{1}+1},\bm{e}_{n_{1}+2}, \ldots, \bm{e}_{m}$ are also the eigenvectors of matrix
$\mathbf{C}_2=( \mathbf{K}^{-1}+ \sum_{j=2}^{L}\mathbf{B}_{j}^{*})^{-1} $ with the eigenvalues being  $d_{n_{1}+1}, d_{n_{1}+2}, \ldots, d_{m}$. On the other hand, we can conclude that
\begin{equation}\label{eqn:cc-1}
\mathbf{U}_{1}^{T} \mathbf{C}_2  \mathbf{V}_{1} = \mathbf{U}_{1}^{T}\mathbf{C}_1  \mathbf{V}_{1} = 0.
\end{equation}\par
Subtracting \eqref{KKT1} from the first equation  in KKT conditions \eqref{KKT2} and invoking  \eqref{eqn:SD-1} gives
\begin{equation}\label{KKT_2}
\mathbf{\Delta}_{2}= \frac{\mu_{1}-\mu_{2}}{2} \mathbf{C}_2 +\mathbf{\Delta}_{1}.
\end{equation}
Combining equations \eqref{eqn:cc-1} and \eqref{KKT_2} with $\mathbf{U}_{1}^{T} \mathbf{\Delta}_{1}\mathbf{V}_{1} =0$, we see
\begin{equation}
\mathbf{U}_{1}^{T} \mathbf{\Delta}_{2}\mathbf{V}_{1} =0.
\end{equation}
Thus we may give matrix $ \mathbf{\Delta}_{2}$ the following spectral representation:
\begin{equation}
\mathbf{\Delta}_{2} = \mathbf{U}_{1}\mathbf{U}_{1}^{T} \mathbf{\Delta}_{2}\mathbf{U}_{1}\mathbf{U}_{1}^{T} + \mathbf{V}_{1}\mathbf{V}_{1}^{T} \mathbf{\Delta}_{2}\mathbf{V}_{1}\mathbf{V}_{1}^{T}.
\end{equation}
\smallskip
\par
From equation \eqref{KKT_2}, we have $\mathbf{\Delta}_{2} \succ \mathbf{\Delta}_{1} $ and consequently
$$\mathbf{U}_{1}^{T} \mathbf{\Delta}_{2}\mathbf{U}_{1} \succ \mathbf{U}_{1}^{T} \mathbf{\Delta}_{1}\mathbf{U}_{1}  \succ 0.$$
On the other hand,
\begin{align}
& \mathbf{V}_{1}\mathbf{V}_{1}^{T} \mathbf{\Delta}_{2}\mathbf{V}_{1}\mathbf{V}_{1}^{T}  \nonumber \\
 = &\frac{\mu_{1}-\mu_{2}}{2} \mathbf{V}_{1}\mathbf{V}_{1}^{T}\mathbf{C}_2 \mathbf{V}_{1}\mathbf{V}_{1}^{T}  + \mathbf{V}_{1}\mathbf{V}_{1}^{T} \mathbf{\Delta}_{1}\mathbf{V}_{1}\mathbf{V}_{1}^{T} \nonumber \\
= & \sum_{n=n_{1}+1}^{m} \frac{\mu_{1}-\mu_{2}}{2} d_{n} \bm{e}_{n} \bm{e}_{n}^{T} +\left (\lambda d_{n}^{2} - \frac{\mu_{1}}{2} d_{n}\right ) \bm{e}_{n} \bm{e}_{n}^{T} \nonumber \\
= & \sum_{n=n_{1}+1}^{m} \left (\lambda d_{n}^{2}- \mu_{2}\frac{d_{n}}{2}\right )\bm{e}_{n} \bm{e}_{n}^{T}.
\end{align}
\par
\smallskip
\noindent Now we are at the same situation as treating equation \eqref{eqn:add-1}, and correspondingly the refined spectral representation of matrix ${\mathbf\Delta}_2$ can be obtained through a procedure similar to that for ${\mathbf\Delta}_1$. Here we may   divide the subspace spanned by the column vector of $\mathbf{V}_{1}$ into two orthogonal subspaces, according to the
sign of ${\mathbf\Delta}_{2}$'s  eigenvalues $\lambda d_{n}^{2} - \mu_{2} d_{n} /2$, $n=n_{1}+1,n_{1}+2, \ldots, m$. Specifically, we partition the matrix $\mathbf{V}_1$
into a $m \times (n_{2}-n_{1})$  matrix $\mathbf{W}_{1} \triangleq  \left( \bm{e}_{n_{1}+1}, \bm{e}_{n_{1}+2}, \ldots, \bm{e}_{n_{2}}\right)$ and a $m \times (m-n_{2})$
  matrix $\mathbf{V}_{2} \triangleq  \left( \bm{e}_{n_{2}+1}, \bm{e}_{n_{2}+2}, \ldots, \bm{e}_m\right)$, in which $n_2$ represents the critical number such that
$$\lambda d_{n}^{2} - \mu_{2} d_{n} /2>0,\ \   n_1<n\leq n_2$$
$$\lambda d_{n}^{2} - \mu_{2} d_{n} /2\le 0,\ \   n_2<n\leq m.$$
On the other hand, combining  $\mathbf{U}_1$ and $\mathbf{W}_1$ will form a new $m \times n_{2}$ matrix $\mathbf{U}_2 \triangleq  \left( \bm{e}_{1}, \bm{e}_{2}, \ldots, \bm{e}_{n_{2}}\right)$.
%
   It is straightforward to verify that
\begin{align}
&\mathbf{W}_{1}^{T} \mathbf{\Delta}_{2} \mathbf{W}_{1}\succ 0; \quad \mathbf{V}_{2}^{T} \mathbf{\Delta}_{2} \mathbf{V}_{2}\preceq 0; \quad \mathbf{W}_{1}^{T} \mathbf{\Delta}_{2} \mathbf{V}_{2}=0; \\
&\mathbf{W}_{1}^{T}\mathbf{C}_2 \mathbf {W}_{1} \succeq 0;\quad
 \mathbf{V}_{2}^{T} \mathbf{C}_2 \mathbf {V}_{2} \succeq 0;\quad
 \mathbf{W}_{1}^{T}\mathbf{C}_2 \mathbf {V}_{2} = 0.
\end{align}
We can further refine the spectral decomposition form of $\mathbf{\Delta}_{2}$ and $\mathbf{C}_2$:
\begin{align}
&\mathbf{\Delta}_2= \mathbf{U}_{1}\mathbf{U}_{1}^{T}\mathbf{\Delta}_2 \mathbf{U}_{1}\mathbf{U}_{1}^{T} + \mathbf{V}_{1}\mathbf{V}_{1}^{T} \mathbf{\Delta}_2 \mathbf{V}_{1}\mathbf{V}_{1}^{T} \nonumber \\
=& \mathbf{U}_{1}\mathbf{U}_{1}^{T}\mathbf{\Delta}_2 \mathbf{U}_{1}\mathbf{U}_{1}^{T}+ \mathbf{W}_{1}\mathbf{W}_{1}^{T}\mathbf{\Delta}_2 \mathbf{W}_{1}\mathbf{W}_{1}^{T} \nonumber \\
& \quad + \mathbf{V}_{2}\mathbf{V}_{2}^{T}\mathbf{\Delta}_2 \mathbf{V}_{2}\mathbf{V}_{2}^{T}
\end{align}
\begin{align}
&\mathbf{C}_{2}
 = \mathbf{U}_{2}\mathbf{U}_{2}^{T} \mathbf{C}_{2}  \mathbf{U}_{2}\mathbf{U}_{2}^{T} + \mathbf{V}_{2}\mathbf{V}_{2}^{T} \mathbf{C}_{2}  \mathbf{V}_{2}\mathbf{V}_{2}^{T} \nonumber \\
=& \mathbf{U}_{1}\mathbf{U}_{1}^{T} \mathbf{C}_{2}  \mathbf{U}_{1}\mathbf{U}_{1}^{T} + \mathbf{W}_{1}\mathbf{W}_{1}^{T} \mathbf{C}_{2}  \mathbf{W}_{1}\mathbf{W}_{1}^{T}+\mathbf{V}_{2}\mathbf{V}_{2}^{T} \mathbf{C}_{2}  \mathbf{V}_{2}\mathbf{V}_{2}^{T}
\end{align}

\par
\smallskip

Following the similar steps as in the derivation of \eqref{eqn:b1}, we obtain
\begin{equation}
\mathbf{B}_{2}^{*}\mathbf{V}_{2}= 0.
\end{equation}
\begin{equation}
\mathbf{V}_{2}^{T}\mathbf{C}_2\mathbf{V}_{2}
=  \mathbf{V}_{2}^{T}\mathbf{C}_3 \mathbf{V}_{2}.
\end{equation}
\par
Repeating this procedure $L$ times yields the following theorem.

\begin{theorem} \label{spectral}
In $\mathbb{R}^{m}$, there exist three sets of column orthogonal matrices\footnote{ One $m \times n $ dimensional $ (m \geq n )$ matrix $\mathbf{A}$ is called column orthogonal iff $\mathbf{A}^{T}\mathbf{A} = \mathbf{I}$.}
:
$\{ \mathbf{U}_{1}, \mathbf{U}_{2},\ldots, \mathbf{U}_{L}\}$, $\{ \mathbf{V}_{1}, \mathbf{V}_{2},\ldots, \mathbf{V}_{L}\}$, $\{ \mathbf{W}_{1}, \mathbf{W}_{2},\ldots, \mathbf{W}_{L-1}\}$,
 such that the following properties hold:
\begin{enumerate}
\item \label{con1}[Spectrum of $\mathbf{C}_{i}$]
%
\begin{align}
\mathbf{C}_{i} & = \mathbf{U}_{i}\mathbf{U}_{i}^{T} \mathbf{C}_{i}  \mathbf{U}_{i}\mathbf{U}_{i}^{T}  + \mathbf{V}_{i}\mathbf{V}_{i}^{T} \mathbf{C}_{i} \mathbf{V}_{i}\mathbf{V}_{i}^{T},  \; i=1,\ldots, L.
\end{align}

\item \label{con3}[Spectrum of $\mathbf{\Delta}_{i}$]
\begin{equation}
\mathbf{\Delta}_{1} = \mathbf{U}_{1}\mathbf{U}_{1}^{T} \mathbf{\Delta}_{1}\mathbf{U}_{1}\mathbf{U}_{1}^{T} + \mathbf{V}_{1}\mathbf{V}_{1}^{T} \mathbf{\Delta}_{1}\mathbf{V}_{1}\mathbf{V}_{1}^{T},
\end{equation}
\begin{align}
\mathbf{\Delta}_{i+1}
 & = \mathbf{U}_{i}\mathbf{U}_{i}^{T} \mathbf{\Delta}_{i+1}  \mathbf{U}_{i}\mathbf{U}_{i}^{T} + \mathbf{V}_{i}\mathbf{V}_{i}^{T} \mathbf{\Delta}_{i+1}  \mathbf{V}_{i}\mathbf{V}_{i}^{T} \nonumber \\
&= \mathbf{U}_{i}\mathbf{U}_{i}^{T} \mathbf{\Delta}_{i+1} \mathbf{U}_{i}\mathbf{U}_{i}^{T} + \mathbf{W}_{i}\mathbf{W}_{i}^{T}\mathbf{\Delta}_{i+1} \mathbf{W}_{i}\mathbf{W}_{i}^{T} \nonumber\\
 & \qquad \qquad+ \mathbf{V}_{i+1}\mathbf{V}_{i+1}^{T}\mathbf{\Delta}_{i+1} \mathbf{V}_{i+1}\mathbf{V}_{i+1}^{T}, \nonumber \\
& \qquad \qquad \qquad \qquad \qquad \qquad \qquad  i=1,\ldots, L-1.
\end{align}
\item \label{con4} [Positive/Negative definiteness]
\begin{align}
& \mathbf{U}_{i}^{T} \mathbf{\Delta}_{i} \mathbf{U}_{i} \succ 0, \qquad i=1, \ldots,L; \nonumber \\
& \mathbf{W}_{i}^{T} \mathbf{\Delta}_{i+1} \mathbf{W}_{i} \succ 0,  \qquad i=1, \ldots,L-1; \nonumber \\
& \mathbf{V}_{i}^{T}\mathbf{\Delta}_{i} \mathbf{V}_{i} \preceq 0,\qquad i=1, \ldots,L.
\end{align}
\item \label{con5} [Orthogonality]
For any $1 \leq i \leq L $,
\begin{equation}
\mathbf{B}_{i}^{*}\mathbf{V}_{i}=0.
\end{equation}

\end{enumerate}
\end{theorem}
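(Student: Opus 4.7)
The plan is to prove the theorem by induction on $i$, taking the explicit construction of $\mathbf{U}_1, \mathbf{V}_1$ and then $\mathbf{W}_1, \mathbf{V}_2, \mathbf{U}_2$ already spelled out in Section IV.B as the base case (and inductive seed), and then iterating the passage $(\mathbf{\Delta}_i, \mathbf{C}_i) \mapsto (\mathbf{\Delta}_{i+1}, \mathbf{C}_{i+1})$ for $i = 2, 3, \ldots, L-1$. In the base case, the joint eigendecomposition of $\mathbf{C}_1$ and $\mathbf{\Delta}_1 = \lambda \mathbf{C}_1^2 - (\mu_1/2)\mathbf{C}_1$ is automatic because $\mathbf{\Delta}_1$ is a polynomial in $\mathbf{C}_1$, and the sign of $\lambda d_n^2 - \mu_1 d_n/2$ naturally partitions the eigenbasis into the columns of $\mathbf{U}_1$ and $\mathbf{V}_1$; the identity $\mathbf{B}_1^* \mathbf{V}_1 = 0$ then comes from $\mathbf{V}_1^T \mathbf{\Delta}_1 \mathbf{V}_1 \preceq 0$, which forces $\mathbf{V}_1^T \mathbf{\Psi}_1 \mathbf{V}_1 \succ 0$, combined with the complementary slackness condition \eqref{KKT3}.

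For the inductive step, suppose $\{\mathbf{U}_j\}_{j=1}^{i}$, $\{\mathbf{V}_j\}_{j=1}^{i}$, $\{\mathbf{W}_j\}_{j=1}^{i-1}$ have been constructed so that properties \ref{con1}--\ref{con5} hold through index $i$. The key recursive identity, obtained by subtracting the $k=i$ case from the $k=i+1$ case of \eqref{KKT2}, is
\begin{equation}
\mathbf{\Delta}_{i+1} = \mathbf{\Delta}_{i} + \frac{\mu_{i} - \mu_{i+1}}{2}\mathbf{C}_{i+1}.
\end{equation}
To exploit this, I would first show that the column subspace of $\mathbf{V}_i$ is invariant under $\mathbf{C}_{i+1}$ and in fact spanned by eigenvectors of $\mathbf{C}_{i+1}$ sharing the same eigenvalues as those of $\mathbf{C}_i$ restricted to $\mathbf{V}_i$, mirroring the argument producing \eqref{pass} and \eqref{eqn:cc-1}. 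The engine is the resolvent-style identity $\mathbf{C}_{i+1} - \mathbf{C}_i = -\mathbf{C}_{i+1}\mathbf{B}_i^*\mathbf{C}_i$ (coming from $\mathbf{C}_i^{-1} = \mathbf{C}_{i+1}^{-1} + \mathbf{B}_i^*$), which, together with the inductive hypothesis $\mathbf{B}_i^* \mathbf{V}_i = 0$, yields $\mathbf{V}_i^T \mathbf{C}_{i+1} \mathbf{V}_i = \mathbf{V}_i^T \mathbf{C}_i \mathbf{V}_i$ and $\mathbf{U}_i^T \mathbf{C}_{i+1} \mathbf{V}_i = 0$.

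Having established this shared eigenstructure, I would restrict the recursive identity to the $\mathbf{V}_i$-subspace: by the inductive hypothesis $\mathbf{V}_i^T \mathbf{\Delta}_i \mathbf{V}_i$ is diagonal with entries $\lambda d_n^2 - \mu_i d_n / 2$ for $n > n_{i-1}$, so $\mathbf{V}_i^T \mathbf{\Delta}_{i+1} \mathbf{V}_i$ is diagonal in the same basis with entries $\lambda d_n^2 - \mu_{i+1} d_n / 2$. Splitting these by sign defines $\mathbf{W}_i$ (positive part) and $\mathbf{V}_{i+1}$ (non-positive part); setting $\mathbf{U}_{i+1} = [\mathbf{U}_i \mid \mathbf{W}_i]$ refines the decomposition, delivering properties \ref{con1}--\ref{con4} at level $i+1$. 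Property \ref{con5} is then read off exactly as in \eqref{eqn:b1}: $\mathbf{V}_{i+1}^T \mathbf{\Delta}_{i+1} \mathbf{V}_{i+1} \preceq 0$ implies $\mathbf{V}_{i+1}^T \mathbf{\Psi}_{i+1} \mathbf{V}_{i+1} \succ 0$, and complementary slackness $\mathbf{B}_{i+1}^* \mathbf{\Psi}_{i+1} = 0$ gives $\mathbf{B}_{i+1}^* \mathbf{V}_{i+1} = 0$.

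The main obstacle I anticipate is the bookkeeping step establishing that the eigenstructure of $\mathbf{C}_i$ on the $\mathbf{V}_i$-subspace actually transfers to $\mathbf{C}_{i+1}$: the recursion only involves $\mathbf{C}_{i+1}$, so one must verify the resolvent identity above and confirm that $\mathbf{U}_i$ remains orthogonal to $\mathbf{C}_{i+1}\mathbf{V}_i$ and that the eigenvalues on $\mathbf{V}_i$ are preserved precisely. This is the technical heart of the induction; once it is in place, the sign-partition argument and the complementary-slackness argument extend routinely, and the strictly nested subspaces $\mathbf{V}_1 \supseteq \mathbf{V}_2 \supseteq \cdots \supseteq \mathbf{V}_L$ together with the orthogonally-accumulating $\mathbf{U}_i = [\mathbf{U}_{i-1} \mid \mathbf{W}_{i-1}]$ deliver the full statement after $L$ iterations.
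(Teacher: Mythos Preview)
Your proposal is correct and follows essentially the same route as the paper: the paper carries out exactly the induction you describe, using the recursion $\mathbf{\Delta}_{i+1}=\mathbf{\Delta}_i+\tfrac{\mu_i-\mu_{i+1}}{2}\mathbf{C}_{i+1}$ from the KKT conditions, the resolvent-type identity $\mathbf{C}_{i+1}-\mathbf{C}_i=-\mathbf{C}_{i+1}\mathbf{B}_i^*\mathbf{C}_i$ combined with $\mathbf{B}_i^*\mathbf{V}_i=0$ to transfer the eigenstructure on the $\mathbf{V}_i$-block from $\mathbf{C}_i$ to $\mathbf{C}_{i+1}$, and then the sign-partition of $\lambda d_n^2-\mu_{i+1}d_n/2$ together with complementary slackness to produce $\mathbf{W}_i$, $\mathbf{V}_{i+1}$, and $\mathbf{B}_{i+1}^*\mathbf{V}_{i+1}=0$. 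The only slip is a harmless indexing typo (the entries of $\mathbf{V}_i^T\mathbf{\Delta}_i\mathbf{V}_i$ run over $n>n_i$, not $n>n_{i-1}$).
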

\smallskip
\par
%

\section{Converse}

In this section we establish a new extremal inequality, which is further leveraged to give a complete characterization of the rate region of the vector Gaussian CEO problem with the trace distortion constraint. However, it appears difficult to give a direct proof of this extremal inequality using the perturbation method. To overcome this difficulty, we project the mean square error matrix of the Berger-Tung scheme into its eigenspaces,
  and estimate  each term of the extremal inequality in its respective subspace. This approach is partly inspired by the work of Rahman and Wagner on the vector Gaussian one-help-one problem \cite{vector}. \par

\subsection{Extremal Inequality }
\begin{theorem} \label{thm2}
Let $\mathbf{B}_{1}^{*}, \ldots, \mathbf{B}_{L}^{*}$ be the optimal solution of $R^{BT}(d)$.
For any random variables $(M_{1}, \ldots, M_{L}, Q)$ jointly distributed with $(\mathbf{X},\mathbf{Y}_1,\ldots,\mathbf{Y}_L)$ such that
\begin{align}
p(\mathbf{x}, \mathbf{y}_{1}, \ldots, \mathbf{y}_{L}, m_{1}, \ldots, m_{L},q) \nonumber \\
=p(\mathbf{x}) p(q) \prod_{i=1}^{L}p(\mathbf{y}_{i}|\mathbf{x})p(m_{i}|\mathbf{y}_{i},q),
\end{align}
and
\begin{align}
& \sum_{n=1}^{m} \mathbb{E} \left[ (\mathbf{x}_{n}- \mathbb{E}\left[\mathbf{x}_{n}| M_{1}, \ldots, M_{L} \right])^{2}\right]   \nonumber \\
= & \tr \left\{ \cov   (\mathbf{X} | M_{1}, \ldots, M_{L}) \right \} \nonumber \\
\leq & d,
\end{align}
we have
\begin{align}  \label{exinqq}
&\sum_{i=1}^{L-1} (\mu_{i}- \mu_{i+1}) h(\mathbf{X}| M_{i+1}, \ldots, M_{L}) \nonumber \\
&\quad - \mu_{1} h(\mathbf{X}| M_{1}, \ldots, M_{L}) - \sum_{i=1}^{L} \mu_{i} h(\mathbf{Y}_{i} | \mathbf{X}, M_{i},Q) \nonumber \\
\geq & \sum_{i=1}^{L-1}\frac{\mu_{i} - \mu_{i+1}}{2} \log |(2\pi e) \mathbf{C}_{i+1}|  - \frac{\mu_{1}}{2} \log |(2\pi e)\mathbf{C}_{1}| \nonumber \\
& \quad \quad - \sum_{i=1}^{L}\frac{\mu_{i}}{2} \log |(2\pi e)( \mathbf{\Sigma}_{i} -  \mathbf{\Sigma}_{i} \mathbf{B}_{i}^{*} \mathbf{\Sigma}_{i} )|.
\end{align}
\end{theorem}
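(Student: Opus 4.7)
The plan is to combine a perturbation argument with the spectral projection supplied by Theorem~\ref{spectral}. I would first introduce, independently of everything else, a Gaussian perturbation $\mathbf{N}_\gamma \sim \mathcal{N}(\mathbf{0},\gamma\mathbf{I})$ and consider the smoothed variables $\mathbf{X}+\mathbf{N}_\gamma$ and $\mathbf{Y}_i+\mathbf{N}_\gamma$. Applying de Bruijn's identity (Lemma~\ref{lea3}) turns each differential entropy on the left-hand side of \eqref{exinqq} into a Fisher-information integral in $\gamma$, while the complementary identity (Lemma~\ref{lea2}) rewrites each Fisher matrix in terms of the associated MMSE matrix. Together with the matrix Fisher information inequality (Lemma~\ref{lea4}) and Cram\'er--Rao (Lemma~\ref{lea1}), this reduces the claim to a matrix-type inequality among the MMSE covariances $\cov(\mathbf{X}\,|\,M_i,\ldots,M_L,Q)$ and the Berger--Tung optima $\mathbf{C}_i$.

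The second step is where the trace-distortion hypothesis forces a departure from the Wang--Chen argument: one cannot expect $\cov(\mathbf{X}\,|\,M_1,\ldots,M_L)\preceq \mathbf{C}_1$, only the scalar bound $\tr\{\cov(\mathbf{X}\,|\,M_1,\ldots,M_L)\}\leq d = \tr(\mathbf{C}_1)$ is available (by hypothesis and Corollary~\ref{lamd}). I would therefore project the reduced matrix inequality into the orthogonal subspaces generated by $\mathbf{U}_i$, $\mathbf{W}_{i-1}$, $\mathbf{V}_i$ from Theorem~\ref{spectral}, and estimate each projection separately. On each $\mathbf{V}_i$-block, $\mathbf{B}_i^*\mathbf{V}_i=0$ and the passage identity $\mathbf{V}_i^T\mathbf{C}_i\mathbf{V}_i=\mathbf{V}_i^T\mathbf{C}_{i+1}\mathbf{V}_i$ (established inside the proof of Theorem~\ref{spectral}) cause the Berger--Tung-type terms to telescope with no loss. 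On each $\mathbf{U}_i$-block the positive-definiteness $\mathbf{U}_i^T\mathbf{\Delta}_i\mathbf{U}_i\succ 0$ from Theorem~\ref{spectral}, together with the KKT identity $\mathbf{\Delta}_1=\lambda\mathbf{C}_1^2-(\mu_1/2)\mathbf{C}_1$, tightly couples $\lambda$ to the eigenvalues of $\mathbf{C}_1$ along those directions, and it is precisely through this coupling that the scalar trace constraint is converted into the matrix bound needed to close the $\mathbf{U}_1$-block.

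The main obstacle is making the per-subspace estimates reassemble into the global inequality without loss. I would verify this by writing each $\log|(2\pi e)\mathbf{C}_i|$ on the right-hand side of \eqref{exinqq} as a sum of block log-determinants in the eigenbasis provided by Theorem~\ref{spectral}, matching each block to the appropriate conditional entropy on the left-hand side via the Lagrangian identities \eqref{KKT_2} and their higher-$i$ analogues. This is the analogue, in the spectral setting, of Rahman--Wagner's distortion projection, and it is the technical step I expect to be most delicate: one has to check that the positive contribution from the $\mathbf{U}_i$-blocks (where the trace constraint is binding) is exactly balanced by the $\mathbf{V}_i$-telescoping, so that no residual slack is introduced. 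Once the block-wise inequalities are assembled, integrating back in $\gamma$ and taking $\gamma\to 0$ with the help of Cram\'er--Rao (Lemma~\ref{lea1}) and Lemma~\ref{lea5} completes the argument.
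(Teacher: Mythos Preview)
Your high-level plan---perturbation, Fisher/MMSE identities, then subspace-by-subspace accounting via Theorem~\ref{spectral} and the KKT identities---is the same architecture as the paper. But your perturbation scheme is not the one that makes the argument close, and this is a genuine gap. An isotropic $\mathbf{N}_\gamma\sim\mathcal{N}(\mathbf{0},\gamma\mathbf{I})$ added uniformly to $\mathbf{X}$ and to each $\mathbf{Y}_i$ does not produce a functional whose endpoints are the two sides of \eqref{exinqq}, so there is nothing to ``integrate back'' to. The paper instead uses the Dembo covariance-preserving transform, with a \emph{different} Gaussian for each term whose covariance is dictated by the Berger--Tung optimizer: $\mathbf{X}_{i,\gamma}=\sqrt{1-\gamma}\,\mathbf{X}+\sqrt{\gamma}\,\mathbf{X}^G_{\{i,\ldots,L\}}$ with $\mathbf{X}^G_{\{i,\ldots,L\}}\sim\mathcal{N}(\mathbf{0},\mathbf{C}_i)$, and $\mathbf{Y}_{i,\gamma}=\sqrt{1-\gamma}\,\mathbf{Y}_i+\sqrt{\gamma}\,\mathbf{N}^G_i$ with $\mathbf{N}^G_i\sim\mathcal{N}(\mathbf{0},\mathbf{\Sigma}_i-\mathbf{\Sigma}_i\mathbf{B}_i^*\mathbf{\Sigma}_i)$. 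With this choice the functional $g(\gamma)$ equals the left side of \eqref{exinqq} at $\gamma=0$ and the right side at $\gamma=1$, and the whole proof reduces to showing $2(1-\gamma)g'(\gamma)\le 0$.

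There is a second structural point you are missing. The paper does \emph{not} perturb $h(\mathbf{Y}_i\mid\mathbf{X},M_i,Q)$ directly; it first splits this entropy along $\mathbf{U}_i$ and $(\mathbf{\Sigma}_i^{-1}-\mathbf{B}_i^*)^{-1}\mathbf{V}_i$ (with a matching block-determinant identity on the right-hand side), and then the two projected pieces are handled by different tools: the $\mathbf{U}_i$-piece via Lemma~\ref{lea5}, the $\mathbf{V}_i$-piece via data processing (Lemma~\ref{lea6}) together with $\mathbf{B}_i^*\mathbf{V}_i=0$. After computing $2(1-\gamma)g'(\gamma)$, the KKT relations are substituted and the result is organized into four pieces $I_1,\ldots,I_4$; the $\mathbf{V}$-telescoping you anticipate is the estimate $I_2\le I_5+I_6$, and the $\mathbf{U}$-blocks combine into $I_1+I_5\le I_7$ which cancels $I_3$ via $\mathbf{B}_i^*\mathbf{\Psi}_i=0$. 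The trace constraint is used only in the remaining piece $I_4=-2\lambda\,\tr\{\mathbf{C}_1^2(J(\mathbf{X}_{1,\gamma}\mid M_1,\ldots,M_L)-\mathbf{C}_1^{-1})\}$: diagonalize $\mathbf{C}_1$, apply scalar Cram\'er--Rao on each eigen-direction, and then a Jensen step with weights $d_n/d$ turns $\sum c_n\le d=\sum d_n$ into $I_4\le 0$. Your description of ``converting the scalar trace constraint into a matrix bound on the $\mathbf{U}_1$-block'' is not what happens; the trace bound is used as a scalar bound, via convexity of $x\mapsto 1/x$, not upgraded to a matrix ordering.
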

\smallskip

Note that
\begin{align}
 & h(\left[\mathbf{U}_{i},  \ (\mathbf{\Sigma}_{i}^{-1}- \mathbf{B}_{i}^{*})^{-1} \mathbf{V}_{i} \right]^{T}\mathbf{\Sigma}_{i}^{-1}\mathbf{Y}_{i} | \mathbf{X}, M_{i},Q) \nonumber \\
\leq &  h(\mathbf{U}_{i}^{T}\mathbf{\Sigma}_{i}^{-1}\mathbf{Y}_{i} | \mathbf{X}, M_{i},Q) \nonumber \\
& +  h(\mathbf{V}_{i}^{T}(\mathbf{\Sigma}_{i}^{-1}- \mathbf{B}_{i}^{*})^{-1}\mathbf{\Sigma}_{i}^{-1}\mathbf{Y}_{i} | \mathbf{X}, M_{i},Q).\nonumber
\end{align}
On the other hand, following by the matrix equality, 
\begin{align}
    & (2 \pi e)
   \begin{pmatrix}
  \mathbf{U}_{i}^{T}\\
   \mathbf{V}_{i}^{T} (\mathbf{\Sigma}_{i}^{-1} - \mathbf{B}_{i}^{*})^{-1}
   \end{pmatrix} \mathbf{\Sigma}_{i}^{-1} (\mathbf{\Sigma}_{i}-\mathbf{\Sigma}_{i}\mathbf{B}_{i}^{*}\mathbf{\Sigma}_{i})\mathbf{\Sigma}_{i}^{-1}
   \nonumber \\
   & \qquad \cdot  \begin{pmatrix}
   \mathbf{U}_{i} &
   (\mathbf{\Sigma}_{i}^{-1} - \mathbf{B}_{i}^{*})^{-1}\mathbf{V}_{i}
   \end{pmatrix} \nonumber \\
   = &
   \begin{pmatrix}
   (2 \pi e) \mathbf{U}_{i}^{T}(\mathbf{\Sigma}_{i}^{-1} - \mathbf{B}_{i}^{*})\mathbf{U}_{i} & 0 \\
   0 & (2 \pi e) \mathbf{V}_{i}^{T}(\mathbf{\Sigma}_{i}^{-1} - \mathbf{B}_{i}^{*})^{-1}\mathbf{V}_{i}
   \end{pmatrix}. \nonumber
   \end{align}

 By Taking logarithm for the determinant of matrix to both sides,
we have
\begin{align}
&\frac{1}{2}\log |(2\pi e)( \mathbf{\Sigma}_{i} -  \mathbf{\Sigma}_{i} \mathbf{B}_{i}^{*} \mathbf{\Sigma}_{i} )| +  \log |\mathbf{\Sigma}^{-1}|  \nonumber \\
& \quad + \log |\left[\mathbf{U}_{i},  \ (\mathbf{\Sigma}_{i}^{-1}- \mathbf{B}_{i}^{*})^{-1} \mathbf{V}_{i} \right]|\nonumber \\
= & \frac{1}{2}\log |(2\pi e)\mathbf{U}_{i}^{T}(\mathbf{\Sigma}_{i}^{-1}- \mathbf{B}_{i}^{*})\mathbf{U}_{i}| \nonumber \\
& \quad +\frac{1}{2}\log |(2\pi e)\mathbf{V}_{i}^{T}(\mathbf{\Sigma}_{i}^{-1}- \mathbf{B}_{i}^{*})^{-1}\mathbf{V}_{i}|. \nonumber
\end{align}
Therefore, it suffices to prove
\begin{align}  \label{exinq}
&\sum_{i=1}^{L-1} (\mu_{i}- \mu_{i+1}) h(\mathbf{X}| M_{i+1}, \ldots, M_{L}) \nonumber \\
&\quad - \mu_{1} h(\mathbf{X}| M_{1}, \ldots, M_{L}) \nonumber \\
&\quad - \sum_{i=1}^{L} \mu_{i} h(\mathbf{U}_{i}^{T}\mathbf{\Sigma}_{i}^{-1}\mathbf{Y}_{i} | \mathbf{X}, M_{i},Q) \nonumber \\
&\quad - \sum_{i=1}^{L} h(\mathbf{V}_{i}^{T}(\mathbf{\Sigma}_{i}^{-1}- \mathbf{B}_{i}^{*})^{-1}\mathbf{\Sigma}_{i}^{-1}\mathbf{Y}_{i} | \mathbf{X}, M_{i},Q) \nonumber \\
\geq & \sum_{i=1}^{L-1}\frac{\mu_{i} - \mu_{i+1}}{2} \log |(2\pi e) \mathbf{C}_{i+1}|- \frac{\mu_{1}}{2} \log |(2\pi e)\mathbf{C}_{1}|\nonumber \\
& -\sum_{i=1}^{L} \frac{\mu_{i}}{2}\log |(2\pi e)\mathbf{U}_{i}^{T}(\mathbf{\Sigma}_{i}^{-1}- \mathbf{B}_{i}^{*})\mathbf{U}_{i}| \nonumber \\
& -\sum_{i=1}^{L} \frac{\mu_{i}}{2}\log |(2\pi e)\mathbf{V}_{i}^{T}(\mathbf{\Sigma}_{i}^{-1}- \mathbf{B}_{i}^{*})^{-1}\mathbf{V}_{i}|
\end{align}

\smallskip

\begin{figure*}[!t]
\normalsize
\setcounter{MYtempeqncnt}{\value{equation}}
\addtocounter{equation}{6}
\setcounter{equation}{\value{equation}}
 \begin{align}\label{tmm}
& \frac{d}{d\gamma} h(\mathbf{U}_{i}^{T}\mathbf{\Sigma}_{i}^{-1}\mathbf{Y}_{i,\gamma}|\mathbf{X}, M_{i},Q) \nonumber \\
= & \frac{d}{d\gamma} \Big\{h(\sqrt{\frac{1-\gamma}{\gamma}} \mathbf{U}_{i}^{T}\mathbf{\Sigma}_{i}^{-1}\mathbf{Y}_{i} +  \mathbf{U}_{i}^{T}\mathbf{\Sigma}_{i}^{-1}\mathbf{N}_{i}^{G} | \mathbf{X}, M_{i},Q) + n_{i}  \log \gamma \Big\} \nonumber \\
= & \frac{1}{2} \tr \Big\{ \frac{1}{\gamma}\mathbf{I} - \frac{1}{\gamma^{2}}\big(\mathbf{U}_{i}^{T}( \mathbf{\Sigma}_{i}^{-1} -  \mathbf{B}_{i}^{*})\mathbf{U}_{i}\big)^{-1} \cov (\mathbf{U}_{i}^{T}\mathbf{\Sigma}_{i}^{-1}\mathbf{Y}_{i} | \sqrt{\frac{1-\gamma}{\gamma}} \mathbf{U}_{i}^{T}\mathbf{\Sigma}_{i}^{-1}\mathbf{Y}_{i} +  \mathbf{U}_{i}^{T}\mathbf{\Sigma}_{i}^{-1}\mathbf{N}_{i}^{G},  \mathbf{X}, M_{i},Q)\Big \} \nonumber \\
= & \frac{1}{2} \tr \Big\{ \frac{1}{\gamma}\mathbf{I} - \frac{1}{\gamma^{2}(1-\gamma)}\big(\mathbf{U}_{i}^{T}( \mathbf{\Sigma}_{i}^{-1} -  \mathbf{B}_{i}^{*})\mathbf{U}_{i}\big)^{-1} \cov ( \sqrt{1-\gamma} \mathbf{U}_{i}^{T}\mathbf{\Sigma}_{i}^{-1}\mathbf{Y}_{i} | \sqrt{1-\gamma} \mathbf{U}_{i}^{T}\mathbf{\Sigma}_{i}^{-1}\mathbf{Y}_{i} +  \sqrt {\gamma }\mathbf{U}_{i}^{T}\mathbf{\Sigma}_{i}^{-1}\mathbf{N}_{i}^{G},  \mathbf{X}, M_{i},Q)\Big \} \nonumber \\
\overset{(a)}{\geq}  & \frac{1}{2} \tr \Big\{ \frac{1}{\gamma}\mathbf{I} - \frac{1}{\gamma^{2}(1-\gamma)}\big(\mathbf{U}_{i}^{T}( \mathbf{\Sigma}_{i}^{-1} -  \mathbf{B}_{i}^{*} )\mathbf{U}_{i}\big)^{-1}  \big( \gamma^{2}\cov (\sqrt{1-\gamma}\mathbf{U}_{i}^{T}\mathbf{\Sigma}_{i}^{-1}\mathbf{Y}_{i} | \mathbf{X}, M_{i}, Q)  +   \gamma(1-\gamma)^{2} \mathbf{U}_{i}^{T}( \mathbf{\Sigma}_{i}^{-1} -  \mathbf{B}_{i}^{*} )\mathbf{U}_{i}\big)\Big \}.\end{align}
\setcounter{equation}{\value{MYtempeqncnt}}
\hrulefill
\vspace*{4pt}
\end{figure*}

\begin{figure*}[!t]
\normalsize
\setcounter{MYtempeqncnt}{\value{equation}}
\addtocounter{equation}{7}
\setcounter{equation}{\value{equation}}
 \begin{align}\label{tmm2}
& \frac{d}{d\gamma} h(\mathbf{V}_{i}^{T}(\mathbf{\Sigma}_{i}^{-1} -\mathbf{B}_{i}^{*})^{-1}\mathbf{\Sigma}_{i}^{-1}\mathbf{Y}_{i,\gamma}|\mathbf{X}, M_{i},Q) \nonumber \\
= & \frac{d}{d\gamma} \Big\{h(\mathbf{V}_{i}^{T}(\mathbf{\Sigma}_{i}^{-1} -\mathbf{B}_{i}^{*})^{-1}\mathbf{\Sigma}_{i}^{-1} \mathbf{Y}_{i}+    \sqrt{\frac{\gamma}{1-\gamma}}\mathbf{V}_{i}^{T}(\mathbf{\Sigma}_{i}^{-1} -\mathbf{B}_{i}^{*})^{-1}\mathbf{\Sigma}_{i}^{-1}\mathbf{N}_{i}^{G} | \mathbf{X}, M_{i},Q) + (n-n_{i})  \log \gamma \Big\} \nonumber \\
= & \frac{1}{2} \tr \Big\{\frac{1}{(1-\gamma)^{2}}J(\mathbf{V}_{i}^{T}(\mathbf{\Sigma}_{i}^{-1} -\mathbf{B}_{i}^{*})^{-1} \mathbf{\Sigma}_{i}^{-1}\mathbf{Y}_{i}+    \sqrt{\frac{\gamma}{1-\gamma}}\mathbf{V}_{i}^{T}(\mathbf{\Sigma}_{i}^{-1} -\mathbf{B}_{i}^{*})^{-1}\mathbf{\Sigma}_{i}^{-1}\mathbf{N}_{i}^{G} | \mathbf{X}, M_{i},Q)\mathbf{V}_{i}^{T}(\mathbf{\Sigma}_{i}^{-1} -\mathbf{B}_{i}^{*})^{-1}\mathbf{V}_{i}- \frac{1}{1-\gamma}\mathbf{I} \Big \} \nonumber \\
\overset{(a)}{\geq} & \frac{1}{2} \tr \Big\{\frac{1}{(1-\gamma)^{2}}J(\mathbf{V}_{i}^{T}(\mathbf{\Sigma}_{i}^{-1} -\mathbf{B}_{i}^{*})^{-1}\mathbf{\Sigma}_{i}^{-1} \mathbf{Y}_{i}+    \sqrt{\frac{\gamma}{1-\gamma}}\mathbf{V}_{i}^{T}(\mathbf{\Sigma}_{i}^{-1} -\mathbf{B}_{i}^{*})^{-1}\mathbf{\Sigma}_{i}^{-1}\mathbf{N}_{i}^{G} | \mathbf{X})\mathbf{V}_{i}^{T}(\mathbf{\Sigma}_{i}^{-1} -\mathbf{B}_{i}^{*})^{-1}\mathbf{V}_{i}- \frac{1}{1-\gamma}\mathbf{I} \Big \} \nonumber \\
\overset{(b)}{=} & \frac{1}{2(1-\gamma)} \tr \Big\{  \mathbf{V}_{i}^{T}  \big( (1-\gamma)(\mathbf{\Sigma}_{i}^{-1} -\mathbf{B}_{i}^{*}) \mathbf{\Sigma}_{i} (\mathbf{\Sigma}_{i}^{-1} -\mathbf{B}_{i}^{*}) + \gamma (\mathbf{\Sigma}_{i}^{-1} -\mathbf{B}_{i}^{*})\big)\mathbf{V}_{i}\mathbf{V}_{i}^{T}(\mathbf{\Sigma}_{i}^{-1} -\mathbf{B}_{i}^{*})^{-1}\mathbf{V}_{i} - \mathbf{I}\Big\} \nonumber \\
\overset{(c)}{=}&  \frac{1}{2(1-\gamma)} \tr \Big\{ \mathbf{V}_{i}^{T}(\mathbf{\Sigma}_{i}^{-1} -\mathbf{B}_{i}^{*})\mathbf{V}_{i} \mathbf{V}_{i}^{T}(\mathbf{\Sigma}_{i}^{-1} -\mathbf{B}_{i}^{*})^{-1}\mathbf{V}_{i}- \mathbf{V}_{i}^{T}\mathbf{V}_{i}\Big\}
\end{align}
\setcounter{equation}{\value{MYtempeqncnt}}
\hrulefill
\vspace*{4pt}
\end{figure*}

To the end of proving inequality \eqref{exinq}, we define $2L$ mutually independent zero mean Gaussian distributed random vectors
 $\mathbf{X}_{\{1,\ldots, L\}}^{G}, \mathbf{X}_{\{2,\ldots, L\}}^{G}, \ldots, \mathbf{X}_{\{L\}}^{G}$ and $\mathbf{N}_{1}^{G}, \mathbf{N}_{2}^{G}, \ldots, \mathbf{N}_{L}^{G}$,
which are independent of  $(\mathbf{X}, \mathbf{Y}_{1}, \ldots, \mathbf{Y}_{L}, M_{1}, \ldots, M_{L},Q)$. Here their distributions are
\begin{align}
 &\mathbf{X}_{\{i, \ldots, L\}}^{G} \sim \mathcal{N}\left(0,  ( \mathbf{K}^{-1}+ \mathbf{B}_{i}^{*} + \cdots + \mathbf{B}_{L}^{*})^{-1}\right), & i=1, \ldots, L;\nonumber \\
 &\mathbf{N}_{i}^{G} \sim  \mathcal{N}\left(0, ( \mathbf{\Sigma}_{i} -  \mathbf{\Sigma}_{i} \mathbf{B}_{i}^{*} \mathbf{\Sigma}_{i} )\right), & i=1, \ldots, L.\nonumber
\end{align}
Following \cite{exinq},\cite{my}, we use the {\it covariance preserved transform} proposed by Dembo \emph{et al.} in \cite{Dembo}. Specifically, for any $\gamma \in (0,1)$, define
\begin{align}\label{eq:dembo}
& \mathbf{X}_{i,\gamma} = \sqrt{1-\gamma} \mathbf{X} + \sqrt{\gamma} \mathbf{X}_{\{i,\ldots,L\}}^{G},      &i=1,\ldots,L; \nonumber \\
& \mathbf{Y}_{i,\gamma} = \sqrt{1-\gamma} \mathbf{Y}_{i} + \sqrt{\gamma} \mathbf{N}_{i}^{G},      &i=1,\ldots,L.
\end{align}
\smallskip
Consider the functional
\begin{align}
g(\gamma) = &\sum_{i=1}^{L-1} (\mu_{i}- \mu_{i+1}) h(\mathbf{X}_{i+1,\gamma}| M_{i+1}, \ldots, M_{L}) \nonumber \\
&\quad - \mu_{1} h(\mathbf{X}_{1,\gamma}| M_{1}, \ldots, M_{L}) \nonumber \\
&\quad - \sum_{i=1}^{L} \mu_{i} h(\mathbf{U}_{i}^{T}\mathbf{\Sigma_{i}}^{-1}\mathbf{Y}_{i,\gamma} | \mathbf{X}, M_{i},Q) \nonumber \\
&\quad - \sum_{i=1}^{L} \mu_{i} h(\mathbf{V}_{i}^{T}(\mathbf{\Sigma}_{i}^{-1}- \mathbf{B}_{i}^{*})^{-1}\mathbf{\Sigma}_{i}^{-1}{\mathbf{Y}}_{i,\gamma} | \mathbf{X}, M_{i},Q). \nonumber
\end{align}
The following lemma is needed for evaluating the derivative of $g(\gamma)$ with respect to $\gamma$.
\begin{lemma}\label{der}
For the afore-defined  $\mathbf{X}_{i,\gamma}$ and $\mathbf{Y}_{i,\gamma}$, we have
\begin{enumerate}
\item
\begin{align}
&2(1-\gamma) \frac{d}{d\gamma} h(\mathbf{X}_{i,\gamma}| M_{i}, \ldots, M_{L}) \nonumber \\
=& \tr \left\{ \mathbf{C}_{i} \left( J (\mathbf{X}_{i,\gamma}| M_{i}, \ldots, M_{L}) - \mathbf{C}_{i}^{-1} \right) \right\}
\end{align}
\item
\begin{align} \label{dep:2}
&2(1-\gamma) \frac{d}{d\gamma} h(\mathbf{U}_{i}^{T}\mathbf{\Sigma}_{i}^{-1}\mathbf{Y}_{i,\gamma}|\mathbf{X}, M_{i},Q) \nonumber \\
\geq & \tr \left\{\mathbf{U}_{i} \mathbf{U}_{i}^{T} -\mathbf{U}_{i} \mathbf{U}_{i}^{T}( \mathbf{\Sigma}_{i}^{-1} -  \mathbf{B}_{i}^{*} )^{-1}\mathbf{U}_{i} \mathbf{U}_{i}^{T} \right. \nonumber \\
 & \quad \left.\cdot \mathbf{\Sigma}_{i}^{-1}\cov (\mathbf{Y}_{i,\gamma}|\mathbf{X}, M_{i},Q) \mathbf{\Sigma}_{i}^{-1} \right\}
\end{align}
\item
\begin{align} \label{dep:3}
& 2(1-\gamma) \frac{d}{d\gamma} h(\mathbf{V}_{i}^{T}(\mathbf{\Sigma}_{i}^{-1}- \mathbf{B}_{i}^{*})^{-1}\mathbf{\Sigma}_{i}^{-1}{\mathbf{Y}}_{i,\gamma} | \mathbf{X}, M_{i},Q) \nonumber \\
\geq & \tr \left\{ \mathbf{V}_{i} \mathbf{V}_{i}^{T}(\mathbf{\Sigma}_{i}^{-1}- \mathbf{B}_{i}^{*})^{-1} \mathbf{V}_{i} \mathbf{V}_{i}^{T}(\mathbf{\Sigma}_{i}^{-1}- \mathbf{B}_{i}^{*})\right. \nonumber \\
& \quad \left.- \mathbf{V}_{i} \mathbf{V}_{i}^{T}\right\}
\end{align}

\end{enumerate}
\end{lemma}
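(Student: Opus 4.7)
The plan is to differentiate each of the three conditional differential entropies using de Bruijn's identity (Lemma \ref{lea3}) or its dual Corollary \ref{cor1}, after rewriting $\mathbf{X}_{i,\gamma}$ and $\mathbf{Y}_{i,\gamma}$ to expose a clean Gaussian-perturbation structure. Note that $\mathbf{X}^G_{\{i,\ldots,L\}}$ has covariance $\mathbf{C}_i$ and $\mathbf{N}_i^G$ has covariance $\mathbf{\Sigma}_i-\mathbf{\Sigma}_i\mathbf{B}_i^*\mathbf{\Sigma}_i$, so the projected noises $\mathbf{U}_i^T\mathbf{\Sigma}_i^{-1}\mathbf{N}_i^G$ and $\mathbf{V}_i^T(\mathbf{\Sigma}_i^{-1}-\mathbf{B}_i^*)^{-1}\mathbf{\Sigma}_i^{-1}\mathbf{N}_i^G$ have respective covariances $\mathbf{U}_i^T(\mathbf{\Sigma}_i^{-1}-\mathbf{B}_i^*)\mathbf{U}_i$ and $\mathbf{V}_i^T(\mathbf{\Sigma}_i^{-1}-\mathbf{B}_i^*)^{-1}\mathbf{V}_i$; each projection therefore reduces to a standard additive-Gaussian-noise channel whose entropy derivative is accessible to Fisher/MMSE identities.

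For part (1), I would factor $\mathbf{X}_{i,\gamma}=\sqrt{1-\gamma}\,\mathbf{Z}$ with $\mathbf{Z}=\mathbf{X}+\sqrt{\gamma/(1-\gamma)}\,\mathbf{X}^G_{\{i,\ldots,L\}}$, use the scaling rule $h(\sqrt{1-\gamma}\,\mathbf{Z}\,|\,\cdot)=h(\mathbf{Z}\,|\,\cdot)+\tfrac{m}{2}\log(1-\gamma)$, and apply de Bruijn's identity under the change of variable $\gamma'=\gamma/(1-\gamma)$ (whose Jacobian is $(1-\gamma)^{-2}$). Combining with the scaling $J(\mathbf{Z}\,|\,\cdot)=(1-\gamma)\,J(\mathbf{X}_{i,\gamma}\,|\,\cdot)$ and the identity $m=\tr\{\mathbf{C}_i\mathbf{C}_i^{-1}\}$ yields the asserted expression $\tr\{\mathbf{C}_i(J(\mathbf{X}_{i,\gamma}|M_i,\ldots,M_L)-\mathbf{C}_i^{-1})\}$.

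For part (2), I would rewrite $\mathbf{U}_i^T\mathbf{\Sigma}_i^{-1}\mathbf{Y}_{i,\gamma}$ in the form $\sqrt{(1-\gamma)/\gamma}\,\mathbf{U}_i^T\mathbf{\Sigma}_i^{-1}\mathbf{Y}_i+\mathbf{U}_i^T\mathbf{\Sigma}_i^{-1}\mathbf{N}_i^G$ up to an additive $\tfrac{n_i}{2}\log\gamma$ term, and differentiate via Corollary \ref{cor1}. This produces a conditional covariance $\cov(\mathbf{U}_i^T\mathbf{\Sigma}_i^{-1}\mathbf{Y}_i\,|\,\text{noisy version},\mathbf{X},M_i,Q)$, which I would upper-bound by $\gamma^{2}\cov(\mathbf{U}_i^T\mathbf{\Sigma}_i^{-1}\mathbf{Y}_i\,|\,\mathbf{X},M_i,Q)+\gamma(1-\gamma)^{2}\mathbf{U}_i^T(\mathbf{\Sigma}_i^{-1}-\mathbf{B}_i^*)\mathbf{U}_i$ using Lemma \ref{lea5}. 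Straightforward simplification, absorbing the noise-covariance factors against $\mathbf{U}_i^T(\mathbf{\Sigma}_i^{-1}-\mathbf{B}_i^*)\mathbf{U}_i$ brought in by the Jacobian, gives \eqref{dep:2}.

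For part (3), which I expect to be the main obstacle, the projection $\mathbf{V}_i^T(\mathbf{\Sigma}_i^{-1}-\mathbf{B}_i^*)^{-1}\mathbf{\Sigma}_i^{-1}$ is chosen precisely so that the KKT-induced orthogonality from Theorem \ref{spectral} can be exploited. I would scale to the form $\mathbf{V}_i^T(\mathbf{\Sigma}_i^{-1}-\mathbf{B}_i^*)^{-1}\mathbf{\Sigma}_i^{-1}\mathbf{Y}_i+\sqrt{\gamma/(1-\gamma)}\,\mathbf{V}_i^T(\mathbf{\Sigma}_i^{-1}-\mathbf{B}_i^*)^{-1}\mathbf{\Sigma}_i^{-1}\mathbf{N}_i^G$, apply de Bruijn in its Fisher-information form, then use the data processing inequality (Lemma \ref{lea6}) to drop the conditioning on $(M_i,Q)$ in the Fisher information term (yielding a lower bound in the right direction), and finally apply the complementary identity (Lemma \ref{lea2}) to convert that Fisher information into a conditional covariance of $\mathbf{X}$ given a noisy observation. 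The delicate step is the last one: expanding produces cross terms of the form $\mathbf{B}_i^*\mathbf{V}_i$, which vanish by property \ref{con5} of Theorem \ref{spectral}, leaving precisely $\tr\{\mathbf{V}_i\mathbf{V}_i^T(\mathbf{\Sigma}_i^{-1}-\mathbf{B}_i^*)^{-1}\mathbf{V}_i\mathbf{V}_i^T(\mathbf{\Sigma}_i^{-1}-\mathbf{B}_i^*)-\mathbf{V}_i\mathbf{V}_i^T\}$. Without the carefully tailored projection that aligns with the spectral decomposition, those cross terms would survive and block the clean bound \eqref{dep:3}, which is why the spectral analysis of Section IV is an indispensable input here.
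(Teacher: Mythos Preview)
Your plan matches the paper's proof essentially step for step: parts (1), (2), (3) use, respectively, de Bruijn with $\gamma'=\gamma/(1-\gamma)$; Corollary~\ref{cor1} with $\gamma'=(1-\gamma)/\gamma$ followed by Lemma~\ref{lea5}; and de Bruijn plus the Fisher data-processing inequality, then direct evaluation of the (now Gaussian) Fisher information and cancellation via $\mathbf{B}_i^*\mathbf{V}_i=0$. The one step you gloss over in part~(2) is that after Lemma~\ref{lea5} you are left with $\bigl(\mathbf{U}_i^{T}(\mathbf{\Sigma}_i^{-1}-\mathbf{B}_i^{*})\mathbf{U}_i\bigr)^{-1}$ (the inverse of the projected matrix), whereas the target \eqref{dep:2} contains $\mathbf{U}_i^{T}(\mathbf{\Sigma}_i^{-1}-\mathbf{B}_i^{*})^{-1}\mathbf{U}_i$ (the projection of the inverse); the paper closes this gap with the Schur-complement inequality $(\mathbf{P}^{T}\mathbf{A}\mathbf{P})^{-1}\preceq\mathbf{P}^{T}\mathbf{A}^{-1}\mathbf{P}$ for column-orthogonal $\mathbf{P}$, which is an actual inequality step rather than a ``straightforward simplification.''
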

\smallskip
\begin{proof}
\begin{enumerate}
\item
Using  de Bruijn's identity \eqref{lemma3} in Lemma \ref{lea3} and taking  $\gamma'=\gamma/(1-\gamma)$, we obtain
\begin{align}
& \frac{d}{d\gamma}  h(\mathbf{X}_{i,\gamma}| M_{i}, \ldots, M_{L}) \nonumber \\
= & \frac{d}{d\gamma} \Big\{h(\mathbf{X} + \sqrt{\frac{\gamma}{1-\gamma}} \mathbf{X}_{\{i,\ldots,L\}}^{G} | M_{i}, \ldots, M_{L}) \nonumber \\
{} & \quad + n \log (1-\gamma) \Big\} \nonumber \\
= & \frac{1}{2} \tr \Big\{ \frac{1}{(1-\gamma)^{2}}J(\mathbf{X} + \sqrt{\frac{\gamma}{1-\gamma}} \mathbf{X}_{\{i,\ldots,L\}}^{G} | M_{i}, \ldots, M_{L}) \nonumber \\
 & \quad \cdot \mathbf{C}_{i} - \frac{1}{1-\gamma}\mathbf{I} \Big\}.
\end{align}
Multiplying both sides with $2(1-\gamma)$ yields
\begin{align}
&2(1-\gamma) \frac{d}{d\gamma} h(\mathbf{X}_{i,\gamma}| M_{i}, \ldots, M_{L}) \nonumber \\
=& \tr \Big \{ J(\sqrt {1-\gamma}\mathbf{X} + \sqrt{\gamma} \mathbf{X}_{\{i,\ldots,L\}}^{G} | M_{i}, \ldots, M_{L}) 
\mathbf{C}_{i} - \mathbf{I} \Big\} \nonumber \\
=& \tr \left\{  J (\mathbf{X}_{i,\gamma}| M_{i}, \ldots, M_{L})\mathbf{C}_{i}- \mathbf{I}  \right\} \nonumber \\
= & \tr \left\{  \mathbf{C}_{i} \left( J (\mathbf{X}_{i,\gamma}| M_{i}, \ldots, M_{L})-\mathbf{C}_{i}^{-1})\right)\right\}.
\end{align}

\item
Using the alternative form of de Bruijn's identity \eqref{corollary1} in Corollary \ref{cor1} and taking  $\gamma'=(1-\gamma)/\gamma$, we obtain inequality \eqref{tmm} at the top of next page.
\addtocounter{equation}{1}

In \eqref{tmm}, inequality (a) follows from Lemma \ref{lea5}. Multiplying  both sides of \eqref{tmm} $2(1-\gamma)$  gives

\begin{align}
&2(1-\gamma) \frac{d}{d\gamma}h(\mathbf{U}_{i}^{T}\mathbf{\Sigma}_{i}^{-1}\mathbf{Y}_{i,\gamma}|\mathbf{X}, M_{i},Q) \nonumber \\
\geq & \tr \Big( \mathbf{I}-\big( \mathbf{U}_{i}^{T}( \mathbf{\Sigma}_{i}^{-1} -  \mathbf{B}_{i}^{*} )\mathbf{U}_{i} \big)^{-1}\nonumber \\
  \cov& (\sqrt{1-\gamma} \mathbf{U}_{i}^{T}\mathbf{\Sigma}_{i}^{-1}\mathbf{Y}_{i}+ \sqrt{\gamma}\mathbf{U}_{i}^{T} \mathbf{\Sigma}_{i}^{-1}\mathbf{N}_{i}^{G} | \mathbf{X}, M_{i},Q) \Big\} \nonumber \\
\overset{(a)}{\geq} & \tr \left\{ \mathbf{U}_{i}^{T}\mathbf{U}_{i} - \mathbf{U}_{i}^{T}( \mathbf{\Sigma}_{i}^{-1} -   \mathbf{B}_{i}^{*} )^{-1}\mathbf{U}_{i} \right. \nonumber \\
 & \quad \left.\cdot \mathbf{U}_{i}^{T}\mathbf{\Sigma}_{i}^{-1}\cov (\mathbf{Y}_{i,\gamma}|\mathbf{X}, M_{i},Q)\mathbf{\Sigma}_{i}^{-1}\mathbf{U}_{i}  \right\} \nonumber \\
 = & \tr \left\{\mathbf{U}_{i} \mathbf{U}_{i}^{T} -\mathbf{U}_{i} \mathbf{U}_{i}^{T}( \mathbf{\Sigma}_{i}^{-1} -   \mathbf{B}_{i}^{*}  )^{-1}\mathbf{U}_{i} \mathbf{U}_{i}^{T} \right. \nonumber \\
 & \quad \left.\cdot \mathbf{\Sigma}_{i}^{-1}\cov (\mathbf{Y}_{i,\gamma}|\mathbf{X}, M_{i},Q)\mathbf{\Sigma}_{i}^{-1}  \right\}, \nonumber
\end{align}
where (a) follows from the simple fact that for any positive definite matrix $\mathbf{A}$ and column orthogonal matrix $\mathbf{P}$,
$$\left(\mathbf{P}^{T} \mathbf{A} \mathbf{P}\right)^{-1} \preceq \mathbf{P}^{T} \mathbf{A}^{-1} \mathbf{P}.$$

\item
Again using de Bruijn's identity \eqref{lemma3} in Lemma \ref{lea3} and taking  $\gamma'=\gamma/(1-\gamma)$, we obtain
inequality \eqref{tmm2} at the top of next page. \par

\addtocounter{equation}{1}
In \eqref{tmm2}, (a) follows from the data processing inequality of Fisher information matrix in Lemma \ref{lea6}; (b) is due to the fact that $(\mathbf{Y}_{i}, \mathbf{X}_{i})$ and $\mathbf{N}_{i}^{G}$ are independently  distributed Gaussians; (c) is due to $ \mathbf{B}_{i}^{*}\mathbf{V}_{i}=0$ (see Proposition \ref{con4} in Theorem \ref{spectral}). By multiplying both sides of \eqref{tmm2} with $2(1-\gamma)$, and switching the matrices in the trace operator, we obtain \eqref{dep:3} as desired.

\end{enumerate}
\end{proof}

Since
\begin{align} \label{ttt}
&\tr \left\{ \mathbf{V}_{i} \mathbf{V}_{i}^{T}(\mathbf{\Sigma}_{i}^{-1}- \mathbf{B}_{i}^{*})^{-1} \mathbf{V}_{i} \mathbf{V}_{i}^{T}(\mathbf{\Sigma}_{i}^{-1}- \mathbf{B}_{i}^{*})- \mathbf{V}_{i} \mathbf{V}_{i}^{T}\right\}\nonumber \\
= & \tr \left\{\mathbf{U}_{i} \mathbf{U}_{i}^{T}(\mathbf{\Sigma}_{i}^{-1}- \mathbf{B}_{i}^{*})^{-1} \mathbf{U}_{i} \mathbf{U}_{i}^{T}(\mathbf{\Sigma}_{i}^{-1}- \mathbf{B}_{i}^{*})- \mathbf{U}_{i} \mathbf{U}_{i}^{T}\right\},
\end{align}
it follows by \eqref{ttt} and Lemma \ref{der} that
\begin{align} \label{bound1}
& 2(1-\gamma)g'(\gamma) \nonumber \\
\leq & \sum_{i=1}^{L-1}\tr \Big\{  (\mu_{i}-\mu_{i+1})\mathbf{C}_{i+1} \nonumber \\
 & \qquad \cdot \Big(J (\mathbf{X}_{i+1,\gamma}| M_{i+1}, \ldots, M_{L}) -  \mathbf{C}_{i+1}^{-1}\Big) \Big\} \nonumber \\
 & -  \tr \Big\{  \mu_{1}\mathbf{C}_{1}   \cdot \Big( J (\mathbf{X}_{1,\gamma}| M_{1}, \ldots, M_{L})- \mathbf{C}_{1}^{-1} \Big)\Big\} \nonumber \\
  &- \sum_{i=1}^{L} \tr \Big\{ \mu_{i} \mathbf{U}_{i} \mathbf{U}_{i}^{T}( \mathbf{\Sigma}_{i}^{-1} - \mathbf{B}_{i}^{*}  )^{-1}\mathbf{U}_{i} \mathbf{U}_{i}^{T}\nonumber  \\
 &\qquad \cdot \Big(  ( \mathbf{\Sigma}_{i}^{-1} -  \mathbf{B}_{i}^{*} ) - \mathbf{\Sigma}_{i}^{-1} \cov (\mathbf{Y}_{i,\gamma}|\mathbf{X}, M_{i},Q) \mathbf{\Sigma}_{i}^{-1} \Big)\Big\}.
\end{align}
\par
Notice that when $\gamma=0$, $g(\gamma)$ equals to l.h.s. of extremal inequality \eqref{exinq}; when  $\gamma=1$, $g(\gamma)$ equals to r.h.s. of extremal inequality \eqref{exinq}.  We have the following theorem regarding the derivative of $g(\gamma)$ with respect to $\gamma$, and its proof is given in the next section.
\begin{theorem}\label{thm:33}
We have
\begin{equation} \label{thm:3}
2(1-\gamma) g'(\gamma) \leq 0.
\end{equation}
\end{theorem}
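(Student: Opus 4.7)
The plan is to bound the right-hand side of \eqref{bound1} by zero, using the KKT conditions of Section~IV together with the spectral decomposition of Theorem~\ref{spectral}. Abbreviating $\mathbf{J}_i := J(\mathbf{X}_{i,\gamma}|M_i,\ldots,M_L)$, the weighted Fisher-information portion of \eqref{bound1} can first be rewritten by Abel summation as
\begin{equation*}
\sum_{i=1}^{L-1}\mu_i\tr\{\mathbf{C}_{i+1}\mathbf{J}_{i+1}-\mathbf{C}_i\mathbf{J}_i\} \,-\, \mu_L \tr\{\mathbf{C}_L \mathbf{J}_L\} \,+\, \mu_L m.
\end{equation*}
Iterating \eqref{KKT_2} gives $\mathbf{\Delta}_k = \lambda\mathbf{C}_1^2 - \frac{\mu_1}{2}\mathbf{C}_1 + \sum_{i=1}^{k-1}\frac{\mu_i-\mu_{i+1}}{2}\mathbf{C}_{i+1}$, and combined with the definition \eqref{eqn:SD-1} this lets each consecutive difference $\mathbf{C}_{i+1}-\mathbf{C}_i$ be expressed as a linear combination of $\mathbf{\Psi}_i$, $(\mathbf{\Sigma}_i^{-1}-\mathbf{B}_i^*)^{-1}$ and $\lambda\mathbf{C}_1^2$. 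After substitution, the Fisher-information sum is rewritten purely in terms of the Lagrange multipliers.

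For the $\mathbf{U}_i$-projected MSE piece in the last line of \eqref{bound1}, I would apply the complementary identity (Lemma~\ref{lea2}) to the Gaussian convolution \eqref{eq:dembo} to convert $\cov(\mathbf{Y}_{i,\gamma}|\mathbf{X},M_i,Q)$ into a Fisher-information expression aligned with the terms produced above. The $(\mathbf{\Sigma}_i^{-1}-\mathbf{B}_i^*)^{-1}$-pieces then cancel within each $\mathbf{U}_i$-projection because $\mathbf{U}_i^T\mathbf{\Delta}_i\mathbf{U}_i\succ 0$ (Proposition~\ref{con4} in Theorem~\ref{spectral}) pairs cleanly with the $\mathbf{U}_i$-restriction of $(\mathbf{\Sigma}_i^{-1}-\mathbf{B}_i^*)^{-1}$. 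On the complementary subspace spanned by $\mathbf{V}_i$, the orthogonality $\mathbf{B}_i^*\mathbf{V}_i=0$ (Proposition~\ref{con5} in Theorem~\ref{spectral}) together with the complementary slackness $\mathbf{B}_i^*\mathbf{\Psi}_i=0$ from \eqref{KKT3} implies that any surviving $\mathbf{\Psi}_i$-contribution sits on $\im(\mathbf{\Psi}_i)$ and pairs against a Fisher-information difference that is non-negative by the data-processing inequality (Lemma~\ref{lea6}) applied along the Markov chain $(M_i,\ldots,M_L)\to(M_{i+1},\ldots,M_L)\to\mathbf{X}_{i+1,\gamma}$.

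What remains after these cancellations is the boundary term $-2\lambda\tr\{\mathbf{C}_1^2\mathbf{J}_1\}$, together with the constants $\mu_L m$ and $\mu_L\tr\{\mathbf{C}_L\mathbf{J}_L\}$. Composing Cram\'er--Rao (Lemma~\ref{lea1}) with Lemma~\ref{lea5} gives
\begin{equation*}
\mathbf{J}_1^{-1}\preceq\cov(\mathbf{X}_{1,\gamma}|M_1,\ldots,M_L)\preceq(1-\gamma)\cov(\mathbf{X}|M_1,\ldots,M_L)+\gamma\mathbf{C}_1,
\end{equation*}
whose trace against $\lambda\mathbf{C}_1^2$, combined with $\tr(\mathbf{C}_1)=d$ from Corollary~\ref{lamd} and the hypothesized distortion constraint $\tr\{\cov(\mathbf{X}|M_1,\ldots,M_L)\}\leq d$, delivers the required non-positivity.

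The main obstacle will be the subspace bookkeeping in the middle step. Because the subspaces are nested as $\mathbf{V}_{i+1}\subseteq\mathbf{V}_i$ and $\mathbf{U}_i\subseteq\mathbf{U}_{i+1}$, while the positivity pattern of each $\mathbf{\Delta}_i$ is governed by a different threshold $\lambda d_n^2 - \mu_i d_n/2$, the full strength of Theorem~\ref{spectral} is needed to align, for every index $i$, the correct KKT equation within each eigenspace so that the cross terms vanish identically. This is exactly where the spectral-decomposition method contributes beyond the plain perturbation argument; a misaligned projection would leave residual cross terms with indeterminate sign and break the non-positivity of the bound.
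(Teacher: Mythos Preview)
Your opening Abel summation is a wrong turn. The KKT identities read $(\mu_i-\mu_{i+1})\mathbf{C}_{i+1}=2(\mathbf{\Delta}_{i+1}-\mathbf{\Delta}_i)$ and $\mu_1\mathbf{C}_1=2(\lambda\mathbf{C}_1^2-\mathbf{\Delta}_1)$; they replace the \emph{coefficients} $(\mu_i-\mu_{i+1})\mathbf{C}_{i+1}$ appearing in \eqref{bound1} as they stand, and say nothing useful about $\mathbf{C}_{i+1}-\mathbf{C}_i$. Your iterated formula for $\mathbf{\Delta}_k$ recovers an individual $\mathbf{C}_k$ only after dividing by $\mu_{k-1}-\mu_k$, which may vanish, and in any such consecutive difference the $\lambda\mathbf{C}_1^2$ term cancels, contrary to your claim. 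The paper substitutes the KKT identities directly into \eqref{bound1} \emph{before} any summation, producing the decomposition $I_1+I_2+I_3+I_4$ of \eqref{sub:1}--\eqref{sub:4}; the telescoping then takes place on the $\mathbf{\Delta}$-side, projected onto the nested subspaces $\mathbf{U}_i,\mathbf{W}_i,\mathbf{V}_i$ of Theorem~\ref{spectral}, not on the $\mathbf{C}$-side. After your summation by parts the terms $\mu_i\tr\{\mathbf{C}_{i+1}\mathbf{J}_{i+1}-\mathbf{C}_i\mathbf{J}_i\}$ no longer match that structure, and the substitution you describe does not go through.

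More seriously, you are missing the bridge from the Fisher-information block to the $\cov(\mathbf{Y}_{i,\gamma}|\mathbf{X},M_i,Q)$ block. This is not Lemma~\ref{lea2} applied to \eqref{eq:dembo}; it is the matrix inequality \eqref{comp}, which bounds $\mathbf{J}_{i+1}-\mathbf{C}_{i+1}^{-1}$ above by $\sum_{j>i}\bigl[(\mathbf{\Sigma}_j^{-1}-\mathbf{B}_j^*)-\mathbf{\Sigma}_j^{-1}\cov(\mathbf{Y}_{j,\gamma}|\mathbf{X},M_j,Q)\mathbf{\Sigma}_j^{-1}\bigr]$ via the Fisher-information inequality (Lemma~\ref{lea4}) and the argument of Appendix~\ref{app}. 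Paired with $\mathbf{U}_i^T\mathbf{\Delta}_i\mathbf{U}_i\succ 0$, this is what collapses $I_1+I_5$ into the form $I_7$ and then makes the cancellation $I_7+I_3\leq 0$ possible through $\mathbf{B}_i^*\mathbf{\Psi}_i=0$. Without \eqref{comp} the two halves of \eqref{bound1} never meet. Your handling of the boundary term $I_4$ is salvageable --- Cram\'er--Rao plus the trace constraint works once you supply the Cauchy--Schwarz step $\tr\{\mathbf{C}_1\mathbf{D}^{-1}\mathbf{C}_1\}\cdot\tr\{\mathbf{D}\}\geq(\tr\mathbf{C}_1)^2$, which is the matrix analogue of the scalar-Jensen argument \eqref{sca4} in the paper --- but the core of the proof lies in the two pieces above.
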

\smallskip
Note that \eqref{thm:3} implies the existence of  a monotonically decreasing path from $\gamma=0$ to $\gamma=1$, from which the desired extremal inequality follows immediately. 

\begin{figure*}[!t]
\normalsize
\setcounter{MYtempeqncnt}{\value{equation}}
\addtocounter{equation}{1}
\setcounter{equation}{\value{equation}}
\begin{subequations}
 \begin{align}
I_{1} = & \sum_{i=1}^{L-1}\tr \Big\{  2 \mathbf{U}_{i}\mathbf{U}_{i}^{T}(\mathbf{\Delta}_{i+1}-\mathbf{\Delta}_{i}) \mathbf{U}_{i}\mathbf{U}_{i}^{T}\Big(J (\mathbf{X}_{i+1,\gamma}| M_{i+1}, \ldots, M_{L}) - \mathbf{C}_{i+1}^{-1} \Big) \Big\} \nonumber \\
& + \tr \Big\{  2 \mathbf{U}_{1}\mathbf{U}_{1}^{T}\mathbf{\Delta}_{1} \mathbf{U}_{1}\mathbf{U}_{1}^{T}\Big( J (\mathbf{X}_{1,\gamma}| M_{1}, \ldots, M_{L})-\mathbf{C}_{1}^{-1} \Big)\Big\}; \label{sub:1} \\
 I_{2} = &  \sum_{i=1}^{L-1}\tr \Big\{  2 \mathbf{V}_{i}\mathbf{V}_{i}^{T}(\mathbf{\Delta}_{i+1}-\mathbf{\Delta}_{i}) \mathbf{V}_{i}\mathbf{V}_{i}^{T}\Big(J (\mathbf{X}_{i+1,\gamma}| M_{i+1}, \ldots, M_{L}) - \mathbf{C}_{i+1}^{-1}\Big) \Big\}\nonumber\\
 & +  \tr \Big\{  2 \mathbf{V}_{1}\mathbf{V}_{1}^{T}\mathbf{\Delta}_{1}\mathbf{V}_{1}\mathbf{V}_{1}^{T}\Big( J (\mathbf{X}_{1,\gamma}| M_{1}, \ldots, M_{L})-\mathbf{C}_{1}^{-1} \Big)\Big\}; \label{sub:2}\\
 I_{3} = & - \sum_{i=1}^{L} \tr \Big\{\mu_{i} \mathbf{U}_{i} \mathbf{U}_{i}^{T}( \mathbf{\Sigma}_{i}^{-1} - \mathbf{B}_{i}^{*}  )^{-1}\mathbf{U}_{i} \mathbf{U}_{i}^{T}\Big(  ( \mathbf{\Sigma}_{i}^{-1} -  \mathbf{B}_{i}^{*} ) - \mathbf{\Sigma}_{i}^{-1} \cov (\mathbf{Y}_{i,\gamma}|\mathbf{X}, M_{i},Q) \mathbf{\Sigma}_{i}^{-1} \Big)\Big\}. \label{sub:3} \\
 I_{4} = & - 2 \lambda\tr \Big\{  \mathbf{C}_{1}^2
 \Big( J (\mathbf{X}_{1,\gamma}| M_{1}, \ldots, M_{L})-\mathbf{C}_{1}^{-1} \Big)\Big\}.\label{sub:4}
 \end{align}
 \end{subequations}
\setcounter{equation}{\value{MYtempeqncnt}}
\hrulefill
\vspace*{4pt}
\end{figure*}

\subsection{Proof of Theorem \ref{thm:33}}
To prove Theorem \ref{thm:33}, we consider the right part of  \eqref{bound1}. Recall the KKT conditions \eqref{KKT1} and \eqref{KKT2}:
\begin{align}
\frac{\mu_{1}}{2} \mathbf{C}_{1} & =\lambda \mathbf{C}_{1}^2 -  \mathbf{\Delta}_{1}, \nonumber \\
\frac{\mu_{i}-\mu_{i+1}}{2}\mathbf{C}_{i+1}  & = \mathbf{\Delta}_{i+1} - \mathbf{\Delta}_{i}, \; i=1,\ldots,L-1. \nonumber
\end{align}
By using the spectral decomposition property \ref{con1} of  $\mathbf{C}_{i} =( \mathbf{K}^{-1}+ \sum_{j=i}^{L}\mathbf{B}_{j}^{*})^{-1},\  i=1,2,\ldots, L,$
in Theorem \ref{spectral}, we obtain that
\begin{equation} \label{bound2}
2(1-\gamma)g'(\gamma) \leq I_{1} + I_{2} + I_{3} + I_{4},
\end{equation}
where the terms in the r.h.s are defined at the top of this page.
\addtocounter{equation}{1}

\par

\smallskip
In what follows, we estimate the above four terms respectively, starting with $I_2$.\smallskip
\begin{lemma}
The term $I_{2}$ can be upper bounded by
\begin{equation}\label{bound3}
I_{2} \leq I_{5} + I_{6},
\end{equation}
where
\begin{subequations}
\begin{align}
 I_{5} = &  \sum_{i=1}^{L-1} \tr \Big\{  2 \mathbf{W}_{i}\mathbf{W}_{i}^{T}\mathbf{\Delta}_{i+1}  \mathbf{W}_{i}\mathbf{W}_{i}^{T}\nonumber \\
 & \qquad \cdot \Big(J (\mathbf{X}_{i+1,\gamma}| M_{i+1}, \ldots, M_{L}) - \mathbf{C}_{i+1}^{-1}\Big) \Big\} \label{subb:1} \\
 I_{6} = & \tr \Big\{  2 \mathbf{V}_{L}\mathbf{V}_{L}^{T}\mathbf{\Delta}_{L} \mathbf{V}_{L}\mathbf{V}_{L}^{T} \Big( J (\mathbf{X}_{L,\gamma}| M_{L})-\mathbf{C}_{L}^{-1}) \Big)\Big\}.\label{subb:2}
\end{align}
\end{subequations}
\end{lemma}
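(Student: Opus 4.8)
The plan is to derive $I_2\le I_5+I_6$ by a telescoping induction that walks down the decreasing chain of column spans of $\mathbf{V}_1,\mathbf{V}_2,\ldots,\mathbf{V}_L$ furnished by Theorem \ref{spectral}, peeling off one $\mathbf{W}_i$-block at each step. For brevity set $\mathbf{F}_j\triangleq J(\mathbf{X}_{j,\gamma}\mid M_j,\ldots,M_L)-\mathbf{C}_j^{-1}$ for $j=1,\ldots,L$; then, by \eqref{sub:2}, $I_2=2\tr\{\mathbf{V}_1\mathbf{V}_1^T\mathbf{\Delta}_1\mathbf{V}_1\mathbf{V}_1^T\mathbf{F}_1\}+\sum_{i=1}^{L-1}2\tr\{\mathbf{V}_i\mathbf{V}_i^T(\mathbf{\Delta}_{i+1}-\mathbf{\Delta}_i)\mathbf{V}_i\mathbf{V}_i^T\mathbf{F}_{i+1}\}$, while by \eqref{subb:1} the $i$-th summand of $I_5$ is $2\tr\{\mathbf{W}_i\mathbf{W}_i^T\mathbf{\Delta}_{i+1}\mathbf{W}_i\mathbf{W}_i^T\mathbf{F}_{i+1}\}$ and by \eqref{subb:2} $I_6=2\tr\{\mathbf{V}_L\mathbf{V}_L^T\mathbf{\Delta}_L\mathbf{V}_L\mathbf{V}_L^T\mathbf{F}_L\}$. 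I will prove, by induction on $k=0,1,\ldots,L-1$, that the stray term of $I_2$ together with its first $k$ summands is at most the first $k$ summands of $I_5$ plus the single residual $2\tr\{\mathbf{V}_{k+1}\mathbf{V}_{k+1}^T\mathbf{\Delta}_{k+1}\mathbf{V}_{k+1}\mathbf{V}_{k+1}^T\mathbf{F}_{k+1}\}$. The base case $k=0$ is an equality (the residual is precisely the stray term), and the case $k=L-1$ is exactly the assertion $I_2\le I_5+I_6$.

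The inductive step, from $k$ to $k+1$, consists of two moves. Move one: replace $\mathbf{F}_{k+1}$ by $\mathbf{F}_{k+2}$ inside the residual term. This is legitimate because the matrix part $\mathbf{V}_{k+1}\mathbf{V}_{k+1}^T\mathbf{\Delta}_{k+1}\mathbf{V}_{k+1}\mathbf{V}_{k+1}^T$ of the residual is negative semidefinite --- Theorem \ref{spectral} gives $\mathbf{V}_{k+1}^T\mathbf{\Delta}_{k+1}\mathbf{V}_{k+1}\preceq 0$ --- while $\mathbf{V}_{k+1}^T(\mathbf{F}_{k+1}-\mathbf{F}_{k+2})\mathbf{V}_{k+1}\succeq 0$ (established in the next paragraph), so the trace of their product is nonpositive; adding the $(k+1)$-th summand of $I_2$ to the modified residual, linearity collapses the two into $2\tr\{\mathbf{V}_{k+1}\mathbf{V}_{k+1}^T\mathbf{\Delta}_{k+2}\mathbf{V}_{k+1}\mathbf{V}_{k+1}^T\mathbf{F}_{k+2}\}$. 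Move two: split $\mathbf{V}_{k+1}\mathbf{V}_{k+1}^T=\mathbf{W}_{k+1}\mathbf{W}_{k+1}^T+\mathbf{V}_{k+2}\mathbf{V}_{k+2}^T$ and invoke the spectral form of $\mathbf{\Delta}_{k+2}$ in Theorem \ref{spectral}, namely that $\mathbf{\Delta}_{k+2}$ is block diagonal with respect to the orthogonal triple $\{\mathbf{U}_{k+1},\mathbf{W}_{k+1},\mathbf{V}_{k+2}\}$; conjugating by $\mathbf{V}_{k+1}\mathbf{V}_{k+1}^T$ then annihilates the $\mathbf{U}_{k+1}$-block and leaves $\mathbf{W}_{k+1}\mathbf{W}_{k+1}^T\mathbf{\Delta}_{k+2}\mathbf{W}_{k+1}\mathbf{W}_{k+1}^T+\mathbf{V}_{k+2}\mathbf{V}_{k+2}^T\mathbf{\Delta}_{k+2}\mathbf{V}_{k+2}\mathbf{V}_{k+2}^T$. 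The first piece is the $(k+1)$-th summand of $I_5$ and the second is the new residual term, which completes the induction. (The identity $\mathbf{V}_{k+1}\mathbf{V}_{k+1}^T=\mathbf{W}_{k+1}\mathbf{W}_{k+1}^T+\mathbf{V}_{k+2}\mathbf{V}_{k+2}^T$ and the vanishing of the $\mathbf{W}_{k+1}$-to-$\mathbf{V}_{k+2}$ cross block are immediate, since all of $\mathbf{U}_{k+1},\mathbf{W}_{k+1},\mathbf{V}_{k+2}$ are built from disjoint index ranges of one fixed orthonormal eigenbasis.)

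The single genuinely analytic ingredient --- and the step I expect to require the most care --- is the inequality $\mathbf{V}_i^T(\mathbf{F}_i-\mathbf{F}_{i+1})\mathbf{V}_i\succeq 0$ for $i=1,\ldots,L-1$. Its deterministic half is free: since $\mathbf{C}_i^{-1}-\mathbf{C}_{i+1}^{-1}=\mathbf{B}_i^*$ and $\mathbf{B}_i^*\mathbf{V}_i=0$ (the orthogonality clause of Theorem \ref{spectral}), the $\mathbf{C}^{-1}$-contributions cancel on the column span of $\mathbf{V}_i$, reducing the claim to the full matrix inequality $J(\mathbf{X}_{i,\gamma}\mid M_i,\ldots,M_L)\succeq J(\mathbf{X}_{i+1,\gamma}\mid M_{i+1},\ldots,M_L)$. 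I would obtain this in two steps. First, enlarging the conditioning set cannot decrease conditional Fisher information: by the data-processing inequality (Lemma \ref{lea6}) along $(M_{i+1},\ldots,M_L)\to(M_i,M_{i+1},\ldots,M_L)\to\mathbf{X}_{i,\gamma}$, we get $J(\mathbf{X}_{i,\gamma}\mid M_i,\ldots,M_L)\succeq J(\mathbf{X}_{i,\gamma}\mid M_{i+1},\ldots,M_L)$. Second, because $\mathbf{B}_i^*\succeq 0$ forces $\mathbf{C}_i\preceq\mathbf{C}_{i+1}$, the conditional law of $\mathbf{X}_{i+1,\gamma}$ given $(M_{i+1},\ldots,M_L)$ coincides with that of $\mathbf{X}_{i,\gamma}$ plus an independent Gaussian of covariance $\gamma(\mathbf{C}_{i+1}-\mathbf{C}_i)$, and adding independent Gaussian noise never increases conditional Fisher information --- a standard monotonicity derivable from the Fisher information inequality (Lemma \ref{lea4}) --- so $J(\mathbf{X}_{i,\gamma}\mid M_{i+1},\ldots,M_L)\succeq J(\mathbf{X}_{i+1,\gamma}\mid M_{i+1},\ldots,M_L)$; chaining the two proves the claim. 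Everything else is bookkeeping: the chief hazard is confusing which of the families $\{\mathbf{U}_i\},\{\mathbf{V}_i\},\{\mathbf{W}_i\}$ and which index a given projection refers to, so I would fix the relation $\mathbf{V}_i\mathbf{V}_i^T=\mathbf{W}_i\mathbf{W}_i^T+\mathbf{V}_{i+1}\mathbf{V}_{i+1}^T$ at the outset and apply it mechanically.
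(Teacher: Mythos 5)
Your proof is correct and rests on exactly the same ingredients as the paper's: the spectral splitting $\mathbf{V}_{i}\mathbf{V}_{i}^{T}\mathbf{\Delta}_{i+1}\mathbf{V}_{i}\mathbf{V}_{i}^{T}=\mathbf{W}_{i}\mathbf{W}_{i}^{T}\mathbf{\Delta}_{i+1}\mathbf{W}_{i}\mathbf{W}_{i}^{T}+\mathbf{V}_{i+1}\mathbf{V}_{i+1}^{T}\mathbf{\Delta}_{i+1}\mathbf{V}_{i+1}\mathbf{V}_{i+1}^{T}$, the negativity $\mathbf{V}_{i}^{T}\mathbf{\Delta}_{i}\mathbf{V}_{i}\preceq 0$, the cancellation $\mathbf{V}_{i}^{T}\mathbf{C}_{i}^{-1}\mathbf{V}_{i}=\mathbf{V}_{i}^{T}\mathbf{C}_{i+1}^{-1}\mathbf{V}_{i}$ from $\mathbf{B}_{i}^{*}\mathbf{V}_{i}=0$, and the Fisher-information monotonicity $J(\mathbf{X}_{i+1,\gamma}\mid M_{i+1},\ldots,M_{L})\preceq J(\mathbf{X}_{i,\gamma}\mid M_{i},\ldots,M_{L})$. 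The only difference is presentational --- you package the rearrangement as a telescoping induction, whereas the paper performs the sum-reindexing in one step and bounds the resulting error sum by zero --- and your two-step justification of the Fisher monotonicity is if anything slightly more careful than the paper's single invocation of the data processing inequality.
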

\begin{proof}
By Proposition \ref{con3} in Theorem \ref{spectral}:
\begin{align}
\mathbf{V}_{i}\mathbf{V}_{i}^{T} \mathbf{\Delta}_{i+1}  \mathbf{V}_{i}\mathbf{V}_{i}^{T}\qquad\qquad\qquad\qquad\qquad\qquad\qquad \nonumber \\ =\mathbf{W}_{i}\mathbf{W}_{i}^{T} \mathbf{\Delta}_{i+1}  \mathbf{W}_{i}\mathbf{W}_{i}^{T}+  \mathbf{V}_{i+1}\mathbf{V}_{i+1}^{T} \mathbf{\Delta}_{i+1}  \mathbf{V}_{i+1}\mathbf{V}_{i+1}^{T}, \nonumber \\
i=1,\ldots,L. \nonumber
\end{align}
we can rewrite $I_{2}$ as follows:
\begin{subequations}
\begin{align}
& I_{2} \nonumber \\
 = &  \sum_{i=1}^{L-1} \tr \Big\{  2 \mathbf{W}_{i}\mathbf{W}_{i}^{T}\mathbf{\Delta}_{i+1}  \mathbf{W}_{i}\mathbf{W}_{i}^{T}\nonumber \\
 & \qquad \cdot \Big(J (\mathbf{X}_{i+1,\gamma}| M_{i+1}, \ldots, M_{L}) - \mathbf{C}_{i+1}^{-1}\Big) \Big\} \label{quu:2} \\
  &+ \tr \Big\{  2 \mathbf{V}_{L}\mathbf{V}_{L}^{T}\mathbf{\Delta}_{L} \mathbf{V}_{L}\mathbf{V}_{L}^{T} \Big( J (\mathbf{X}_{L,\gamma}| M_{L})-\mathbf{C}_{L}^{-1}) \Big)\Big\}\label{quu:3}\\
& +  \sum_{i=1}^{L-1} \tr \Big\{  2 \mathbf{V}_{i}\mathbf{V}_{i}^{T}\mathbf{\Delta}_{i}  \mathbf{V}_{i}\mathbf{V}_{i}^{T}
  \Big(J (\mathbf{X}_{i,\gamma}| M_{i}, \ldots, M_{L}) - \mathbf{C}_{i}^{-1} \nonumber \\
 & \qquad -J (\mathbf{X}_{i+1,\gamma}| M_{i+1}, \ldots, M_{L}) + \mathbf{C}_{i+1}^{-1}\Big) \Big\} \label{quu:1}\\
 \leq & I_{5} + I_{6}, \nonumber
\end{align}
\end{subequations}
where the last inequality is because \eqref{quu:1} is upper bounded by $0$ as shown below. \par
By definition \eqref{eq:dembo},
\begin{align}
\mathbf{X}_{i,\gamma} &= \sqrt{1-\gamma} \mathbf{X} + \sqrt{\gamma} \mathbf{X}_{\{i,\ldots,L\}}^{G}, \nonumber \\
\mathbf{X}_{i+1,\gamma} &= \sqrt{1-\gamma} \mathbf{X} + \sqrt{\gamma} \mathbf{X}_{\{i+1,\ldots,L\}}^{G}, \nonumber
\end{align}
 where the covariance matrices of $\mathbf{X}_{\{i,\ldots,L\}}^{G}$ and $\mathbf{X}_{\{i+1,\ldots,L\}}^{G}$ are $\mathbf{C}_i=( \mathbf{K}^{-1}+ \sum_{j=i}^{L}\mathbf{B}_{j}^{*})^{-1}$
and $\mathbf{C}_{i+1}=( \mathbf{K}^{-1}+ \sum_{j=i+1}^{L}\mathbf{B}_{j}^{*})^{-1}$ respectively.
In view of the positive semidefinite partial order $$ \mathbf{C}_{i}  \preceq  \mathbf{C}_{i+1},$$
we can assume that
$$\mathbf{X}_{i+1,\gamma} \leftrightarrow \mathbf{X}_{i,\gamma}  \leftrightarrow (M_{i}, M_{i+1},\ldots, M_{L})  \leftrightarrow(M_{i+1}, \ldots, M_{L})$$
form a Markov chain.
Thus by the data processing inequality in Lemma \ref{lea6}, we have
\begin{equation}\label{ge1}
J(\mathbf{X}_{i+1,\gamma} | M_{i+1}, \ldots, M_{L}) \preceq J(\mathbf{X}_{i,\gamma} | M_{i},M_{i+1}, \ldots, M_{L}).
\end{equation}
On the other hand,  $\mathbf{B}_{i}^{*} \mathbf{V}_{i} = 0$ (Proposition \ref{con5} in Theorem \ref{spectral}) yields that
\begin{equation} \label{ge2}
\mathbf{V}_{i}^{T}\mathbf{C}_{i}^{-1} \mathbf{V}_{i}=\mathbf{V}_{i}^{T}\mathbf{C}_{i+1}^{-1} \mathbf{V}_{i},
\end{equation}
and Proposition \ref{con4} in Theorem \ref{spectral} implies that
\begin{equation} \label{ge3}
\mathbf{V}_{i}\mathbf{V}_{i}^{T}\mathbf{\Delta}_{i}  \mathbf{V}_{i}\mathbf{V}_{i}^{T} \preceq 0.
\end{equation}
Finally, combining \eqref{ge1}, \eqref{ge2} and \eqref{ge3} gives the upper bound \eqref{bound3}.
\par
\end{proof}
\smallskip

Substituting the upper bound \eqref{bound3} into \eqref{bound2} yields
\begin{equation} \label{step2}
2(1-\gamma)g'(\gamma) \leq I_{1}  + I_{5} + I_{6} + I_{3} + I_{4}.
\end{equation}
We now upper bound the first two terms in r.h.s of \eqref{step2}.


\begin{lemma}
For the terms $I_{1}$ and $I_{5}$,
\begin{equation} \label{1st}
I_{1} + I_{5} \leq I_{7},
\end{equation}
where
\begin{align}\label{b_1st}
 I_{7} = & \sum_{i=1}^{L} \tr \Big\{  2 \mathbf{U}_{i}\mathbf{U}_{i}^{T}\mathbf{\Delta}_{i} \mathbf{U}_{i}\mathbf{U}_{i}^{T} \nonumber \\
&\quad \cdot \Big(  ( \mathbf{\Sigma}_{i}^{-1} -  \mathbf{B}_{i}^{*} ) - \mathbf{\Sigma}_{i}^{-1} \cov (\mathbf{Y}_{i,\gamma}|\mathbf{X}, M_{i},Q) \mathbf{\Sigma}_{i}^{-1} \Big)\Big\}.
\end{align}
\end{lemma}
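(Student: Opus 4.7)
\emph{Strategy.} My plan is to combine $I_1$ and $I_5$ into a single telescoping sum over the nested eigenspace projections $P_i \triangleq \mathbf{U}_i\mathbf{U}_i^T$, and then match the resulting $L$ terms with the $L$ terms of $I_7$ one by one. Set $T_i \triangleq J(\mathbf{X}_{i,\gamma}\mid M_i,\ldots,M_L) - \mathbf{C}_i^{-1}$. The construction of the basis in Section IV.B gives $\mathbf{U}_{i+1}\mathbf{U}_{i+1}^T = P_i + \mathbf{W}_i\mathbf{W}_i^T$, and the orthogonality $\mathbf{U}_i^T \mathbf{\Delta}_{i+1}\mathbf{W}_i = 0$ (obtained en route to Theorem \ref{spectral}) together with Property \ref{con3} yields
\begin{equation*}
P_{i+1}\mathbf{\Delta}_{i+1}P_{i+1} = P_i\mathbf{\Delta}_{i+1}P_i + \mathbf{W}_i\mathbf{W}_i^T\mathbf{\Delta}_{i+1}\mathbf{W}_i\mathbf{W}_i^T.
\end{equation*}
Rearranging produces the key algebraic identity
\begin{equation*}
P_i(\mathbf{\Delta}_{i+1}-\mathbf{\Delta}_i)P_i + \mathbf{W}_i\mathbf{W}_i^T\mathbf{\Delta}_{i+1}\mathbf{W}_i\mathbf{W}_i^T = P_{i+1}\mathbf{\Delta}_{i+1}P_{i+1} - P_i\mathbf{\Delta}_iP_i,
\end{equation*}
which is precisely the fragment of the summand of $I_1+I_5$ attached to $T_{i+1}$.

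Substituting this identity into the definitions of $I_1$ and $I_5$ and then performing Abel summation against the factors $T_{i+1}$ recasts their sum as
\begin{equation*}
I_1 + I_5 = 2\tr\{P_L\mathbf{\Delta}_LP_L\,T_L\} + 2\sum_{i=1}^{L-1}\tr\{P_i\mathbf{\Delta}_iP_i\,(T_i - T_{i+1})\}.
\end{equation*}
This expression is in one-to-one correspondence with the $L$ terms of $I_7$. Because $P_i\mathbf{\Delta}_iP_i \succeq 0$ on the column space of $\mathbf{U}_i$ by Property \ref{con4}, to establish $I_1 + I_5 \leq I_7$ it now suffices to prove the per-index matrix inequality
\begin{equation*}
T_i - T_{i+1} \preceq (\mathbf{\Sigma}_i^{-1} - \mathbf{B}_i^*) - \mathbf{\Sigma}_i^{-1}\cov(\mathbf{Y}_{i,\gamma}\mid \mathbf{X}, M_i, Q)\mathbf{\Sigma}_i^{-1}, \qquad i=1,\ldots,L-1,
\end{equation*}
when read on the column space of $\mathbf{U}_i$, together with the boundary analogue in which $T_L$ replaces $T_i-T_{i+1}$ at $i=L$.

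The main obstacle is this last, termwise step. The plan is to use the complementary identity (Lemma \ref{lea2}) to trade the conditional Fisher informations appearing in $T_i - T_{i+1}$ for conditional MMSE covariance matrices, exploiting the fact that $\mathbf{X}_{i,\gamma}$ and $\mathbf{X}_{i+1,\gamma}$ differ only by an independent Gaussian term whose covariance differs by a contribution from $\mathbf{B}_i^*$, which will ultimately yield the multiplier $\mathbf{\Sigma}_i^{-1} - \mathbf{B}_i^*$ on the right-hand side. A Markov-chain argument based on the factorization in Theorem \ref{thm2}, combined with Lemma \ref{lea5} to convert conditional covariances of $\mathbf{X}_{i,\gamma}$ into conditional covariances of $\mathbf{Y}_{i,\gamma}$, and the data processing inequality for Fisher information (Lemma \ref{lea6}) to absorb the extra conditioning on $M_i$, should then produce the right-hand side in the desired form. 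Finally, the relation $\mathbf{B}_i^*\mathbf{V}_i = 0$ (Property \ref{con5}) is what guarantees that the computation closes on the column space of $\mathbf{U}_i$, where $\mathbf{\Sigma}_i^{-1} - \mathbf{B}_i^*$ is nondegenerate and sharp.
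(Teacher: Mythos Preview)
Your algebraic reorganization is correct: the identity
\[
P_i(\mathbf{\Delta}_{i+1}-\mathbf{\Delta}_i)P_i + \mathbf{W}_i\mathbf{W}_i^{T}\mathbf{\Delta}_{i+1}\mathbf{W}_i\mathbf{W}_i^{T} = P_{i+1}\mathbf{\Delta}_{i+1}P_{i+1} - P_i\mathbf{\Delta}_iP_i
\]
holds, and Abel summation does yield
\[
I_1+I_5 = 2\tr\{P_L\mathbf{\Delta}_LP_L\,T_L\} + 2\sum_{i=1}^{L-1}\tr\{P_i\mathbf{\Delta}_iP_i\,(T_i-T_{i+1})\}.
\]
The boundary term is fine, since the paper's inequality \eqref{comp} at $i=L-1$ gives exactly $T_L \preceq R_L$.

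The gap is in the termwise step for $i<L$. To bound $\tr\{P_i\mathbf{\Delta}_iP_i(T_i-T_{i+1})\}$ by $\tr\{P_i\mathbf{\Delta}_iP_i R_i\}$ you need an \emph{upper} bound on $T_i-T_{i+1}$, hence a \emph{lower} bound on $J(\mathbf{X}_{i+1,\gamma}\mid M_{i+1},\ldots,M_L)$. Every tool you list (the complementary identity, Lemma~\ref{lea5}, the data processing inequality of Lemma~\ref{lea6}) produces an \emph{upper} bound on Fisher information; none gives a usable lower bound of the required shape. Concretely, the identity in Appendix~\ref{app} shows that $J(\mathbf{X}\mid M_i,\ldots,M_L,Q)-J(\mathbf{X}\mid M_{i+1},\ldots,M_L,Q)$ equals the desired right-hand side \emph{with $Q$ in the conditioning}, but removing $Q$ via data processing decreases both terms, so the difference can move in either direction. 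Your plan to ``absorb the extra conditioning on $M_i$'' via Lemma~\ref{lea6} therefore does not close: the same lemma applied to the $T_{i+1}$ term points the wrong way.

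The paper avoids this issue by reversing the order of operations. It first uses \eqref{comp} to bound each $T_{i+1}$ from above by the \emph{cumulative} sum $\sum_{j\ge i+1}R_j$ (and $T_1$ by $\sum_{j\ge 1}R_j$). This substitution is legitimate because all the multipliers $P_i(\mathbf{\Delta}_{i+1}-\mathbf{\Delta}_i)P_i=\tfrac{\mu_i-\mu_{i+1}}{2}P_i\mathbf{C}_{i+1}P_i$, $\mathbf{W}_i\mathbf{W}_i^{T}\mathbf{\Delta}_{i+1}\mathbf{W}_i\mathbf{W}_i^{T}$, and $P_1\mathbf{\Delta}_1P_1$ are positive semidefinite. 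Only \emph{after} this substitution does one apply the same telescoping identity and Abel summation, which then collapses $\sum_i (A_{i+1}-A_i)\sum_{j>i}R_j + A_1\sum_j R_j$ exactly to $\sum_i A_iR_i = I_7/2$. So the correct route is: bound first (by a sum), telescope second---not telescope first and then attempt a per-index increment bound.
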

\smallskip

\begin{proof}
It follows from Proposition \ref{con4} in Theorem \ref{spectral} that
\begin{equation}\label{comp0}
\mathbf{W}_{i}^{T} \mathbf{\Delta}_{i+1} \mathbf{W}_{i} \succ 0,  \qquad i=1, \ldots,L-1.
\end{equation}
On the other hand,
\begin{align} \label{comp}
&J (\mathbf{X}_{i+1,\gamma}| M_{i+1}, \ldots, M_{L}) - \mathbf{C}_{i+1}^{-1}  \nonumber \\
\overset{(a)}\preceq & (1-\gamma) J(\mathbf{X}|M_{i+1}, \ldots, M_{L}) - (1-\gamma)\mathbf{C}_{i+1}^{-1}  \nonumber \\
\overset{(b)} \preceq  & (1-\gamma) \left(\sum_{j=i+1}^{L}\left(\mathbf{\Sigma}_{j}^{-1} - \mathbf{\Sigma}_{j}^{-1} \cov( \mathbf{Y}_{j}| \mathbf{X}, M_{j},Q)\mathbf{\Sigma}_{j}^{-1}\right)\right) \nonumber \\
& \quad - (1-\gamma)\left(\mathbf{K}^{-1}+\mathbf{C}_{i+1}^{-1}\right)  \nonumber \\
\overset{(c)} = &   \sum_{j=i+1}^{L} ( \mathbf{\Sigma}_{j}^{-1} -  \mathbf{B}_{j}^{*} ) - \mathbf{\Sigma}_{j}^{-1} \cov (\mathbf{Y}_{j,\gamma}|\mathbf{X}, M_{j},Q) \mathbf{\Sigma}_{j}^{-1},
\end{align}
where (a) follows from the definition of random vector $\{\mathbf{X}_{{i+1},\gamma}\}$ and Fisher information inequality in Lemma \ref{lea4},
 (b)  can proved by using the argument in \cite[Section 6.2]{v4} (for completeness, we rewrite the proof in \cite{v4} in Appendix \ref{app}), and (c) is due to the definition of  random vector $\{\mathbf{Y}_{j,\gamma}\}$.
\par

Finally, we obtain the bound \eqref{1st} by substituting \eqref{comp0} \eqref{comp} into $I_{5}$ and \eqref{comp} into $I_{1}$ then simplifying it using the relationship ( Proposition \ref{con3} in Theorem \ref{spectral} ):
\begin{align}
\mathbf{U}_{i+1}\mathbf{U}_{i+1}^{T} \mathbf{\Delta}_{i+1}  \mathbf{U}_{i}\mathbf{U}_{i}^{T}\qquad\qquad\qquad\qquad\qquad\qquad\qquad \nonumber \\ =\mathbf{U}_{i}\mathbf{U}_{i}^{T} \mathbf{\Delta}_{i+1}  \mathbf{U}_{i}\mathbf{U}_{i}^{T}+  \mathbf{W}_{i}\mathbf{W}_{i}^{T} \mathbf{\Delta}_{i+1}  \mathbf{W}_{i}\mathbf{W}_{i}^{T}, \nonumber \\
i=1,\ldots,L. \nonumber
\end{align}
\end{proof}
\smallskip

Substituting the upper bound \eqref{1st} into \eqref{step2} gives
\begin{equation} \label{step3}
2(1-\gamma)g'(\gamma) \leq I_{6}+ I_{7} + I_{3} + I_{4}.
\end{equation}
We now upper bound each term separately.

\begin{lemma}
For the first term $I_{6}$ in \eqref{step3},
\begin{equation} \label{4th}
I_{6} \leq 0.
\end{equation}
\end{lemma}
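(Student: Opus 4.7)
The plan is to reduce $I_6\le 0$ to a positive-semidefiniteness claim in the $\mathbf{V}_L$ subspace, and then to establish that claim by combining the complementary identity with a structural identity that is forced on $\mathbf{V}_L$ by the KKT conditions.

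First I would rewrite $I_6$ using the cyclicity of trace together with $\mathbf{V}_L^T\mathbf{V}_L=\mathbf{I}$:
\[
I_6 = 2\,\tr\bigl\{(\mathbf{V}_L^T\mathbf{\Delta}_L\mathbf{V}_L)\cdot\bigl(\mathbf{V}_L^T(J(\mathbf{X}_{L,\gamma}|M_L)-\mathbf{C}_L^{-1})\mathbf{V}_L\bigr)\bigr\}.
\]
Since $\mathbf{V}_L^T\mathbf{\Delta}_L\mathbf{V}_L\preceq 0$ by Proposition \ref{con4} of Theorem \ref{spectral}, the bound $I_6\le 0$ will follow as soon as I can show
\[
\mathbf{V}_L^T\bigl(J(\mathbf{X}_{L,\gamma}|M_L)-\mathbf{C}_L^{-1}\bigr)\mathbf{V}_L \succeq 0.
\]

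To attack this I would apply the complementary identity (Lemma \ref{lea2}) with ``noise'' $\sqrt{\gamma}\,\mathbf{X}_{\{L\}}^G$ of covariance $\gamma\mathbf{C}_L$ added to $\sqrt{1-\gamma}\mathbf{X}$ (the Dembo transform), yielding
\[
J(\mathbf{X}_{L,\gamma}|M_L)-\mathbf{C}_L^{-1} = \tfrac{1-\gamma}{\gamma^{2}}\,\mathbf{C}_L^{-1}\bigl(\gamma\mathbf{C}_L-\cov(\mathbf{X}|\mathbf{X}_{L,\gamma},M_L)\bigr)\mathbf{C}_L^{-1},
\]
so it suffices to prove the conditional-covariance bound
\[
\mathbf{V}_L^T\cov(\mathbf{X}|\mathbf{X}_{L,\gamma},M_L)\mathbf{V}_L \preceq \gamma\,\mathbf{V}_L^T\mathbf{C}_L\mathbf{V}_L.
\]

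The main obstacle is the structural identity that $\mathbf{K}$ and $\mathbf{C}_L$ share the same restriction to the subspace spanned by the columns of $\mathbf{V}_L$. I would derive this from the orthogonality $\mathbf{B}_L^{*}\mathbf{V}_L=0$ (Proposition \ref{con5}), which gives $\mathbf{C}_L^{-1}\mathbf{V}_L=(\mathbf{K}^{-1}+\mathbf{B}_L^{*})\mathbf{V}_L=\mathbf{K}^{-1}\mathbf{V}_L$, combined with the invariance $\mathbf{C}_L\mathbf{V}_L=\mathbf{V}_L\mathbf{V}_L^T\mathbf{C}_L\mathbf{V}_L$ from Proposition \ref{con1}. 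Writing $D:=\mathbf{V}_L^T\mathbf{C}_L\mathbf{V}_L$, this forces $\mathbf{K}\mathbf{V}_L=\mathbf{C}_L\mathbf{V}_L=\mathbf{V}_L D$, and hence also $((1-\gamma)\mathbf{K}+\gamma\mathbf{C}_L)\mathbf{V}_L=\mathbf{V}_L D$.

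Finally, since $(\mathbf{X},\mathbf{X}_{L,\gamma})$ is jointly Gaussian and is independent of $M_L$, a direct Gaussian MMSE computation combined with this invariance yields
\[
\mathbf{V}_L^T\cov(\mathbf{X}|\mathbf{X}_{L,\gamma})\mathbf{V}_L = \gamma D = \gamma\,\mathbf{V}_L^T\mathbf{C}_L\mathbf{V}_L.
\]
The tower property of conditional covariance (additional conditioning reduces MMSE in the positive-semidefinite order) then delivers $\cov(\mathbf{X}|\mathbf{X}_{L,\gamma},M_L)\preceq\cov(\mathbf{X}|\mathbf{X}_{L,\gamma})$, which supplies the required bound and closes the argument.
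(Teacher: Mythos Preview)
Your argument is correct, with one caveat: the parenthetical remark that $(\mathbf{X},\mathbf{X}_{L,\gamma})$ ``is independent of $M_L$'' is false---$M_L$ is a function of $\mathbf{Y}_L=\mathbf{X}+\mathbf{N}_L$, hence correlated with $\mathbf{X}$---but you never actually use this claim; the tower property $\cov(\mathbf{X}\mid\mathbf{X}_{L,\gamma},M_L)\preceq\cov(\mathbf{X}\mid\mathbf{X}_{L,\gamma})$ holds unconditionally, and the Gaussian MMSE computation of $\cov(\mathbf{X}\mid\mathbf{X}_{L,\gamma})$ does not involve $M_L$ at all. So the slip is harmless.

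Your route is essentially the dual of the paper's. The paper stays on the Fisher side throughout: it invokes the Fisher data-processing inequality (Lemma~\ref{lea6}) to get $J(\mathbf{X}_{L,\gamma}\mid M_L)\succeq J(\mathbf{X}_{L,\gamma})$, computes $J(\mathbf{X}_{L,\gamma})=\bigl((1-\gamma)\mathbf{K}+\gamma\mathbf{C}_L\bigr)^{-1}$ directly from Gaussianity of $\mathbf{X}_{L,\gamma}$, and then uses $\mathbf{B}_L^{*}\mathbf{V}_L=0$ to show the $\mathbf{V}_L$-projection equals $\mathbf{V}_L^T\mathbf{C}_L^{-1}\mathbf{V}_L$. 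You instead pass to the MMSE side via the complementary identity, replace the Fisher DPI by the MMSE tower property, and compute $\cov(\mathbf{X}\mid\mathbf{X}_{L,\gamma})$ on the $\mathbf{V}_L$ subspace. Both proofs rest on the same structural inputs ($\mathbf{B}_L^{*}\mathbf{V}_L=0$ and the $\mathbf{V}_L$-invariance of $\mathbf{C}_L$); the paper's version is a bit more economical since Gaussianity of $\mathbf{X}_{L,\gamma}$ makes $J(\mathbf{X}_{L,\gamma})$ immediate without the complementary-identity detour, whereas your version makes the role of the subspace identity $\mathbf{K}\mathbf{V}_L=\mathbf{C}_L\mathbf{V}_L$ more transparent.
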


\begin{proof}
By data processing inequality in Lemma \ref{lea6},
\begin{align}
& \mathbf{V}_{L}^{T}J (\mathbf{X}_{L,\gamma}| M_{L})\mathbf{V}_{L} -\mathbf{V}_{L}^{T} ( \mathbf{K}^{-1}+ \mathbf{B}_{L}^{*})\mathbf{V}_{L} \nonumber \\
\succeq & \mathbf{V}_{L}^{T}J (\mathbf{X}_{L,\gamma})\mathbf{V}_{L} -\mathbf{V}_{L}^{T} ( \mathbf{K}^{-1}+ \mathbf{B}_{L}^{*})\mathbf{V}_{L} \nonumber \\
= & \mathbf{V}_{L}^{T} J (\sqrt{1-\gamma} \mathbf{X} + \sqrt{\gamma} \mathbf{X}_{L}^{G})\mathbf{V}_{L} -\mathbf{V}_{L}^{T} ( \mathbf{K}^{-1}+ \mathbf{B}_{L}^{*})\mathbf{V}_{L}  \nonumber \\
= & \mathbf{V}_{L}^{T} ((1-\gamma) \mathbf{K} + \gamma ( \mathbf{K}^{-1}+ \mathbf{B}_{L}^{*})^{-1})^{-1}\mathbf{V}_{L} \nonumber \\
& \quad -\mathbf{V}_{L}^{T} ( \mathbf{K}^{-1}+ \mathbf{B}_{L}^{*})\mathbf{V}_{L} \nonumber \\
= & \mathbf{V}_{L}^{T}\mathbf{K}^{-1} (\mathbf{K}^{-1} + (1-\gamma) \mathbf{B}_{L}^{*})^{-1} (\mathbf{K}^{-1} +  \mathbf{B}_{L}^{*})\mathbf{V}_{L} \nonumber \\
& \quad -\mathbf{V}_{L}^{T} ( \mathbf{K}^{-1}+ \mathbf{B}_{L}^{*})\mathbf{V}_{L} \nonumber \\
=&  0,
\end{align}
where the last step comes from  $\mathbf{B}_{L}^{*} \mathbf{V}_{L} = 0$ in Proposition \ref{con5} in Theorem \ref{spectral}.\par
On the other hand, by Proposition \ref{con4} in Theorem \ref{spectral}, $\mathbf{V}_{L} ^{T} \mathbf{\Delta}_{L} \mathbf{V}_{L}  \preceq 0$, we see that
\begin{align}
&  \tr \Big\{ \mathbf{V}_{L}\mathbf{V}_{L}^{T}\mathbf{\Delta}_{L} \mathbf{V}_{L}\mathbf{V}_{L}^{T} \Big( J (\mathbf{X}_{L,\gamma}| M_{L})-( \mathbf{K}^{-1}+ \mathbf{B}_{L}^{*}) \Big)\Big\}\nonumber \\
\leq & \tr \Big\{ \mathbf{V}_{L}^{T}\mathbf{\Delta}_{L} \mathbf{V}_{L} \cdot \mathbf{0} \Big\} =0. \nonumber
\end{align}
\end{proof}
\smallskip

\begin{lemma}
For the second term $I_{7}$  and the third term $I_{3}$ in \eqref{step3},
\begin{equation}
I_{7}+I_{3} \leq 0.
\end{equation}
\end{lemma}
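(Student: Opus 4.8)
The plan is to collapse $I_{7}+I_{3}$ into a single sum over $i$ and then to exploit complementary slackness. First I would substitute the definition $\mathbf{\Delta}_{i}=\frac{\mu_{i}}{2}(\mathbf{\Sigma}_{i}^{-1}-\mathbf{B}_{i}^{*})^{-1}-\mathbf{\Psi}_{i}$ from \eqref{eqn:SD-1} into the expression \eqref{b_1st} for $I_{7}$. Conjugating by $\mathbf{U}_{i}\mathbf{U}_{i}^{T}$ shows that the $(\mathbf{\Sigma}_{i}^{-1}-\mathbf{B}_{i}^{*})^{-1}$ part of $2\mathbf{U}_{i}\mathbf{U}_{i}^{T}\mathbf{\Delta}_{i}\mathbf{U}_{i}\mathbf{U}_{i}^{T}$ equals $\mu_{i}\mathbf{U}_{i}\mathbf{U}_{i}^{T}(\mathbf{\Sigma}_{i}^{-1}-\mathbf{B}_{i}^{*})^{-1}\mathbf{U}_{i}\mathbf{U}_{i}^{T}$, which is exactly the coefficient appearing in \eqref{sub:3} for $I_{3}$; these contributions cancel and one is left with
\begin{equation*}
I_{7}+I_{3}=-2\sum_{i=1}^{L}\tr\{\mathbf{A}_{i}\mathbf{M}_{i}\},
\end{equation*}
where $\mathbf{A}_{i}:=\mathbf{U}_{i}\mathbf{U}_{i}^{T}\mathbf{\Psi}_{i}\mathbf{U}_{i}\mathbf{U}_{i}^{T}$ and $\mathbf{M}_{i}:=(\mathbf{\Sigma}_{i}^{-1}-\mathbf{B}_{i}^{*})-\mathbf{\Sigma}_{i}^{-1}\cov(\mathbf{Y}_{i,\gamma}\mid\mathbf{X},M_{i},Q)\mathbf{\Sigma}_{i}^{-1}$. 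Since $\mathbf{\Psi}_{i}\succeq 0$ by \eqref{KKT5}, we have $\mathbf{A}_{i}\succeq 0$, so it suffices to prove $\tr\{\mathbf{A}_{i}\mathbf{M}_{i}\}\geq 0$ for each $i$.

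Next I would bound $\mathbf{M}_{i}$ from below in the positive semidefinite order. Conditioned on $\mathbf{X}$ one has $\mathbf{Y}_{i,\gamma}-\sqrt{1-\gamma}\,\mathbf{X}=\sqrt{1-\gamma}\,\mathbf{N}_{i}+\sqrt{\gamma}\,\mathbf{N}_{i}^{G}$, and since the Gaussian vector $\mathbf{N}_{i}^{G}\sim\mathcal{N}(0,\mathbf{\Sigma}_{i}-\mathbf{\Sigma}_{i}\mathbf{B}_{i}^{*}\mathbf{\Sigma}_{i})$ is independent of $(\mathbf{X},\mathbf{Y}_{i},M_{i},Q)$ the cross term in the conditional error covariance vanishes, giving
\begin{equation*}
\cov(\mathbf{Y}_{i,\gamma}\mid\mathbf{X},M_{i},Q)=(1-\gamma)\cov(\mathbf{N}_{i}\mid\mathbf{X},M_{i},Q)+\gamma(\mathbf{\Sigma}_{i}-\mathbf{\Sigma}_{i}\mathbf{B}_{i}^{*}\mathbf{\Sigma}_{i}).
\end{equation*}
Using that conditioning does not increase the error covariance, $\cov(\mathbf{N}_{i}\mid\mathbf{X},M_{i},Q)\preceq\cov(\mathbf{N}_{i})=\mathbf{\Sigma}_{i}$, one obtains $\cov(\mathbf{Y}_{i,\gamma}\mid\mathbf{X},M_{i},Q)\preceq\mathbf{\Sigma}_{i}-\gamma\mathbf{\Sigma}_{i}\mathbf{B}_{i}^{*}\mathbf{\Sigma}_{i}$, hence $\mathbf{\Sigma}_{i}^{-1}\cov(\mathbf{Y}_{i,\gamma}\mid\mathbf{X},M_{i},Q)\mathbf{\Sigma}_{i}^{-1}\preceq\mathbf{\Sigma}_{i}^{-1}-\gamma\mathbf{B}_{i}^{*}$, and therefore $\mathbf{M}_{i}+(1-\gamma)\mathbf{B}_{i}^{*}\succeq 0$.

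Finally I would decompose $\tr\{\mathbf{A}_{i}\mathbf{M}_{i}\}=\tr\{\mathbf{A}_{i}(\mathbf{M}_{i}+(1-\gamma)\mathbf{B}_{i}^{*})\}-(1-\gamma)\tr\{\mathbf{A}_{i}\mathbf{B}_{i}^{*}\}$. The first term is nonnegative, being the trace of the product of the two positive semidefinite matrices $\mathbf{A}_{i}$ and $\mathbf{M}_{i}+(1-\gamma)\mathbf{B}_{i}^{*}$. For the second term, Proposition \ref{con5} of Theorem \ref{spectral} gives $\mathbf{B}_{i}^{*}\mathbf{V}_{i}=0$; since $\mathbf{B}_{i}^{*}$ is symmetric and the columns of $\mathbf{U}_{i}$ and $\mathbf{V}_{i}$ together form the orthonormal eigenbasis of $\mathbf{C}_{i}$ (so that $\mathbf{U}_{i}\mathbf{U}_{i}^{T}+\mathbf{V}_{i}\mathbf{V}_{i}^{T}=\mathbf{I}$), this forces $\mathbf{B}_{i}^{*}=\mathbf{U}_{i}\mathbf{U}_{i}^{T}\mathbf{B}_{i}^{*}\mathbf{U}_{i}\mathbf{U}_{i}^{T}$. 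The cyclic invariance of the trace together with $\mathbf{U}_{i}^{T}\mathbf{U}_{i}=\mathbf{I}$ then yields $\tr\{\mathbf{A}_{i}\mathbf{B}_{i}^{*}\}=\tr\{\mathbf{\Psi}_{i}\mathbf{B}_{i}^{*}\}$, which vanishes by the complementary slackness condition $\mathbf{B}_{i}^{*}\mathbf{\Psi}_{i}=0$ in \eqref{KKT3}. Hence $\tr\{\mathbf{A}_{i}\mathbf{M}_{i}\}\geq 0$ for every $i$, and $I_{7}+I_{3}\leq 0$ follows.

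The crux of the argument is the last step: the positive semidefinite lower bound on $\mathbf{M}_{i}$ overshoots by exactly $-(1-\gamma)\mathbf{B}_{i}^{*}$, whose contribution to the trace is a priori of indefinite sign, and it is precisely complementary slackness --- combined with the fact that $\mathbf{B}_{i}^{*}$ is supported on the column space of $\mathbf{U}_{i}$ (Proposition \ref{con5}) --- that annihilates this surplus once it is traced against $\mathbf{\Psi}_{i}$. Everything else amounts to bookkeeping with the KKT identities and the eigenspace decomposition of Theorem \ref{spectral}.
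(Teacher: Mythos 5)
Your proposal is correct and follows essentially the same route as the paper: cancel the $\frac{\mu_i}{2}(\mathbf{\Sigma}_i^{-1}-\mathbf{B}_i^*)^{-1}$ part of $\mathbf{\Delta}_i$ against $I_3$ to reduce everything to traces against $\mathbf{U}_i\mathbf{U}_i^T\mathbf{\Psi}_i\mathbf{U}_i\mathbf{U}_i^T$, bound $\cov(\mathbf{Y}_{i,\gamma}|\mathbf{X},M_i,Q)\preceq\mathbf{\Sigma}_i-\gamma\mathbf{\Sigma}_i\mathbf{B}_i^*\mathbf{\Sigma}_i$, and kill the residual $(1-\gamma)\tr\{\mathbf{A}_i\mathbf{B}_i^*\}$ term via $\mathbf{B}_i^*\mathbf{V}_i=0$ and the complementary slackness $\mathbf{B}_i^*\mathbf{\Psi}_i=0$. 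Your write-up is in fact a slightly cleaner bookkeeping of the paper's own chain of (in)equalities \eqref{bd1}--\eqref{bd3}, making explicit the PSD-trace inequality that the paper uses implicitly.
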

\smallskip

\begin{proof}
By the definition of $\mathbf{\Delta}_{i}$:
$$\mathbf{\Delta}_{i} \triangleq  \frac{\mu_{i}}{2}( \mathbf{\Sigma}_{i}^{-1} -  \mathbf{B}_{i}^{*})^{-1}-\mathbf{\Psi}_{i},$$
we can write $I_{7}+I_{3}$ in the following form:
\begin{align} \label{bd1}
& I_{7} + I_{3} \nonumber \\
= & -\sum_{i=1}^{L} \tr \Big\{  2 \mathbf{U}_{i}\mathbf{U}_{i}^{T}\mathbf{\Psi}_{i} \mathbf{U}_{i}\mathbf{U}_{i}^{T} \nonumber \\
&\qquad \cdot \Big(  ( \mathbf{\Sigma}_{i}^{-1} -  \mathbf{B}_{i}^{*} ) - \mathbf{\Sigma}_{i}^{-1} \cov (\mathbf{Y}_{i,\gamma}|\mathbf{X}, M_{i},Q) \mathbf{\Sigma}_{i}^{-1} \Big)\Big\}
\end{align}
Considering that
\begin{align} \label{bd2}
 & \cov (\mathbf{Y}_{i,\gamma}|\mathbf{X}, M_{i},Q) \nonumber \\
 \overset{(a)}= &  (1-\gamma)\cov (\mathbf{Y}_{i}|\mathbf{X}, M_{i},Q)  + \gamma ( \mathbf{\Sigma}_{i}-  \mathbf{\Sigma}_{i} \mathbf{B}_{i}^{*}  \mathbf{\Sigma}_{i}) \nonumber \\
 \preceq & (1-\gamma)\cov (\mathbf{Y}_{i}|\mathbf{X}) + \gamma ( \mathbf{\Sigma}_{i}-  \mathbf{\Sigma}_{i} \mathbf{B}_{i}^{*}  \mathbf{\Sigma}_{i}) \nonumber \\
 = &   \mathbf{\Sigma}_{i}- \gamma \mathbf{\Sigma}_{i} \mathbf{B}_{i}^{*}  \mathbf{\Sigma}_{i},
\end{align}
in which  (a) is from the definition of random vector $\{\mathbf{Y}_{i,\gamma}\}$ in Section IV, we have
\begin{align} \label{bd3}
&I_{7} + I_{3} \nonumber \\
\leq & - \tr \Big\{  2 \mathbf{U}_{i}\mathbf{U}_{i}^{T}\mathbf{\Psi}_{i} \mathbf{U}_{i}\mathbf{U}_{i}^{T} \Big(  ( \mathbf{\Sigma}_{i}^{-1} -  \mathbf{B}_{i}^{*} ) - ( \mathbf{\Sigma}_{i}^{-1} -  \gamma \mathbf{B}_{i}^{*} ) \Big)\Big\} \nonumber \\
= & \tr \Big\{  2 \mathbf{U}_{i}^{T}\mathbf{\Psi}_{i} \mathbf{U}_{i} \mathbf{U}_{i}^{T} (1-\gamma) \mathbf{B}_{i}^{*}\mathbf{U}_{i}  \Big\} \nonumber \\
\overset{(a)} = & 2(1-\gamma) \tr \Big\{   \mathbf{U}_{i}^{T}\mathbf{\Psi}_{i} \mathbf{U}_{i} \mathbf{U}_{i}^{T}  \mathbf{B}_{i}^{*}\mathbf{U}_{i} +\mathbf{V}_{i}^{T}\mathbf{\Psi}_{i} \mathbf{V}_{i} \mathbf{V}_{i}^{T}  \mathbf{B}_{i}^{*}\mathbf{V}_{i}  \Big\} \nonumber \\
= & 2(1-\gamma)\tr \Big\{  \mathbf{\Psi}_{i} \mathbf{U}_{i} \mathbf{U}_{i}^{T}  \mathbf{B}_{i}^{*} \Big\} \nonumber \\
= & 2(1-\gamma)\tr \Big\{  \mathbf{U}_{i} \mathbf{U}_{i}^{T}  \mathbf{B}_{i}^{*}\mathbf{\Psi}_{i}  \Big\} \nonumber \\
\overset{(b)} = & 0£¬
\end{align}
where (a) is from $ \mathbf{B}_{i}\mathbf{V}_{i}=0$ of Proposition \ref{con4} in Theorem \ref{spectral}, (b) is from complementary slackness conditions in KKT conditions \eqref{KKT3}: $ \mathbf{B}_{i}^{*} \mathbf{\Psi}_{i} =0$.
\par

 \end{proof}
\smallskip

\begin{lemma}
For the last term $I_{4}$ in \eqref{step3}, we have
\begin{equation}
I_{4} \leq 0.
\end{equation}
\end{lemma}
\smallskip
\begin{proof}
Due to the spectral decomposition of $\mathbf{C}_{1}$:
$$
\mathbf{C}_1= \sum_{n=1}^{m} d_{n} \bm{e}_{n}\bm{ e} _{n}^{T},$$
we see that
\begin{align}\label{sca}
  & - I_{4}/2\lambda \nonumber \\
= &\sum_{n=1}^{m} d_{n}^{2} \tr \{\bm{e}_{n}^{T} J (\mathbf{X}_{1,\gamma}| M_{1}, \ldots, M_{L})\bm{e}_{n} - d_{n}^{-1}\} \nonumber \\
\geq &\sum_{n=1}^{n_{1}} d_{n}^{2}  J (\bm{e}_{n}^{T}\mathbf{X}_{1,\gamma}| M_{1}, \ldots, M_{L})  - \sum_{n=1}^{m} d_{n},
\end{align}
where the inequality in \eqref{sca} is from \cite[Corollary 1-b]{lemma2}:
 $J(\mathbf{\Lambda}\mathbf{N}) \preceq \mathbf{\Lambda}^{T} J(N)\mathbf{\Lambda} $ for any column orthogonal matrix $\mathbf{\Lambda}$.
\par
Let
$$c_{n} \triangleq \cov (\bm{e}_{n}^{T}\mathbf{X} |M_{1}, M_{2}, \ldots, M_{L} ).$$ By the definition of $\{\mathbf{X}_{i,\gamma}\}$ and the Cram\'{e}r--Rao lower bound
in Lemma \ref{lea1},
\begin{align}
&J (\bm{e}_{n}^{T}\mathbf{X}_{i,\gamma}|  M_{1}, \ldots, M_{L})^{-1}
\leq \cov(\bm{e}_{n}^{T}\mathbf{X}_{i,\gamma}|  M_{1}, \ldots, M_{L}) \nonumber \\
= &(1-\gamma)c_{n} + \gamma d_{n} \nonumber
\end{align}
To show that \eqref{sca} is lower-bounded by 0 is equivalent to show:
\begin{equation}\label{sca4}
\sum_{n=1}^{m} d_{n} \frac{d_{n}}{(1-\gamma)c_{n} + \gamma d_{n}} \geq \sum_{n=1}^{m} d_{n}.
\end{equation}
According to Corollary \ref{lamd}, we have
$$\tr \left( \mathbf{C}_{1} \right) = \sum_{n=1}^{m} d_{n} = d.$$
Now consider the trace constraint
\begin{align}
&\tr\left\{ \cov (\mathbf{X}|  M_{1}, M_{2}, \ldots, M_{L})\right\} \nonumber \\
= &\tr\left\{ \cov ((\bm{e}_{1}^{T},\bm{e}_{2}^{T},\ldots, \bm{e}_{m}^{T})\mathbf{X}|  M_{1}, M_{2}, \ldots, M_{L})\right\} \nonumber \\
= & \sum_{n=1}^{m}  \cov (\bm{e}_{n}^{T}\mathbf{X}|  M_{1}, M_{2}, \ldots, M_{L}) \nonumber \\
= & \sum_{n=1}^{m}c_{n} \leq d. \nonumber
\end{align}
Since $f(x)=x^{-1}$ is convex, we have $\sum_{n=1}^{m} \alpha_{n} f(x_{n}) \geq f (\sum_{n=1}^{m}\alpha_{n} x_{n})$, where
$\sum_{n=1}^{m}\alpha_{n}=1, \alpha_{n} \geq 0.$

Let
$$\alpha_{n} = \frac{d_{n}}{d}, \qquad x_{n} = \frac{(1-\gamma)c_{n} + \gamma d_{n}}{d_{n}}.$$
It can be seen that
\begin{align}
& \sum_{n=1}^{m} \frac{d_{n}}{d} \frac{d_{n}}{(1-\gamma)c_{n} + \gamma d_{n}}
\geq     \left( \sum_{n=1}^{m}\frac{d_{n}}{d}  \frac{(1-\gamma)c_{n} + \gamma d_{n}}{d_{n}} \right)^{-1} \nonumber\\
= &\frac{d}{(1-\gamma)\sum_{n=1}^{m}c_{n}+\gamma\sum_{n=1}^{m}d_{n} } \geq 1,
\end{align}
which implies \eqref{sca4}. Thus  $I_{4}$ indeed upper-bounded by $0$.
\par
\end{proof}
\smallskip
This completes the proof of Theorem \ref{thm:33} as well as the extremal inequality in Theorem \ref{thm2}.

\subsection{Rate Distortion Region}
Now we proceed to prove Theorem \ref{thm1}, \emph{i.e.} $R(d) \geq R^{BT}(d)$. To this end we need the Wagner-Anantharam single-letter outer bound  \cite {outer} on $\mathcal{R}(d)$.
\begin{theorem} \cite[Theorem 1]{outer}
The rate region $\mathcal{R}(d)$ is contained in the union of rate tuples $(R_{1}, R_{2}, \ldots, R_{L})$ such that \begin{align}
&\sum_{i=j}^{L} R_{j} \nonumber \\
 \geq & I(\mathbf{X} ;M_{1}, \ldots,M_{i} |M_{i+1}, \ldots, M_{L}) + \sum_{i=j}^{L} I(\mathbf{Y}_{j}; M_{j}|\mathbf{X},Q) \nonumber
\end{align}
where the union is over all joint distributions $p(\mathbf{x}, \mathbf{y}_{1}, \ldots, \mathbf{y}_{L}, m_{1}, \ldots, m_{L},q)$, which can be factorized as follows:
\begin{align}
p(\mathbf{x}, \mathbf{y}_{1}, \ldots, \mathbf{y}_{L}, m_{1}, \ldots, m_{L},q) \nonumber \\
=p(\mathbf{x}) p(q) \prod_{i=1}^{L}p(\mathbf{y}_{i}|\mathbf{x})p(m_{i}|\mathbf{y}_{i},q), \nonumber
\end{align}
and $\tr\{\cov(\mathbf{X}|M_1,\ldots,M_L)\}\leq d$.
\end{theorem}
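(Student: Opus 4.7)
The plan is to derive the outer bound by the standard single-letter converse for distributed source coding: convert the blocklength rate inequalities into per-letter mutual-information bounds using a time-sharing auxiliary, while exploiting the conditional-independence structure $\mathbf{Y}_1^n,\ldots,\mathbf{Y}_L^n \mid \mathbf{X}^n$ that arises from the fact that the noises $\mathbf{N}_i^n$ are mutually independent and independent of $\mathbf{X}^n$. For each $j \in \{1,\ldots,L\}$, start from
\begin{equation*}
n\sum_{i=j}^{L} R_i \ \geq\ H(C_j,\ldots,C_L \mid C_1,\ldots,C_{j-1}),
\end{equation*}
and split this conditional entropy via the chain rule into the useful part $I(\mathbf{X}^n; C_j,\ldots,C_L \mid C_1,\ldots,C_{j-1})$ plus the residual entropy $H(C_j,\ldots,C_L \mid \mathbf{X}^n, C_1,\ldots,C_{j-1})$.

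Next, since $\mathbf{Y}_1^n,\ldots,\mathbf{Y}_L^n$ are conditionally independent given $\mathbf{X}^n$ and each $C_i$ is a function of $\mathbf{Y}_i^n$ alone, the $C_i$'s inherit conditional independence given $\mathbf{X}^n$. This lower-bounds the residual term by $\sum_{i=j}^{L} H(C_i \mid \mathbf{X}^n) \geq \sum_{i=j}^{L} I(\mathbf{Y}_i^n; C_i \mid \mathbf{X}^n)$, the blockwise analogue of $\sum_{i=j}^{L} I(\mathbf{Y}_i;M_i \mid \mathbf{X},Q)$. To single-letterize, introduce a time-sharing variable $Q$ uniform on $\{1,\ldots,n\}$ and set $M_i \triangleq C_i$ together with whatever past samples are needed to retain the Markov structure, all evaluated at time $t = Q$. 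The i.i.d.\ source and memoryless channels let each block-level mutual information be rewritten as $n$ times the corresponding single-letter expression averaged over $Q$, and the Markov chain $M_i \to (\mathbf{Y}_i,Q) \to \mathbf{X}$ holds by construction since $M_i$ is determined by $\mathbf{Y}_i^n$ and $Q$. The trace distortion constraint descends to $\tr\{\cov(\mathbf{X} \mid M_1,\ldots,M_L)\} \leq d$ via Jensen-type averaging over $Q$ and the elementary fact that the MMSE is bounded below by the conditional covariance under the produced joint law.

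The main obstacle lies in the consistent choice of auxiliaries $M_i$ that simultaneously (i) preserve the product factorization $p(\mathbf{x})p(q)\prod_i p(\mathbf{y}_i \mid \mathbf{x})p(m_i \mid \mathbf{y}_i, q)$, (ii) yield the subset rate inequalities uniformly in $j$, and (iii) are compatible with the trace distortion bound. This is not automatic because the straightforward choice $M_i = (C_i, C_{<i})$ would break the required Markov structure on index $i$; Wagner and Anantharam resolve this by an inductive construction that walks through the subsets $\{j,\ldots,L\}$ in reverse, enlarging the auxiliaries just enough to make each subset inequality tight while maintaining the factorization. Once this construction is in place, the standard memoryless-source manipulations deliver the bound.
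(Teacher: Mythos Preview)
The paper does not prove this theorem; it is imported from Wagner and Anantharam \cite{outer} and invoked as a black box in Section~V.C. There is therefore nothing in the paper to compare your proposal against.

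Your outline is the standard single-letterization argument behind the Wagner--Anantharam bound and is sound in structure: the block-to-single-letter reduction via a uniform time index $Q$, the conditional-independence splitting of $H(C_A \mid \mathbf{X}^n)$, and the identification of the key difficulty---a single choice of auxiliaries $M_i$ that simultaneously satisfies the factorization $p(m_i\mid \mathbf{y}_i,q)$, all subset inequalities, and the distortion constraint---are all correct. Two minor notes: your residual-entropy step $H(C_i \mid \mathbf{X}^n) \geq I(\mathbf{Y}_i^n; C_i \mid \mathbf{X}^n)$ is actually an equality, since $C_i$ is a deterministic function of $\mathbf{Y}_i^n$; and the conditioning set $C_1,\ldots,C_{j-1}$ in your opening display is the complement of what the paper's subsequent usage (once its own index typos are repaired) requires, though this is immaterial by relabeling. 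The substantive work---the careful construction of the $M_i$ that you rightly defer to \cite{outer}---is indeed where the proof lives, and your sketch accurately locates it.
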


According to this single-letter outer bound, we have
\begin{align}
 & \sum_{i=1}^{L} \mu_{i} R_{i} \nonumber \\
\geq &  \sum_{i=1}^{L-1}(\mu_{i}-\mu_{i+1}) I(\mathbf{X};M_{1}, \ldots,M_{i} |M_{i+1}, \ldots, M_{L} ) \nonumber \\
& + \mu_{L} I(\mathbf{X};M_{1}, \ldots, M_{L})+ \sum_{i=1}^{L} I(\mathbf{Y}_{i}; M_{i}|\mathbf{X},Q) \nonumber \\
=&\sum_{i=1}^{L-1} (\mu_{i}- \mu_{i+1}) h(\mathbf{X}| M_{i+1}, \ldots, M_{L}) \nonumber \\
& - \mu_{1} h(\mathbf{X}| M_{1}, \ldots, M_{L}) - \sum_{i=1}^{L} \mu_{i} h(\mathbf{Y}_{i} | \mathbf{X}, M_{i},Q) \label{exter} \\
& + \mu_{L}h(\mathbf{\mathbf{X}}) + \sum_{i=1}^{L} h(\mathbf{Y}_{i} | \mathbf{X}). \nonumber
\end{align}
Notice that the term \eqref{exter} equals the l.h.s of extremal inequality \eqref{exinq} in Theorem \ref{thm2}, so that we have
\begin{align}
& R(d)
=   \inf_{(R_{1}, \ldots, R_{L)} \in \mathcal{R}(d)} \sum_{i=1}^{L} \mu_{i} R_{i} \nonumber \\
\geq & \inf_{\tr\{ \cov(\mathbf{X}| M_{1},\ldots, M_{L}) \leq d     \}}\sum_{i=1}^{L-1} (\mu_{i}- \mu_{i+1}) h(\mathbf{X}| M_{i+1}, \ldots, M_{L}) \nonumber \\
& - \mu_{1} h(\mathbf{X}| M_{1}, \ldots, M_{L}) - \sum_{i=1}^{L} \mu_{i} h(\mathbf{Y}_{i} | \mathbf{X}, M_{i},Q) \nonumber \\
& + \mu_{L}h(\mathbf{\mathbf{X}}) + \sum_{i=1}^{L} h(\mathbf{Y}_{i} | \mathbf{X}) \nonumber \\
\geq & \sum_{i=1}^{L-1}\frac{\mu_{i} - \mu_{i+1}}{2} \log |(2\pi e)( \mathbf{K}^{-1}+ \sum_{j=i+1}^{L} \mathbf{B}_{j}^{*})^{-1}| \nonumber \\
&  - \frac{\mu_{1}}{2} \log |(2\pi e)( \mathbf{K}^{-1}+ \sum_{j=1}^{L} \mathbf{B}_{j}^{*})^{-1}| \nonumber \\
&  - \sum_{i=1}^{L}\frac{\mu_{i}}{2} \log |(2\pi e)( \mathbf{\Sigma}_{i} -  \mathbf{\Sigma}_{i} \mathbf{B}_{i}^{*} \mathbf{\Sigma}_{i} )| \nonumber \\
& +   \frac{\mu_{1}}{2} \log |(2\pi e) \mathbf{K} |  +\sum_{i=1}^{L}\frac{\mu_{i}}{2} \log |(2\pi e) \mathbf{\Sigma}_{i} | \nonumber \\
=&  R^{BT}(d). \nonumber
\end{align}
\par
This completes the proof of Theorem \ref{thm1} and establishes the tightness of Berger-Tung inner bound for the vector Gaussian CEO problem with trace distortion constraint.

\section{conclusion}
This paper provides a complete characterization of the rate region of the vector Gaussian CEO problem with the trace distortion constraint. Our proof is based on, among other things, a careful analysis of the KKT conditions for the optimization problem associated with the Berger-Tung scheme. In particular, we exploit the special structure of the KKT conditions to bound the rate region by considering the projection into different subspaces, and the inherent symmetry of the CEO problem enables us to perform the projection procedure recursively.

It should be stressed that the approach in this work does not apply directly to the setting considered in \cite{v3,v4} where a covariance constraint instead of a trace constraint is imposed. However, our work indicates that a more thorough analysis of the KKT conditions might lead to some progress towards that direction.

\appendices

\section{Proof of Lemma \ref{lea5}} \label{OP}

Note that
\begin{align*}
&\gamma^{2} \cov (\mathbf{X}|U) + (1-\gamma)^{2} \mathbf{\Sigma}  \nonumber \\
&\overset{(a)}\succeq (\gamma(\gamma\cov (\mathbf{X}|U))^{-1} + (1-\gamma)((1-\gamma)\mathbf{\Sigma})^{-1})^{-1}\\
&=(\cov (\mathbf{X}|U)^{-1} + \mathbf{\Sigma}^{-1})^{-1},
\end{align*}
which (a) is because $\mathbf{A}^{-1}$ is matrix concave in $\mathbf{A}$.  This together with the fact (see, e.g., \cite[footnote 2]{v2})
\begin{align*}
(\cov (\mathbf{X}|U)^{-1} + \mathbf{\Sigma}^{-1})^{-1}\succeq \cov (\mathbf{X}|\mathbf{X}+\mathbf{N}, U)
\end{align*}
completes the proof of Lemma \ref{lea5}.


\section{Existense of KKT Conditions for $R^{BT}(d)$} \label{KKT}
The proof is similar to those in \cite[Appendix IV]{MIMO} and \cite[Appendix B]{vector}. One can refer to  \cite[Sections 4-5]{book} for the background materials.
We first rewrite the optimization problem $R^{BT}(d)$ in a general form:
\begin{align}
\mathop\text{min}_{\bm{b}} &\quad f(\bm{b}) \nonumber \\
\text {subject to} & \quad g(\bm{b}) \leq 0, \nonumber \\
& \quad \bm{b} \in \mathcal{B} \triangleq \mathcal{B}_{1} \cap \mathcal{B}_{2} \cap \ldots \cap \mathcal{B}_{L}.
\end{align}
The vector $\bm{b} \in \mathbb{R}^{Lm^{2} \times 1}$ is constructed by concatenating the columns of $m \times m$ matricies $\mathbf{B}_{1}$ through $\mathbf{B}_{L}$; moreover,
\begin{align}
f(\bm{b}) \triangleq  & \sum_{i=1}^{L-1} \frac{\mu_{i}-\mu_{i+1}}{2} \log \frac{|\mathbf{K}^{-1}+\sum_{j=1}^{L}\mathbf{B}_{j}|}{|\mathbf{K}^{-1}+\sum_{j=i+1}^{L} \mathbf{B}_{j}|} \nonumber \\
& + \sum_{i=1}^{L} \frac{\mu_{i}}{2} \log{\frac{ | \mathbf{\Sigma}_{i}^{-1}| }{ |\mathbf{\Sigma}_{i}^{-1}-\mathbf{B}_{i}| }} + \frac{\mu_{L}}{2} \log \frac{|\mathbf{K}^{-1}+\sum_{j=1}^{L}\mathbf{B}_{j}|}{|\mathbf{K}^{-1}|}, \nonumber
\end{align}
 $$ g(\bm{b}) \triangleq \tr\{(\mathbf{K}^{-1}+ \sum_{i=1}^{L} \mathbf{B}_{i})^{-1} \}-d,$$
and
\begin{align}
\mathcal{B}_{i} \triangleq  \left\{{\text{column concatenation of } (\mathbf{B}_{1}, \mathbf{B}_{2}, \ldots, \mathbf{B}_{L}):  \mathbf{B}_{i} \succeq 0} \right\}, \nonumber \\
i=1,2,\ldots,L.\nonumber
\end{align}
\par
Since $f$ and $g$ are continuously differentiable, the Fritz-John necessary conditions  \cite[Definition 5.2.1]{book} hold: there exist $\mu ,  \lambda \geq 0$
for the local minima $\bm{b}^{*}$ such that
\begin{equation} \label{inclu}
- \left(  \mu \nabla f(\bm{b}^{*}) + \lambda\nabla g(\bm{b}^{*})      \right) \in T_{\mathcal{B}}(\bm{b}^{*})^{*},
\end{equation}
where $T_{\mathcal{B}}(\bm{b}^{*})$ is the \emph {tangent cone} of $\mathcal{B}$ at $\bm{b}^{*}$ and $ T_{\mathcal{B}}(\bm{b}^{*})^{*}$ is its \emph{polar cone}.
\par
As $\mathcal{B}_{i}, i=1,2,\ldots, L$ are nonempty convex sets such that $\ri(\bm{b}_{1}^{*})^{*} \cap \ri(\bm{b}_{2}^{*})^{*} \cap \cdots \cap  \ri(\bm{b}_{L}^{*})^{*}$
is nonempty, it follows \cite[Problem 4.23]{book} and \cite[Proposition 4.63]{book} that
$$
 T_{\mathcal{B}}(\bm{b}^{*})^{*} = T_{\mathcal{B}_{1}}(\bm{b}^{*})^{*} + T_{\mathcal{B}_{2}}(\bm{b}^{*})^{*} + \cdots + T_{\mathcal{B}_{L}}(\bm{b}^{*})^{*}.
$$
As in \cite[Section B]{vector}, it can be verified that
\begin{align}
T_{\mathcal{B}_{i}}(\bm{b}^{*})^{*} \cap \mathcal{A} \subseteq  & \left\{ \text{column concatenation of } (\mathbf{O}, \ldots, -\mathbf{\Psi}_{i},\right. \nonumber \\
 &  \left.  \ldots, \mathbf{O}): \mathbf{\Psi}_{i} \succeq 0, \tr \{\mathbf{\Psi}_{i} \mathbf{B}_{i}^{*}\} =0  \right\}
\end{align}
in which $\mathcal{A}$ is the set of vectors constructed by concatenating the columns of $L$ symmetric matrices.
\par
Since l.h.s of equation \eqref{inclu} is also in $\mathcal{A}$, to complete the proof of the existence of KKT conditions, we need to show $\mu\neq 0$. As in \cite[Appendix IV]{MIMO},
we will verify the constraint qualifications (CQ5a in \cite[Section 5.4]{book}), i.e., there exists a vector
$$\bm{d} \in T_{\mathcal{B}}(\bm{b}^{*}) = T_{\mathcal{B}_{1}}(\bm{b}^{*})\cap T_{\mathcal{B}_{2}}(\bm{b}^{*}) \cap \cdots \cap T_{\mathcal{B}_{L}}(\bm{b}^{*}),$$
such that $\nabla g(\bm{b}^{*})^{T}  \bm{d} < 0 $.
\par
Given any $\alpha > 1 $, let's define a set of $m^{2} \times 1$ vectors
 \begin{align}
 \bm{b}_{i}= vec \left(\mathbf{B}_{i}\right) \triangleq vec \left(\alpha \mathbf{B}_{i}^{*} + \frac{\alpha-1}{L} \mathbf{K}^{-1}\right),
  i=1,2,\ldots, L.
 \end{align}
Here $vec(\cdot)$ is the vectorization operator. It can be seen that $\bm{b}_{i} \in \mathcal{B}_{i} $ since $\mathbf{B}_{i} \succeq 0$.
We denote $\bm{d}_{i} = \bm{b}_{i} - (\bm{b}^{*})_{i} $, where $(\bm{b}^{*})_{i}$ denotes the $ith$ $L$-components in $\bm{b}^{*}$. By \cite[Definition 4.6.1]{book} and
\cite[Proposition 4.6.2]{book}, we have $\bm{d}_{i} \in T_{\mathcal{B}_{i}}(\bm{b}^{*})$. Now $\bm{d}$ can be constructed by
$$\bm{d} = vec (\bm{d}_{1}, \bm{d}_{2}, \ldots, \bm{d}_{L}).$$
In this way, the expression of $\nabla g(\bm{b}^{*})^{T}  \bm{d}$ can be written as
\begin{align}
&\sum_{i=1}^{L} \tr \left\{ (\mathbf{K}^{-1}+ \sum_{i=1}^{L} \mathbf{B}_{i}^{*})^{-2} \left(   \mathbf{B}_{i}^{*} -\mathbf{B}_{i}   \right)                   \right\} \nonumber \\
=&\sum_{i=1}^{L} \tr \left\{ (\mathbf{K}^{-1}+ \sum_{i=1}^{L} \mathbf{B}_{i}^{*})^{-2}   \left(  (1-\alpha)\mathbf{B}_{i}^{*} - \frac{\alpha-1}{L} \mathbf{K}^{-1}        \right)           \right\} \nonumber \\
= & (1-\alpha)\tr \left\{ (\mathbf{K}^{-1}+ \sum_{i=1}^{L} \mathbf{B}_{i}^{*})^{-1} \right\} \nonumber \\
< & \;0, \nonumber
\end{align}
where the inequality is because $1-\alpha <0 $ and $(\mathbf{K}^{-1}+ \sum_{i=1}^{L} \mathbf{B}_{i}^{*})^{-1}\succ 0$. This completes the proof of the existence of KKT conditions for the non-convex optimization problem $R^{BT}(d)$.

\section {Proof of Inequality (\textnormal{b}) in \eqref{comp}} \label{app}
We shall  show that
\begin{align}
&  J(\mathbf{X}|M_{i+1}, \ldots, M_{L})  \nonumber \\
\preceq & \mathbf{K}^{-1}+ \sum_{j=i+1}^{L}\left(\mathbf{\Sigma}_{j}^{-1} - \mathbf{\Sigma}_{j}^{-1} \cov( \mathbf{Y}_{j}| \mathbf{X}, M_{j},Q)\mathbf{\Sigma}_{j}^{-1}\right)
\end{align}
Note that
$$\mathbf{X}=\sum_{j=i+1}^{L} \mathbf{A}_{j} \mathbf{Y}_{j}+ \mathbf{Z} \triangleq \bar{\mathbf{X}}+\mathbf{Z},$$
where $\mathbf{Z}$ is a Gaussian random vector, independent of
$(\mathbf{Y}_{i+1}, \ldots, \mathbf{Y}_{L})$, with mean zero and covariance matrix $\mathbf{K}_{\mathbf{Z}} \triangleq(\mathbf{K}^{-1} + \sum_{j=i+1}^{L} \mathbf{\Sigma}_{j}^{-1})^{-1}$, and $\mathbf{A}_{j} \triangleq \mathbf{K}_{\mathbf{Z}} \mathbf{\Sigma}_{j}^{-1}$. Using the
complementary relationship between Fisher information and MSE in Lemma \ref{lea2}, we have
\begin{align}
& J(\mathbf{X}| M_{i+1}, \ldots, M_{L}) \nonumber \\
\overset{(a)}\preceq & J(\mathbf{X}| M_{i+1}, \ldots, M_{L},Q) \nonumber \\
= &J(\bar{\mathbf{X}}+\mathbf{Z}| M_{i+1}, \ldots, M_{L},Q) \nonumber \\
=& \mathbf{K}_{\mathbf{Z}}^{-1} - \mathbf{K}_{\mathbf{Z}}^{-1} \cov(\bar{\mathbf{X}}|\bar{\mathbf{X}}+\mathbf{Z}, M_{i+1}, \ldots, M_{L},Q)\mathbf{K}_{\mathbf{Z}}^{-1} \nonumber \\
\overset{(b)}= & \mathbf{K}_{\mathbf{Z}}^{-1}- \sum_{j=i+1}^{L}\mathbf{\Sigma}_{j}^{-1} \cov (\mathbf{Y}_{j} | \mathbf{X}, M_{j},Q)\mathbf{\Sigma}_{j}^{-1} \nonumber \\
=  & \mathbf{K}^{-1}+ \sum_{j=i+1}^{L}\left(\mathbf{\Sigma}_{j}^{-1} - \mathbf{\Sigma}_{j}^{-1} \cov( \mathbf{Y}_{j}| \mathbf{X}, M_{j},Q)\mathbf{\Sigma}_{j}^{-1}\right),
\end{align}
where (a) is from the data processing inequality in Lemma \ref{lea6} and (b) is due to the fact that for any $j$, the Markov chain $(\mathbf{Y}_{j},M_{j})  \leftrightarrow (\mathbf{X}, Q) \leftrightarrow (\mathbf{Y}_{\{j\}^{c}}, M_{\{j\}^{c}})$ holds.

\bibliographystyle{IEEEtran}
\bibliography{ref}

\end{document}